\definecolor{medium-blue}{rgb}{0,0,.8}
  \newcommand{\tikzmath}[2][]
     {\vcenter{\hbox{\begin{tikzpicture}[#1]#2
                     \end{tikzpicture}}}
     }
\newcommand{\nc}[2]{\newcommand{#1}{#2}}
\nc{\C}{\mathbb{C}}
\nc{\R}{\mathbb{R}}
\nc{\Q}{\mathbb{Q}}
\nc{\Z}{\mathbb{Z}}
\nc{\N}{\mathbb{N}}
\nc{\cA}{\mathcal{A}}
\nc{\cB}{\mathcal{B}}
\nc{\g}{\mathfrak{g}}
\newcommand{\DeclareMyOperator}[1]{ \expandafter\DeclareMathOperator\csname #1\endcsname{#1} }
\def\Rep{\mathrm{Rep}}
\def\Diff{\mathrm{Diff}}
\begin{document}
\title{Loop groups and diffeomorphism groups of the circle as colimits}
\author{\sc Andr\'e Henriques}
\date{\small \it University of Oxford, Mathematical Institute
\\
\sf andre.henriques@maths.ox.ac.uk}
\begin{singlespace}
\maketitle
\end{singlespace}

\newtheorem{theorem}{Theorem}
\newtheorem*{theorem*}{Theorem}
\newtheorem*{maintheorem}{Main Theorem}
\newtheorem{lemma}[theorem]{Lemma}
\newtheorem*{tech-lemma}{Technical lemma}
\newtheorem{proposition}[theorem]{Proposition}
\newtheorem{corollary}[theorem]{Corollary}
\newtheorem{definition}[theorem]{Definition}
\newtheorem{observation}[theorem]{Observation}
\newtheorem{exercise}[theorem]{Exercise}
\newtheorem{conjecture}[theorem]{Conjecture}
\newtheorem{expfact}[theorem]{Expected Fact}
\newtheorem*{conjecture*}{Conjecture}
\newtheorem{question}[theorem]{Question}
\newtheorem*{question*}{Question}

\newenvironment{maindefinition}[1][Main Definition.]{\begin{trivlist}
\item[\hskip \labelsep {\bfseries #1}]}{\end{trivlist}}

\theoremstyle{remark}
\newtheorem{remark}[theorem]{Remark}
\newtheorem*{remark*}{Remark}
\newtheorem{example}[theorem]{Example}
\newtheorem*{example*}{Example}

\abstract{In this paper, we show that loop groups and the universal cover of $\Diff_+(S^1)$ 
can be expressed as colimits of groups of loops/diffeomorphisms supported in subintervals of $S^1$.
Analogous results hold for based loop groups and for the based diffeomorphism group of $S^1$.
These results continue to hold for the corresponding centrally extended groups.

We use the above results to construct a comparison functor from the representations of a loop group conformal net to the representations of the corresponding affine Lie algebra.
We also establish an equivalence of categories between solitonic representations of the loop group conformal net, and locally normal representations of the based loop group.
}

\tableofcontents

\section{Introduction and statement of results}
In the category of groups, the concept of \emph{colimit} is a simultaneous generalisation of the notions of direct limit, and amalgamted free product.
Given a diagram of groups $\{G_i\}_{i\in \mathcal I}$ indexed by some poset $\mathcal I$ (i.e., a functor from $\mathcal I$ viewed as a category into the category of groups)
the colimit $\mathrm{colim}_{\mathcal I}\, G_i$ is the quotient
\[
\mathrm{colim}_{\mathcal I}\, G_i=\big(\ast_{i\in \mathcal I} G_i\big)/N
\]
of the free product of the $G_i$ by the normal subgroup $N$ generated by the elements $g^{-1} f_{ij}(g)$ for $g\in G_i$,
where $f_{ij}:G_i\to G_j$ are the homomorphisms in the diagram.
If the diagram takes values in the category of topological groups
(i.e., if the $G_i$ are topological groups and the $f_{ij}$ are continuous), then we may take the colimit in the category of topological groups.
The underlying group remains the same, but it is now endowed with the \emph{colimit topology}: the finest group topology such that all the maps $G_i\to G$ are continuous.

In the present paper, we show that loop groups and the universal cover of $\Diff_+(S^1)$ 
can be expressed as colimits of groups of loops/diffeomorphisms supported in subintervals\footnote{The the poset of subintervals of $S^1$ is not directed, so the colimits in question are not direct limits.} of $S^1$.

Let now $G$ be a compact Lie group, and 
let $LG:=\mathrm{Map}(S^1,G)$ be the group of smooth maps of $S^1:=\{z\in\C:|z|=1\}$ into $G$. This is the so-called \emph{loop group} of $G$.
For every interval $I\subset S^1$, let $L_IG\subset LG$ denote the subgroup of loops whose support is contained
in $I$.
If $G$ is simple and simply connected, then $LG$ admits a well-known central extension by $U(1)$ \cite{MR733047, MR1032521, MR900587}:
\begin{equation}\label{eq: SES LG}
\tikzmath[xscale=1.5]{
\node (0) at (0,0) {$0$};
\node (1) at (1,0) {$U(1)$};
\node (2) at (2.05,.05) {$\widetilde{LG}$};
\node (3) at (3,0) {$LG$};
\node (4) at (4,0) {$0$.};
\draw [->] (0) -- (1);
\draw [->] (1) -- ++(.75,0);
\draw [<-] (3) -- ++(-.65,0);
\draw [->] (3) -- (4);
}
\end{equation}
Letting $\widetilde{L_IG}$ be the restrictions of that central extension to the subgroups $L_IG$,
our main result about loop groups is that
\[
LG\,=\,\underset{I\subset S^1}{\mathrm{colim}}\,\, L_IG
\qquad\text{and}\qquad
\widetilde{LG}\,=\,\underset{I\subset S^1}{\mathrm{colim}}\,\, \widetilde{L_IG},
\]
where the colimit is taken in the category of topological groups.

Let $\Omega G$ be the subgroup of $LG$ consisting of loops that map the base point $1\in S^1$ to $e\in G$, and all of whose derivatives vanish at that point. 
We call $\Omega G$ the \emph{based loop group} of $G$ (this is the version of the based loop group that was used in \cite{CS(pt)}).
Letting $\widetilde{\Omega G}$ be the central extension of $\Omega G$ induced by \eqref{eq: SES LG},
we prove that
\[
\Omega G\,=\underset{I\subset S^1,\,1\;\!\not\in\;\!\mathring I\,\,\,}{\mathrm{colim^{\scriptscriptstyle H}}} L_IG
\quad\qquad\text{and}\qquad\quad
\widetilde{\Omega G}\,=\underset{I\subset S^1,\,1\;\!\not\in\;\!\mathring I\,\,\,}{\mathrm{colim^{\scriptscriptstyle H}}} \widetilde{L_IG},
\]
where 
$\mathring I$ denotes the interior of an interval $I$, and
$\mathrm{colim^{\scriptscriptstyle H}}\hspace{.3mm} \mathcal G_i$ denotes the maximal Hausdorff quotient of $\mathrm{colim}\, \mathcal G_i$, equivalently, the colimit in the category of Hausdorff topological groups.

Let $\Diff_+(S^1)$ be the group of orientation preserving diffeomorphisms of $S^1$, and let $\tilde{\Diff_+}(S^1)$ be its universal cover (with center $\Z$).
Given a subinterval $I\subset S^1$ of the circle, we write $\Diff_0(I)$ for the subgroup of diffeomorphisms that fix the complement of $I$ pointwise.
The groups $\Diff_0(I)$ are contractible and may therefore be treated as subgroups of $\tilde{\Diff_+}(S^1)$.
The group $\Diff_+(S^1)$ admits a well-known central extension by the reals, called the Virasoro-Bott group \cite{MR0488080, MR2456522,MR1674631}.
We write $\Diff_+^\R(S^1)$ for the Virasoro-Bott group and $\Diff_+^{\R\times \Z}(S)$ for its universal cover.
We then have the following system of central extensions:
\[\tikzmath[xscale=3]{
\node (31) at (3,1) {$\Diff_+(S^1)$};
\node (21) at (2,1) {$\Diff_+^\R(S^1)$};
\node (32) at (3,2) {$\tilde{\Diff_+}(S^1)$};
\node (22) at (2,2) {$\Diff_+^{\R\times\Z}(S^1)$};
\node (11) at (1,1) {$\R$};
\node (12) at (1,2) {$\R$};
\node (23) at (2,3) {$\Z$};
\node (33) at (3,3) {$\Z$};
\node (13) at (1,3) {$\R\times\Z$};
\node (41) at (4,1) {$0$};
\node (01) at (0,1) {$0$};
\node (40) at (4,0) {$0$};
\node (30) at (3,0) {$0$};
\node (20) at (2,0) {$0$};
\node (04) at (0,4) {$0$};
\node (02) at (0,2) {$0$};
\node (24) at (2,4) {$0$};
\node (34) at (3,4) {$0$};
\node (42) at (4,2) {$0$};
\draw [->] (01) -- (11);
\draw [->] (11) -- (21);
\draw [->] (21) -- (31);
\draw [->] (31) -- (41);
\draw [->] (02) -- (12);
\draw [->] (12) -- (22);
\draw [->] (22) -- (32);
\draw [->] (32) -- (42);
\draw [<-] (20) -- (21);
\draw [<-] (21) -- (22);
\draw [<-] (22) -- (23);
\draw [<-] (23) -- (24);
\draw [<-] (30) -- (31);
\draw [<-] (31) -- (32);
\draw [<-] (32) -- (33);
\draw [<-] (33) -- (34);
\draw [->] (04) -- (13);
\draw [->] (13) -- (22);
\draw [->] (22) -- (31);
\draw [->] (31) -- (40);
}\]
At last, let us write $\Diff_0^\R(I)$ for the restriction of the central extension by $\R$ to the subgroups $\Diff_0(I)\subset \Diff_+(S^1)$.
Our main result about diffeomorphism groups is that
\[
\tilde{\Diff_+}(S^1)\,=\,\underset{I\subset S^1}{\mathrm{colim}}\,\, \Diff_0(I)
\qquad
\text{and}
\qquad
\Diff_+^{\R\times \Z}(S^1)\,=\,\underset{I\subset S^1}{\mathrm{colim}}\,\, \Diff_0^\R(I).
\]

Let $\Diff_*(S^1)$ be the subgroup of $\Diff(S^1)$ consisting of diffeomorphisms that fix the point $1\in S^1$, and are tangent up to infinite order to the identity map at that point. 
We call it the \emph{based diffeomorphism group} of $S^1$.
Let $\Diff_*^\R(S^1)$ be the restriction of the central extension by $\R$ to $\Diff_*(S^1)$.
We also prove that
\[
\Diff_*(S^1)=\underset{\substack{I\subset S^1,\,1\;\!\not\in\;\!\mathring I\,\,\,}}{\mathrm{colim^{\scriptscriptstyle H}}}\, \Diff_0(I),
\quad\,\,\,
\Diff_*^\R(S^1)=\underset{\substack{I\subset S^1,\,1\;\!\not\in\;\!\mathring I\,\,\,}}{\mathrm{colim^{\scriptscriptstyle H}}}\, \Diff_0^\R(I).
\]

\begin{remark*}
All our results are formulated for the $C^\infty$ topology (uniform convergence of all derivates),
but they hold equally well for groups of $C^r$ loops $S^1\to G$, $r\ge 1$, and for groups of $C^r$ diffeomorphisms of $S^1$, $r\ge 2$ (with the exception of Section~\ref{sec: Solitons and representations of OmG}, which seems to requires $r\ge 4$ \cite{arXiv:1808.02384}).
\end{remark*}

In the third section of this paper, we apply the above results about (central extensions of) $LG$ and $\Diff(S^1)$ to the representations of loop group conformal nets.
For each compact simple Lie group $G$ and integer $k\ge 1$, there is a conformal net called the \emph{loop group conformal net}.
It associates to an interval $I\subset S^1$ a von Neumann algebra version of the twisted group algebra of $L_IG$.

We construct a comparison functor from the representations of a loop group conformal net to the representations of the corresponding affine Lie algebra.
We also establish an equivalence of categories between solitonic representations of the loop group conformal net, and locally normal representations of the based loop group.
This last result is needed in order to fill a small gap between the statement of \cite[Thm.\,1.1]{Bicommutant-categories-from-conformal-nets} and the results announced in \cite[\S4 and \S5]{CS(pt)}.
\bigskip\medskip

\newpage\noindent
{\large \bf Acknowledgements}\medskip\\
We thank Sebastiano Carpi for many useful discussions and references.
This research was supported by the ERC grant No 674978 under the European Union's Horizon 2020 research innovation programme.

\section{Colimits and central extensions}

The category of topological groups admits all colimits.
If $\{G_i\}_{i\in \mathcal I}$ is a diagram of topological groups indexed by some small category $\mathcal I$, then the colimit $G=\mathrm{colim}_{\mathcal I}\, G_i$ can be computed as follows.
As an abstract group, it is given by the colimit of the $G_i$ in the category of groups.
The topology on $G$ is the finest group topology that makes all the maps $G_i\to G$ continuous.

Let us call a small category $\mathcal I$ \emph{connected} if any two objects $i,j\in\mathcal I$ are related by a zig-zag $i\leftarrow i_1\to i_2\leftarrow i_3\to i_4\,\,\ldots\,\, i_n\leftarrow j$ of morphisms.
Similarly, let us call a diagram connected (`diagram' is just an other name for `functor') if the indexing category $\mathcal I$ is connected.

We recall the notion of central extension of topological groups:

\begin{definition}
Let $G$ be a topological group, and let $Z$ be an abelian topological group.
A central extension of $G$ by $Z$ is a short exact sequence
\[
0\,\longrightarrow\, Z\,\stackrel{\iota}\longrightarrow\, \tilde G \,\stackrel{\pi}\longrightarrow\, G \,\longrightarrow\, 0
\]
such that $Z$ sits centrally in $G$, the map $\iota$ is an embedding ($Z$ is equipped with the subspace topology)
and there exist a continuous local section $s:U\to \tilde G$ of $\pi$, where $U$ is an open neighbourhood of $e\in G$.
\end{definition}

Note that for $U$ and $s$ as above, the map $s\cdot \iota:U\times Z\to \tilde G$ is always an open embedding.

\begin{proposition}\label{prop: central ex of colim}
Let $\{G_i\}_{i\in \mathcal I}$ be a connected diagram of topological groups, let
\[
G:=\mathrm{colim}_{\mathcal I}\, G_i,
\]
and let $\varphi_i:G_i\to G$ be the canonical homomorphisms.
Assume that there exists a neighbourhood of the identity $U\subset G$, and finitely many continuous maps $f_i:U\to G_i$, $f_i(e)=e$, such that
\[
\qquad\qquad {\prod}_{\;\!i}\; \varphi_i\circ f_i(x)=x\qquad\quad \forall x\in U
\]
(the $f_i$ are indexed over an ordered finite subset of the objects of $\mathcal I$).

Let $0\to Z\to\widetilde G\to G\to 0$ be a central extension of $G$, and
let $\widetilde G_i$ be the induced central extensions of $G_i$ (the pullback of $\tilde G$ along the map $\varphi_i:G_i\to G$).
Then the canonical map
\[
\mathrm{colim}_{\mathcal I}\, \widetilde G_i \,\to\, \widetilde G
\]
is an isomorphism of topological groups.\footnote{
For $i\to j$ a morphism in $\mathcal I$, the map $\tilde G_i\to \tilde G_j$ is the unique group homomorphism which makes the diagrams
$\begin{smallmatrix}
\tilde G_i &\;\to\;& \tilde G_j \\
\downarrow &&\downarrow \\
G_i&\;\to\;& G_j
\end{smallmatrix}$
and
$\begin{smallmatrix}
&& \!\!\tilde G_j\!\! && \\
&\nearrow && \searrow&\\
\tilde G_i\!\! &&\!\!\longrightarrow\!\!&& \!\!\tilde G \\
\end{smallmatrix}$
commute (by the universal property of the pullback defining $\tilde G_j$).
}
\end{proposition}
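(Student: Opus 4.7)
The plan is to show that the canonical map $\psi : H := \mathrm{colim}_{\mathcal I}\, \widetilde G_i \to \widetilde G$ is both a group isomorphism and a homeomorphism. Bijectivity can be established algebraically and does not use the factorization hypothesis on the $f_i$; the latter is essential only for comparing the topologies.

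For bijectivity, let $Z_H \subset H$ denote the image of the inclusions $Z \hookrightarrow \widetilde G_i \to H$ (these all land in the same subgroup of $H$, since by the pullback definition of $\widetilde G_i$ the embeddings $Z \hookrightarrow \widetilde G_i$ are compatible with every structure map in the diagram). The subgroup $Z_H$ is central in $H$, because $H$ is generated as an abstract group by the images of the $\widetilde G_i$ and $Z$ is central in each $\widetilde G_i$. Using the fact that colimits of groups commute with quotients by compatible systems of normal subgroups, one obtains
\[
H / Z_H \;=\; \mathrm{colim}_{\mathcal I}\, (\widetilde G_i / Z) \;=\; \mathrm{colim}_{\mathcal I}\, G_i \;=\; G,
\]
so $\ker(H \to G) = Z_H$. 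Since $\psi$ factors through $\widetilde G \to G$, its kernel is contained in $Z_H$, and intersects $Z_H$ trivially because $Z \hookrightarrow \widetilde G$ is an embedding; hence $\psi$ is injective. Surjectivity follows by lifting any $\tilde g \in \widetilde G$ through a finite presentation $\prod \varphi_{i_k}(g_k)$ of its image in $G$ and correcting by an element of $Z$.

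The topological half consists in producing a continuous local section of $\psi$ on a neighborhood of the identity. Choose $U' \subset U$ small enough that $\varphi_i \circ f_i(U') \subset U$ for every $i$ in the finite factorization. Since $\widetilde G_i = G_i \times_G \widetilde G$, the pair $(f_i,\, s \circ \varphi_i \circ f_i)$ defines a continuous lift $\tilde f_i : U' \to \widetilde G_i$ of $f_i$, and the finite product $x \mapsto \prod_i \widetilde\varphi_i(\tilde f_i(x))$ is then a continuous map $U' \to H$ whose composition with $\psi$ is $x \mapsto \prod_i s(\varphi_i(f_i(x)))$. This last expression differs from $s(x)$ by a continuous $Z$-valued function $c(x)$, and setting
\[
\tilde\sigma\bigl(s(x) \cdot z\bigr) \;:=\; \Bigl(\prod_i \widetilde\varphi_i(\tilde f_i(x))\Bigr) \cdot c(x)^{-1} \cdot z
\]
yields a continuous section of $\psi$ on the open neighborhood $V := s(U') \cdot Z$ of the identity in $\widetilde G$. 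By the bijectivity already established, $\tilde\sigma(V) = \psi^{-1}(V)$, so $\psi$ is a homeomorphism from this open neighborhood of $e_H$ onto $V$, and group translation upgrades this to a global homeomorphism.

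The main obstacle is the topological half, and specifically the role of the factorization hypothesis. Without the $f_i$, the colimit topology on $H$ could a priori be strictly finer than the topology pulled back from $\widetilde G$. Continuity of $\tilde\sigma$ into $H$ (rather than merely into a single $\widetilde G_i$) crucially uses that the index set for the factorization is \emph{finite}, so continuity of the product reduces to continuity of each factor together with continuity of multiplication in the topological group $H$.
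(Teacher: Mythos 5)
Your proof is correct and follows essentially the same strategy as the paper's: the algebraic identification of the kernel via the colimit/quotient interchange is identical, and the continuous local section assembled from the finite factorization $\prod_i \varphi_i\circ f_i(x)=x$ is the same key device. The only cosmetic difference is that you build a local section of the comparison map $\mathrm{colim}_{\mathcal I}\,\widetilde G_i \to \widetilde G$ directly (correcting by the continuous $Z$-valued discrepancy $c(x)$), whereas the paper builds a local section of $\mathrm{colim}_{\mathcal I}\,\widetilde G_i \to G$ and then concludes via the five lemma together with the local homeomorphism to $U\times Z$ on both sides.
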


\begin{proof}
The diagram is connected, so all the central $Z$'s in the various $\widetilde G_i$ get identified in $\mathrm{colim}_{\mathcal I}\, \widetilde G_i$.
Moreover, the canonical map $\iota_Z:Z\to \mathrm{colim}_{\mathcal I}\, \widetilde G_i$ is an embedding, because the triangle
\[
\tikzmath[scale=.8]{
\node (A) at (-3.3,0) {$Z$};
\node[inner xsep=4] (B) at (0,.8) {$\mathrm{colim}_{\mathcal I}\, \widetilde G_i$};
\node (C) at (3.3,0) {$\widetilde G$};
\draw[<-] (B.west)+(0,-.2) --node[above, pos=.45]{$\scriptstyle \iota_Z$} (A);
\draw[->] (B.east)+(0,-.2) -- (C);
\draw[->] (A.east)+(0,-.08) coordinate(x) -- (C.west|-x);
}
\]
commutes.
The quotient is easily computed:\vspace{-.1cm}
\[
\begin{split}
\big(\mathrm{colim}_{\mathcal I}\, \widetilde G_i\big)/Z
&= 
\mathrm{colim} \big(Z \tikzmath{
\node at (0,0) {$\longrightarrow$}; 
\node at (0,.2) {$\longrightarrow$};
\node at (0,.42) {$\scriptstyle \iota_Z$};
\node at (0,-.2) {$\scriptstyle 0$};}
(\mathrm{colim}_{\mathcal I}\, \widetilde G_i)\big)
\\
&=
\mathrm{colim}_{\mathcal I}\big(
\mathrm{colim} (Z \tikzmath{
\node at (0,0) {$\rightarrow$}; 
\node at (0,.2) {$\rightarrow$};}
\widetilde G_i
)\big)
=
\mathrm{colim}_{\mathcal I}\, G_i=G,
\vspace{-.1cm}
\end{split}
\]
so the sequence 
$
0\to Z\to \mathrm{colim}_{\mathcal I}\, \widetilde G_i \to G \to 0
$
is a short exact sequence of groups. 
To show that it is also a short exact sequence of \emph{topological} groups, we need to argue that the projection map $\mathrm{colim}_{\mathcal I}\, \widetilde G_i \to G$ admits local sections.
Pick neighbourhoods $U_i\subset G_i$ of the neutral elements and local sections $s_i:U_i\to \widetilde G_i$.
The map $x\mapsto \prod_i s_i(f_i(x))$, defined on the finite intersection $U:=\bigcap f_i^{-1}(U_i)\subset G$, is the desired local section.

We have two exact sequences of topological groups, and a map between them:
\[
\tikzmath[scale=.8]{
\node (Z) at (-3,1.6) {$0$};
\node (A) at (0,1.6) {$Z$};
\node[yshift=1] (B) at (4,1.6) {$\mathrm{colim}_{\mathcal I}\, \widetilde G_i$};
\node (C) at (8,1.6) {$\mathrm{colim}_{\mathcal I}\, G_i$};
\node (D) at (11,1.6) {$0$};
\node (Z') at (-3,0) {$0$};
\node (A') at (0,0) {$Z$};
\node (B') at (4,0) {$\widetilde G$};
\node (C') at (8,0) {$G$};
\node (D') at (11,0) {$0$};
\draw[->] (Z) -- (A);
\draw[->] (A) -- (B.west|-A);
\draw[->] (B.east|-A) -- (C);
\draw[->] (C) -- (D);
\draw[double, double distance=1] (A) -- (A');
\draw[->] (B) -- (B');
\draw[double, double distance=1] (C) -- (C');
\draw[->] (Z') -- (A');
\draw[->] (A') -- (B');
\draw[->] (B') -- (C');
\draw[->] (C') -- (D');
}
\]
By the five lemma, the middle vertical map is an isomorphism of groups.
It is an isomorphism of topological groups because both $\tilde G$ and $\mathrm{colim}_{\mathcal I}\, G_i$
are locally homeomorphic to a product $U\times Z$.
\end{proof}

\subsection{Loop groups}

We write $S$ for a manifold diffeomorphic to the standard circle $S^1:=\{z\in\C:|z|=1\}$, and call such a manifold \emph{a circle}.
We write $I$ for a manifold diffeomorphic to $[0,1]$, and call such a manifold \emph{an interval}.
All circles and intervals are oriented.

\subsubsection{The free loop group}\label{sec:The free loop group}
Let $G$ be a compact, simple, simply connected Lie group.
Throughout this section, we fix a circle $S$, and write $LG:=\mathrm{Map}(S,G)$ for the group of smooth maps from $S$ to $G$.
Given an interval $I$, we denote by $L_IG$ be the group of maps $I\to G$ that send the boundary of $I$ to the neutral element $e\in G$, and all of whose derivatives vanish at those points.
If $I$ is a subinterval of $S$, we identify $L_IG$ with the subgroup of $LG$ of loops with support in $I$.

\begin{theorem}\label{thm: LG=colimit}
For every circle $S$, the natural map
\[
{\mathrm{colim}}_{I\subset S}\, L_IG\,\,\to\,\,LG
\]
(colimit indexed over the poset of subintervals of $S$)
is an isomorphism of topological groups.
\end{theorem}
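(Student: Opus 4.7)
My plan is to build a continuous local factorization of loops near the identity via a smooth partition of unity, and then use the fact that $LG$ is simply connected to upgrade this local data to a global identification of $C := \mathrm{colim}_{I \subset S} L_IG$ with $LG$. I begin by covering $S$ with two intervals $I_1, I_2$ whose intersection has two components $J_+, J_-$, and by picking a smooth $\rho: S \to [0,1]$ that vanishes on $S \setminus I_1$ and equals $1$ on $S \setminus I_2$ (such a $\rho$ exists since these closed sets are disjoint). For $\gamma$ in a small neighborhood $U$ of $e$ in $LG$ on which $\log$ is well-defined, I set $f_1(\gamma) := \exp(\rho \cdot \log \gamma)$ and $f_2(\gamma) := f_1(\gamma)^{-1} \gamma$. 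The support conditions give $f_1(\gamma) \in L_{I_1}G$ and $f_2(\gamma) \in L_{I_2}G$ (on $S \setminus I_2$ we have $\rho = 1$, so $f_1 = \gamma$ and $f_2 = e$), and this yields a continuous factorization $\gamma = f_1(\gamma) f_2(\gamma)$. Composing gives a continuous local section $s(\gamma) := \varphi_{I_1}(f_1(\gamma)) \cdot \varphi_{I_2}(f_2(\gamma))$ of the natural map $\Phi: C \to LG$ over $U$.

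The next (and most delicate) step will be to show that $\ker \Phi$ is discrete. I would take $W := \varphi_{I_1}(V_1) \cdot \varphi_{I_2}(V_2) \subset C$ for small symmetric neighborhoods $V_i$ of $e$ in $L_{I_i}G$, and verify that $W \cap \ker \Phi = \{e\}$. If $g_i \in V_i$ satisfy $g_1 g_2 = e$ in $LG$, then $g_1 = g_2^{-1}$ has support in $J_+ \sqcup J_-$, hence factors as $g_1 = g_1|_{J_+} \cdot g_1|_{J_-}$ with each factor in $L_{J_\pm}G \subset L_{I_1}G \cap L_{I_2}G$. The colimit relations then force $\varphi_{I_1}(g_1|_{J_\pm}) = \varphi_{J_\pm}(g_1|_{J_\pm}) = \varphi_{I_2}(g_1|_{J_\pm})$, so $\varphi_{I_1}(g_1) = \varphi_{I_2}(g_1)$ in $C$, and finally $\varphi_{I_1}(g_1) \cdot \varphi_{I_2}(g_2) = \varphi_{I_2}(g_1 g_2) = e$.

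Combined with surjectivity of $\Phi$ (since $U$ generates the connected group $LG$, using $\pi_0(LG) = \pi_1(G) = 0$), discreteness of $\ker \Phi$ together with continuous local sections (obtained by translating $s$) makes $\Phi$ a covering map of topological groups. Since $C$ is connected (generated by the connected subgroups $L_IG$) and $LG$ is simply connected (as $\pi_1(LG) \cong \pi_1(G) \oplus \pi_2(G) = 0$ for compact, simple, simply connected $G$), any connected covering must be a homeomorphism, so $\Phi$ is an isomorphism of topological groups. The main obstacle I anticipate is the topological subtlety in the middle step: verifying that $W$ is genuinely open in the colimit topology requires showing that products of images $\varphi_I(V)$ of opens form a neighborhood basis of $e$ in $C$. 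A smaller bookkeeping point is the reduction of the colimit over the full poset of subintervals to the finite subdiagram $\{J_-, J_+, I_1, I_2\}$, which one handles by observing that any subinterval crossing from $I_1 \setminus I_2$ to $I_2 \setminus I_1$ admits its own partition-of-unity decomposition into loops already captured by the subdiagram.
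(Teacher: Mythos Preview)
Your local section construction via partition of unity is correct and matches the paper's approach (its Lemma~3 and equation~(2)). Your computation that $W \cap \ker\Phi = \{e\}$ is also correct. But the step you flag as ``the main obstacle'' is not a topological subtlety to be cleaned up later---it is the entire content of the theorem, and your proposal gives no indication of how to handle it.

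To run the covering-space argument you need $\ker\Phi$ discrete, which requires your $W = \varphi_{I_1}(V_1)\varphi_{I_2}(V_2)$ to be a neighbourhood of $e$ in the colimit topology. Nothing you have written gives this. The colimit topology is the \emph{finest} group topology making all the $\varphi_I$ continuous, so it could a priori be strictly finer than anything controlled by $I_1, I_2$ alone. Showing that finite products $\prod\varphi_{I_i}(V_i)$ form a neighbourhood basis at $e$ amounts to checking that this filter is conjugation-invariant, i.e.\ that $\varphi_J(h)\,\varphi_I(V)\,\varphi_J(h)^{-1}$ again contains such a product. When $I \cup J = S$ this reduces to the commutation relation $[\varphi_J(h), \varphi_K(v)] = e$ in $C$ for disjoint $J, K$ whose union is \emph{not} contained in any single interval---and establishing that relation in $C$ is precisely the algebraic heart of the matter, not a bookkeeping point. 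Your remark about ``reduction to the finite subdiagram $\{J_-,J_+,I_1,I_2\}$'' does not help here: the amalgamated free product over that subdiagram is a much larger group than $C$, and surjectivity of the comparison map says nothing about the topology.

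The paper avoids this trap by abandoning the covering-space framework and proving $\ker\Phi = \{e\}$ directly. Simple connectedness of $LG$ is used not via covering theory but through a van Kampen diagram argument (its Lemma~5): any relation in $LG$ among elements of a path-connected neighbourhood $\mathcal U$ is a formal consequence of length-$3$ relations, and those length-$3$ relations are then verified in $C$ by a careful conjugation and support-tracking computation. Once injectivity is known, your local section immediately upgrades the bijection to a homeomorphism. So the ingredients you have assembled (local section, $\pi_1(LG)=0$) are exactly the right ones, but the bridge between them has to be algebraic rather than topological.
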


The group $LG=\mathrm{Map}(S,G)$ admits a well-known central extension $\widetilde{LG}$, constructed as follows.
Let $\g$ be the Lie algebra of $G$.
The Lie algebra $C^\infty(S,\g)$ of $LG$ admits a well-known $2$-cocycle given by the formula $(f,g)\mapsto \frac1{2\pi i}\int_S\langle f,dg\rangle$.
(Here, $\langle\,\,,\,\rangle:\g\times \g\to\R$ is the \emph{basic inner product} -- the smallest $G$-invariant inner product whose restriction to any $\mathfrak{su}(2)\subset \g$ is a positive integer multiple of the pairing $(X,Y)\mapsto -tr(XY)$.)
This cocycle can be used to construct a central extension of $C^\infty(S,\g)$ by the abelian Lie algebra $i\R$.
The latter can be then integrated to a simply connected infinite dimensional Lie group $\widetilde{LG}$ with center $U(1)$ \cite{MR733047, MR1032521, MR900587, MR1674631}.\vspace{-.7mm}

Let $\widetilde{LG}^k$ be the quotient of $\widetilde{LG}$ by the central subgroup $\mu_k\subset U(1)$ of $k$-th roots of unity.

\begin{theorem}\label{thm: LG=colimit -- BIS}
Let $S$ be a circle.
Given an interval $I\subset S$, let us denote by $\widetilde{L_IG}$ the pullback of the central extension $\widetilde{LG} \to LG$
along the inclusion $L_IG\to LG$.
Then the natural map
\[
{\mathrm{colim}}_{I\subset S}\, \widetilde{L_IG}\,\,\to\,\,\widetilde{LG}
\]
is an isomorphism of topological groups.

More generally, let $\widetilde{L_IG}^k\!$ be the pullback of $\widetilde{LG}^k\!$ along the inclusion $L_IG\to LG$.
Then the natural map
\[
\mathrm{colim}_{I\subset S^1}\widetilde{L_IG}^k\to \widetilde{LG}^k
\]
is an isomorphism of topological groups.
\end{theorem}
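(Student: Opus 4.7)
The strategy is to reduce Theorem \ref{thm: LG=colimit -- BIS} to Theorem \ref{thm: LG=colimit} by invoking Proposition \ref{prop: central ex of colim}, once for the extension $\widetilde{LG} \to LG$ and once for $\widetilde{LG}^k \to LG$. Theorem \ref{thm: LG=colimit} identifies $LG$ with $\mathrm{colim}_{I\subset S}\, L_IG$, and the subgroups $\widetilde{L_IG}$ and $\widetilde{L_IG}^k$ are defined as pullbacks, so the proposition will apply as soon as we verify its two hypotheses: that the indexing poset of subintervals of $S$ is connected, and that a local factorization of elements of $LG$ near the identity exists.

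Connectedness of the poset of subintervals is immediate: given any $I_1,I_2\subset S$, pick any interval $K\subset S$ short enough that both $I_1\cap K$ and $I_2\cap K$ contain nonempty subintervals $J_1,J_2$; then one has a zig-zag $I_1\supset J_1\subset K\supset J_2\subset I_2$. For the local factorization, I fix two subintervals $I_1,I_2\subset S$ with $I_1\cup I_2=S$ (so each $S\setminus I_j$ is compactly contained in the interior of the other interval) and choose a smooth function $\rho:S\to[0,1]$ with $\rho\equiv 1$ on $S\setminus I_2$ and $\rho\equiv 0$ on $S\setminus I_1$. Let $V\subset G$ be an open neighbourhood of $e$ on which the inverse $\log:V\to\g$ of the exponential map is defined, and let $U\subset LG$ be the (open) neighbourhood of the constant loop $e$ consisting of loops with image in $V$; this gives a continuous map $\log:U\to C^\infty(S,\g)$. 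I then define
\[
f_1(\gamma):=\exp(\rho\cdot \log\gamma),\qquad f_2(\gamma):=f_1(\gamma)^{-1}\gamma.
\]
Since $\rho=0$ on $S\setminus I_1$, the loop $f_1(\gamma)$ is identically $e$ there, so $f_1(\gamma)\in L_{I_1}G$; since $\rho=1$ on $S\setminus I_2$, on that set $f_1(\gamma)$ agrees pointwise with $\gamma$, so $f_2(\gamma)\in L_{I_2}G$. Both maps are continuous in the $C^\infty$ topology (multiplication by $\rho$ is continuous on $C^\infty(S,\g)$ and pointwise $\exp$ is continuous on smooth maps), send $e$ to $e$, and satisfy $\varphi_1(f_1(\gamma))\cdot\varphi_2(f_2(\gamma))=\gamma$. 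Proposition \ref{prop: central ex of colim} now yields $\mathrm{colim}_{I\subset S}\,\widetilde{L_IG}\cong\widetilde{LG}$, and the identical argument applied to $\widetilde{LG}^k\to LG$ yields the $\widetilde{LG}^k$ case.

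The only slightly delicate point is checking that $\log:U\to C^\infty(S,\g)$ and multiplication by $\rho$ are genuinely continuous in the $C^\infty$ Fréchet topology (uniform convergence of all derivatives), but this follows from chain- and Leibniz-rule estimates on seminorms and is completely standard. All the main content sits in Theorem \ref{thm: LG=colimit} and in the clean local factorization supplied by the partition-of-unity trick above.
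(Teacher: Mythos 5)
Your proposal is correct and follows essentially the same route as the paper: reduce to Theorem~\ref{thm: LG=colimit} via Proposition~\ref{prop: central ex of colim}, with the required local factorization near the identity supplied by a partition-of-unity/exponential-chart trick (your $f_1(\gamma)=\exp(\rho\log\gamma)$, $f_2(\gamma)=f_1(\gamma)^{-1}\gamma$ is exactly the paper's factorization \eqref{eq: part of 1 for LG} specialized to a two-interval cover). You simply spell out the connectedness and continuity checks that the paper leaves implicit.
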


\begin{lemma}\label{lem: LG is genr'ted}
{\rm(i)}
Let $S$ be a circle and let $\mathcal I=\{I_i\}$ be a collection of subintervals whose interiors cover $S$.
Then the subgroups $L_{I_i}G$ generate $LG$.\\
{\rm(ii)}
Let $I$ be an interval and let $\mathcal I=\{I_i\}$ be a finite collection of subintervals whose interiors cover that of $I$.
Then the subgroups $L_{I_i}G$ generate $L_IG$.
\end{lemma}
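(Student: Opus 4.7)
The plan is to show that the subgroup $H \leq LG$ generated by $\bigcup_i L_{I_i}G$ contains an open neighbourhood of the identity; since $LG$ is connected (because $G$ is simply connected), any open subgroup is clopen, and therefore equals $LG$. The same strategy will work for part (ii) once one knows that $L_IG$ is itself connected.

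For part (i), I will first use compactness of $S$ to pass to a finite subcover $\{I_1,\dots,I_n\}$. I then pick a smooth partition of unity $\{\chi_i\}$ on $S$ subordinate to $\{\mathring I_i\}$, arranged so that each $\chi_i$ vanishes identically on a neighbourhood of $\partial I_i$. On a sufficiently small neighbourhood $\mathcal U$ of $e\in LG$, the exponential map is a diffeomorphism from a neighbourhood of $0$ in the Lie algebra $C^\infty(S,\g)$, so every $h\in\mathcal U$ has a unique logarithm $h=\exp(X)$. Setting $X_i:=\chi_i X$, each $X_i$ is supported in $I_i$ and vanishes identically near $\partial I_i$, so $\exp(X_i)\in L_{I_i}G$.

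The only real observation needed is that the $X_i$ commute pairwise. Because the Lie bracket on $C^\infty(S,\g)$ is pointwise, one has
\[
[X_i,X_j](p) \,=\, \chi_i(p)\chi_j(p)\,[X(p),X(p)] \,=\, 0 \qquad \text{for every } p\in S,
\]
so $\exp$ restricts to a homomorphism on the abelian subalgebra $\langle X_1,\dots,X_n\rangle_{\R}$, and therefore
\[
h \,=\, \exp(X) \,=\, \exp\!\big({\textstyle\sum_i X_i}\big) \,=\, \exp(X_1)\exp(X_2)\cdots\exp(X_n)\,\in\, H.
\]
This places $\mathcal U$ inside $H$, and the connectedness argument finishes part (i). Part (ii) proceeds by the identical argument applied on $I$ instead of $S$, using that $L_IG$ is connected (it is weakly homotopy equivalent to $\Omega G$, whose $\pi_0$ equals $\pi_1(G)=0$). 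The only additional care is to ensure that $\chi_i X$ continues to vanish to infinite order at $\partial I_i$: this is automatic at points of $\partial I_i\cap\mathring I$ (where the cutoff $\chi_i$ vanishes flatly by construction) and at points of $\partial I_i\cap\partial I$ (where $X$ itself already vanishes to infinite order).

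I do not anticipate a genuinely hard step; the whole argument turns on the identity $[X(p),X(p)]=0$, which upgrades the additive decomposition $X=\sum\chi_i X$ to a multiplicative decomposition on the nose rather than only up to Baker--Campbell--Hausdorff corrections, and sidesteps the iterative/Newton-type argument one would otherwise need.
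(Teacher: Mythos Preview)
Your proof is correct and essentially identical to the paper's: both pass to a finite subcover, write a loop near the identity as $\exp(X)$ (pointwise), decompose $X=\sum_i \chi_i X$ via a partition of unity, observe that the $\chi_i X$ pairwise commute because $[X(p),X(p)]=0$, and conclude that the generated subgroup contains an open neighbourhood of $e$. The only cosmetic difference is that the paper phrases the construction pointwise as $\gamma_i(t)=\exp\big(\phi_i(t)\cdot\exp^{-1}(\gamma(t))\big)$ rather than invoking the loop-group exponential globally.
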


\begin{proof}
The two statements are entirely analogous. We only prove the first one.
First of all, since $S$ is compact, we may assume without loss of generality that $n:=|\mathcal{I}|<\infty$.
Let $\exp:\g\to G$ be the exponential map, and let $u\subset \g$ be a convex neighbourhood of $0$ such that ${\exp}|_u:u\to G$ is a diffeomorphism onto its image $U:=\exp(u)$.

Given a partition of unity $\phi_i:I_i\to [0,1]$, every loop $\gamma\in \mathrm{Map}(S,U)\subset LG$ can be factored as
\begin{equation}\label{eq: part of 1 for LG}
\gamma=\prod_{i=1}^n\gamma_i
\end{equation}
with $\gamma_i(t)=\exp \big(\phi_i(t)\cdot \exp^{-1}(\gamma(t))\big)$.
(Note that the $\gamma_i$ commute.)
The subgroup generated by the $L_{I_i}G$ therefore contains $\mathrm{Map}(S,U)$.
The latter is open and hence generates $LG$.
\end{proof}

Let $\mathcal I$ be a collection of subintervals of $S$ whose interiors form a cover and that is closed under taking subintervals:
($I_1\in \mathcal I$ and $I_2\subset I_1$) $\Rightarrow$ $I_2\in \mathcal I$.
If $I_1$, $I_2 \in \mathcal I$ are such that $I_1\cap I_2$ is an interval, then the diagram 
\begin{equation}\label{eq: LG two ways}
\tikzmath{
\node (A) at (2.5,0) {$L_{I_1\cap I_2}G$};
\node (B) at (5,.8) {$L_{I_1}G$};
\node (C) at (5,-.8) {$L_{I_2}G$};
\node (D) at (7.6,0) {${\mathrm{colim}}_{\mathcal I}\,L_IG$};
\draw [->] (A) -- (B);
\draw [->] (A) -- (C);
\draw [<-] (D.north west) + (.2,-.03) --node[above, pos=.45]{$\scriptstyle \iota_1$} (B);
\draw [<-] (D.south west) + (.2,.03) --node[below, pos=.45]{$\scriptstyle \iota_2$} (C);
}
\end{equation}
clearly commutes.
When $I_1,I_2\in\mathcal I$ have disconnected intersection and $\gamma$ has support in $I_1\cap I_2$, it is not clear, a priori, that $\iota_1(\gamma)=\iota_2(\gamma)$.
Letting $J_1$, $J_2$ be the connected components of $I_1\cap I_2$, we can rewrite $\gamma$ as a product $\gamma=\gamma_1\,\gamma_2$, with $\gamma_i\in L_{J_i}G$.
We then have 
\[
\iota_1(\gamma)=\iota_1(\gamma_1\,\gamma_2)=\iota_1(\gamma_1)\iota_1(\gamma_2)=\iota_2(\gamma_1)\iota_2(\gamma_2)=\iota_2(\gamma_1\,\gamma_2)=\iota_2(\gamma).
\]
So the diagram \eqref{eq: LG two ways} always commutes, even when $I_1\cap I_2$ is disconnected.
Given $\gamma\in L_IG$ for some $I\in\mathcal I$, we also write $\gamma$ for its image in ${\mathrm{colim}}_{\mathcal I}\,L_IG$.
This element is well-defined by the commutativity of \eqref{eq: LG two ways}.

The following result is a strengthening of Theorem~\ref{thm: LG=colimit}:

\begin{theorem}\label{thm2: LG=colimit}
Let $\mathcal I$ be a collection of subintervals of $S$ whose interiors form a cover, and that is closed under taking subintervals.
Let $N\triangleleft\, {\mathrm{colim}}_{\mathcal I}\,L_IG$
be the normal subgroup generated by commutators of loops with disjoint supports:
\[
N:=\big\langle\,\gamma\,\delta\,\gamma^{-1}\delta^{-1}\,\big|\;\mathrm{supp}(\gamma),\mathrm{supp}(\delta)\in\mathcal I,\, \mathrm{supp}(\gamma)\cap\mathrm{supp}(\delta)=\emptyset\,\big\rangle.
\]
Then the natural map
\begin{equation}\label{eq: LG is a colimit+++}
\big({\mathrm{colim}}_{\mathcal I}\, L_IG\big)/N\,\to\,LG
\end{equation}
is an isomorphism of topological groups.
\end{theorem}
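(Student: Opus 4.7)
The natural map $\Phi\colon(\mathrm{colim}_{\mathcal I}L_IG)/N\to LG$ is a well-defined continuous homomorphism: loops with disjoint supports commute in $LG$, so $N$ lies in the kernel of the projection $\mathrm{colim}_{\mathcal I}L_IG\to LG$. My plan is to prove $\Phi$ is an isomorphism of topological groups by exhibiting a continuous two-sided inverse in a neighbourhood of the identity.

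Surjectivity of $\Phi$ is immediate from Lemma~\ref{lem: LG is genr'ted}(i). For the section, use compactness of $S$ to pick a finite subcollection $\{I_1,\ldots,I_n\}\subset\mathcal I$ whose interiors cover $S$, together with a subordinate partition of unity $\{\phi_i\}$. Let $U=\exp(u)\subset G$ be a convex neighbourhood of $e$ on which $\exp$ is a diffeomorphism, and set $V:=\mathrm{Map}(S,U)\subset LG$. Following the construction of Lemma~\ref{lem: LG is genr'ted}, the formula $\gamma_i(t):=\exp(\phi_i(t)\cdot\exp^{-1}(\gamma(t)))\in L_{I_i}G$ produces pairwise commuting factors with $\gamma=\prod_i\gamma_i$ in $LG$. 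This gives a continuous right inverse $s\colon V\to(\mathrm{colim}_{\mathcal I}L_IG)/N$, $s(\gamma):=\prod_i[\gamma_i]$, satisfying $\Phi\circ s=\mathrm{id}_V$.

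The core of the proof is to show that $s$ is also a local left inverse, i.e.\ $s(\Phi([g]))=[g]$ for every $[g]$ in a suitable neighbourhood of $e$ in the quotient. The strategy is a refinement argument exploiting that $\mathcal I$ is closed under taking subintervals. Represent such a $[g]$ by a word $\delta_1\cdots\delta_m$ with $\delta_j\in L_{J_j}G$ sufficiently small (image lying in $U$) and $J_j\in\mathcal I$. Applying the partition of unity on the subinterval $J_j$, decompose each factor as $\delta_j=\prod_i\delta_{j,i}$ with $\delta_{j,i}\in L_{J_j\cap I_i}G$, where each $J_j\cap I_i\in\mathcal I$. Using the commutator relations in $N$, rearrange the double product $\prod_j\prod_i\delta_{j,i}$ so as to collect, within each interval $I_i$, all factors $\delta_{1,i},\ldots,\delta_{m,i}$; since $L_{I_i}G$ injects into $LG$, the collected product inside $L_{I_i}G$ must agree with the $i$-th partition-of-unity component of $\Phi([g])$, giving $[g]=s(\Phi([g]))$.

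The main obstacle is controlling this rearrangement: two factors $\delta_{j,i}$ and $\delta_{k,i'}$ commute modulo $N$ only when their supports $J_j\cap I_i$ and $J_k\cap I_{i'}$ are disjoint, or when both lie inside a common $L_IG$ with $I\in\mathcal I$ (inside which the partition-of-unity factors commute by construction). Justifying every required swap may demand further subdivision, again made possible by the closure of $\mathcal I$ under subintervals, until every pair of non-commuting factors shares a containing interval in $\mathcal I$. Once the two-sided local inverse at $e$ is established, $\Phi$ is a local homeomorphism at $e$; by translation on both sides it is a local homeomorphism everywhere, and hence an isomorphism of topological groups.
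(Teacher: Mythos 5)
Your proposal gets the easy half right (surjectivity via Lemma~\ref{lem: LG is genr'ted} and the continuous local section built from a partition of unity are exactly what the paper does), but the injectivity argument has two genuine gaps, one of which is fatal to the whole strategy. First, even if you could establish that $s$ is a two-sided inverse of $\Phi$ near the identity, the conclusion ``local homeomorphism everywhere, hence an isomorphism'' does not follow: a surjective continuous homomorphism of topological groups that is a local homeomorphism is only a covering map, and its kernel can be a nontrivial discrete central subgroup (compare $\R\to S^1$, or, more to the point, Theorem~\ref{thm: Diff=colimit} in this very paper, where the identical colimit construction for $\Diff_0(I)$ produces the \emph{universal cover} of $\Diff_+(S^1)$ rather than $\Diff_+(S^1)$ itself). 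No purely local argument can rule this out. The paper's proof confronts exactly this issue: it takes an \emph{arbitrary} word $\gamma_1\cdots\gamma_N=e$ in the kernel, uses $\pi_1(LG)=\pi_2(G)\times\pi_1(G)=0$ to span a disk, and runs a van Kampen-diagram argument (Lemma~\ref{lem: vK diagr EZ}) to reduce the global relation to relations of length $3$ among elements of the neighbourhood $\mathcal U=L_{J_1}G\cdots L_{J_n}G$, which are then verified in the quotient by a delicate permutation-and-support-tracking computation. The topological input $\pi_2(G)=0$ is indispensable and is entirely absent from your proposal.

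Second, the ``core'' rearrangement step does not work as stated, because $G$ is nonabelian. The factors $\delta_{j,i}\in L_{J_j\cap I_i}G$ and $\delta_{k,i'}\in L_{J_k\cap I_{i'}}G$ with $j\neq k$ and overlapping supports simply do not commute in $LG$, let alone modulo $N$; no amount of further subdivision creates commutativity where there is none. And even if one could collect all the $\delta_{j,i}$ with fixed $i$ into a single block, the resulting product $\delta_{1,i}\cdots\delta_{m,i}$ would not equal the $i$-th partition-of-unity component $\exp\bigl(\phi_i\cdot\exp^{-1}(\Phi([g]))\bigr)$ of the total loop, again by nonabelianness (the identity $\prod_j\prod_i\delta_{j,i}=\prod_i\prod_j\delta_{j,i}$ and the compatibility of $\exp$ with products both fail). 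The paper avoids this trap by never trying to commute overlapping factors directly; instead it conjugates (equation~\eqref{eq: conj loops+}), reorders via Lemma~\ref{lem: perm free gp} at the price of conjugating each factor by a word, and only at the very end exploits disjointness of $J_k$ and $J_{k+2}$ together with the telescoping identity $\chi_k^+\chi_{k+1}^-=e$. You would need to import essentially all of that machinery, plus the simple-connectivity argument, to close the proof.
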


Before embarking in the proof, let us show how Theorems~\ref{thm: LG=colimit} and~\ref{thm: LG=colimit -- BIS} follow from the above result.

\begin{proof}[Proof of Theorem~\ref{thm: LG=colimit}]
Let $\mathcal I$ be the poset of all subintervals of $S$.
By Theorem \ref{thm2: LG=colimit}, the map $({\mathrm{colim}}_{I\subset S}\, L_IG)/N\,\to\,LG$ is an isomorphism.
So it suffices to show that $N$ is trivial.
Given two loops $\gamma_1,\gamma_2\in LG$ with disjoint support, let $I\subset S$ be an interval that contains the union of their supports.
The commutator of $\gamma_1$ and $\gamma_2$ is trivial in $L_IG$. It is therefore also trivial in the colimit.
\end{proof}

\begin{proof}[Proof of Theorem~\ref{thm: LG=colimit -- BIS}]
It is enough to show that the colimit which appears in Theorem~\ref{thm: LG=colimit} satisfies the assumption of Proposition~\ref{prop: central ex of colim}.
The maps $\gamma\mapsto \gamma_i$
used in equation \eqref{eq: part of 1 for LG}
provide the required factorization.
So Theorem~\ref{thm: LG=colimit} implies Theorem~\ref{thm: LG=colimit -- BIS}.
\end{proof}

\begin{proof}[Proof of Theorem \ref{thm2: LG=colimit}]
Given $\gamma\in L_IG$ for some $I\in\mathcal I$, we write $[\gamma]$ for its image in $({\mathrm{colim}}_{\mathcal I}\,L_IG)/N$.

Let $\{J_j\}_{j\,=\,1\ldots n}$ be a cover of $S$ such that each $J_j\cap J_{j+1}$ is an interval (in particular $J_j\cap J_{j+1}$ is non-empty) and the other intersections are empty (cyclic numbering).
The $J_j$ may be chosen small enough so that each union $J_{j-1}\cup J_{j}\cup J_{j+1}$ is in $\mathcal I$ (cyclic numbering).
Let $U\subset G$ be as in the proof of Lemma~\ref{lem: LG is genr'ted}.
By \eqref{eq: part of 1 for LG}, any loop $\gamma\in\mathrm{Map}(S,U)\subset LG$ can be factored as
$\gamma=\gamma_1\ldots  \gamma_n$, with $\gamma_j\in L_{J_j}G$.
Moreover, that factorisation may be chosen to depend continuously on~$\gamma$.
This provides a local section of the map in~\eqref{eq: LG is a colimit+++}:
\begin{equation*}
\tikzmath{
\node[below] (A) at (0,0) {$(\mathrm{colim}_{\mathcal I}\,L_IG)/N$}; 
\node[below] (B) at (4.2,-.02) {$LG$};
\draw[->] (A.east) -- (B.west|-A.east);
\node[inner sep=3] (C) at (2.5,-1.5) {$\mathrm{Map}(S,U)\,\,$};
\draw[<-right hook, shorten >=-.8, shorten <=.8] (B.south)+(-.3,0) -- (C);
\draw[dashed, ->, shorten <=2.5] (C) -- (1,-.75);
\node at (0+.1,-1) {$\scriptstyle [\gamma_1]\ldots [\gamma_n]$};
\node at                    (.83+.1,-1.48) {$\scriptstyle \gamma$};
\node[rotate=154] at (.5+.1,-1.3) {$\scriptstyle \mapsto$};
}
\end{equation*}
The map $(\mathrm{colim}_{\mathcal I}\,L_IG)/N\to LG$ is surjective by Lemma~\ref{lem: LG is genr'ted}.
Since there exists a continuous local section, all that remains to do in order to show that it is an isomorphism
is to prove injectivity.

Let $g$ be an element in the kernel.
By Lemma~\ref{lem: LG is genr'ted}, we may rewrite $g$ as a product
\(
[\gamma_1][\gamma_2]\ldots[\gamma_N]
\),
with $\gamma_i\in L_{J_i}G$ for $J_i\in\mathcal I$.
Since $g$ is in the kernel of the map to $LG$, the relation
\begin{equation}\label{eq: rel1 -- LG}
\gamma_1\, \gamma_2 \ldots \gamma_N=e
\end{equation}
holds in $LG$.

Any loop $\gamma\in\mathrm{Map}(S,U)$ can be factored as $\gamma=\gamma_1\ldots \gamma_n$ with $\gamma_j\in L_{J_j}G$.
The set $\mathcal U:=L_{J_1}G\cdot\ldots\cdot L_{J_n}G =\{\gamma_i\ldots \gamma_n\,|\,\gamma_j\in L_{J_j}G\}$ is therefore a neighbourhood of $e\in LG$.
Moreover, it is visibly path connected.
Since $\pi_2(G)=0$ \cite[\S 8.6]{MR900587}, we have
\[
\pi_1(LG)=\pi_1(\Omega G\times G)=\pi_2(G)\times \pi_1(G)=0.
\]
So, by Lemma~\ref{lem: vK diagr EZ}, 
equation \eqref{eq: rel1 -- LG} is a formal consequence of relations of length $3$ between the elements of $\mathcal U$.

\begin{lemma} \label{lem: vK diagr EZ}
Let $\mathcal G$ be a simply connected topological group, let $\mathcal U\subset \mathcal G$ be a path-connected neighbourhood of $e\in \mathcal G$, and
let $F$ be the free group on $\mathcal U$.
Then the kernel of the map $F\to \mathcal G$ is generated as a normal subgroup by words of length $3$.

\rm (In other words, any relation between elements of $\mathcal U$ is a formal consequence of relations of length $3$ between elements of~$\mathcal U$.)
\end{lemma}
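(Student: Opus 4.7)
The plan is to prove the lemma by a van Kampen-style triangulation argument, turning the algebraic relation $u_1 u_2 \cdots u_n = e$ in $\mathcal G$ into a null-homotopic loop, filling it with a disk, and reading length-$3$ relations off a sufficiently fine triangulation.

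Fix a word $w = u_1 \cdots u_n \in F$ that lies in the kernel, and set $g_0 = e$ and $g_i = u_1 \cdots u_i$, so $g_n = e$. Using path-connectedness of $\mathcal U$, I would choose, for each $i$, a continuous path from $g_{i-1}$ to $g_i = g_{i-1} u_i$ inside the translate $g_{i-1}\mathcal U$; concatenating these produces a continuous loop $\gamma : S^1 \to \mathcal G$ based at $e$. By simple connectivity, $\gamma$ extends to a continuous map $f : D^2 \to \mathcal G$. I then pick a symmetric path-connected neighbourhood $\mathcal V \subset \mathcal U$ of $e$ with $\mathcal V \cdot \mathcal V \subset \mathcal U$ (such a $\mathcal V$ exists in any topological group). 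By compactness of $D^2$ and the Lebesgue number lemma, I triangulate $D^2$ so that (a) the $n$ vertices $v_0, \ldots, v_{n-1} \in \partial D^2$ corresponding to $g_0, \ldots, g_{n-1}$ are vertices of the triangulation, and (b) the image $f(T)$ of every closed triangle $T$ is contained in some translate $g \mathcal V$.

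For every oriented edge $(p,q)$ of the triangulation, I define the label $\ell(p,q) := f(p)^{-1} f(q)$. Condition (b) guarantees that $\ell(p,q) \in \mathcal V^{-1}\mathcal V = \mathcal V^2 \subset \mathcal U$, so every such label is a letter of $F$. Each triangle with vertices $p_1, p_2, p_3$ then yields the length-$3$ relation $\ell(p_1,p_2)\,\ell(p_2,p_3)\,\ell(p_3,p_1) = e$ in $\mathcal G$. Let $N \triangleleft F$ denote the normal closure of these length-$3$ relations. A standard inductive argument (peel off one boundary triangle at a time, using its relation to rewrite the boundary word) shows that the word $w'$ obtained by reading the labels around the triangulated boundary of $D^2$ is trivial in $F/N$.

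It remains to identify $w$ with $w'$ modulo $N$. Each original arc from $v_{i-1}$ to $v_i$ has been subdivided by the triangulation into $v_{i-1} = p_0, p_1, \ldots, p_k = v_i$. Since the arc was constructed to lie inside $g_{i-1}\mathcal U$, each fan diagonal has label $\ell(v_{i-1}, p_j) = g_{i-1}^{-1} f(p_j) \in \mathcal U$, and hence is a legitimate letter of $F$. The fan triangles $(v_{i-1}, p_{j-1}, p_j)$ provide a sequence of length-$3$ relations that telescope $u_i = \ell(v_{i-1}, v_i)$ into the refined product $\ell(p_0,p_1)\cdots\ell(p_{k-1},p_k)$. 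Combining these across all $i$ gives $w \equiv w' \pmod{N}$, and therefore $w \in N$, as required. The main technical difficulty is maintaining the constraint that \emph{every} edge encountered — interior, subdivided boundary, and fan diagonal — carries a label lying in $\mathcal U$; this is exactly what is bought by passing from $\mathcal U$ to the smaller neighbourhood $\mathcal V$ with $\mathcal V^2 \subset \mathcal U$ and by requiring the triangulation to be fine enough that triangle images land in translates of $\mathcal V$ rather than merely of $\mathcal U$.
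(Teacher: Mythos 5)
Your proposal is correct and follows essentially the same route as the paper's proof: fill the loop built from paths in translates of $\mathcal U$ with a disk, triangulate finely so that edge labels lie in $\mathcal U$ (the paper asserts this directly where you justify it via a symmetric $\mathcal V$ with $\mathcal V^2\subset\mathcal U$ — note the path-connectedness you ask of $\mathcal V$ is neither needed nor always available, but is harmless), read off the triangle relations, and telescope each boundary arc back to its original letter using the partial products $g_{i-1}^{-1}f(p_j)\in\mathcal U$, exactly the paper's relations $g_{i,j}\,h_{i,j+1}=g_{i,j+1}$. The only point to patch is that a general element of the kernel of $F\to\mathcal G$ is a word $u_1^{\varepsilon_1}\cdots u_n^{\varepsilon_n}$ with signs $\varepsilon_i=\pm1$, not just a positive word; the fix is mechanical (for $\varepsilon_i=-1$ traverse the chosen path backwards, as the paper does with $\gamma_i^{\varepsilon_i}$).
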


\begin{proof}
Let $g_1^{\varepsilon_1}g_2^{\varepsilon_2}\ldots g_N^{\varepsilon_N}\in F$, $g_i\in \mathcal U$, $\varepsilon_i\in\{\pm1\}$, be a word in the kernel of the map $F\to \mathcal G$.
We want to show that the relation 
\begin{equation}\label{eq: few g's}
g_1^{\varepsilon_1}g_2^{\varepsilon_2}\ldots g_N^{\varepsilon_N}=e
\end{equation}
is a formal consequence of relations of length $3$ between elements of~$\mathcal U$.

For every $i$, pick a path $\gamma_i:[0,1]\to \mathcal U$ from $e$ to $g_i$.
Since $\mathcal G$ is simply connected, there exists a disk $D^2\to \mathcal G$ that bounds the loop
\begin{equation}\label{eq: lots of g's}
e \stackrel{\gamma_1^{\varepsilon_1}}{-\!-} g_1^{\varepsilon_1} \stackrel{g_1^{\varepsilon_1}\gamma_2^{\varepsilon_2}}{-\!-} g_1^{\varepsilon_1}g_2^{\varepsilon_2} \stackrel{g_1^{\varepsilon_1}g_2^{\varepsilon_2}\gamma_3^{\varepsilon_3}}{-\!-} g_1^{\varepsilon_1}g_2^{\varepsilon_2}g_3^{\varepsilon_3} \,\,{-\!-}\,\, \ldots\ldots \,\,{-\!-}\,\, g_1^{\varepsilon_1}g_2^{\varepsilon_2}\ldots g_N^{\varepsilon_N} = e.
\end{equation}
Triangulate $D^2$ finely enough and orient all the edges so that, for each oriented edge $x\,{-\!\!-}\;y$, the ratio $x^{-1}y$ is in $\mathcal U$.
The orientations along the boundary are chosen compatibly with the $\varepsilon_i$'s in \eqref{eq: few g's}.
Now forget the map $D^2\to \mathcal G$ and only remember the triangulation of $D$, along with the labelling of its vertices by elements of $\mathcal G$.

Before subdividing $D$, the word that one could read along the boundary of $D$ was $g_1^{\varepsilon_1}\ldots g_N^{\varepsilon_N}$.
After subdividing $D$, that word is now of the form
$(h_{11}h_{12}\ldots h_{1n_1})^{\varepsilon_1}$
$(h_{21}h_{22}\ldots h_{2n_2})^{\varepsilon_2}\ldots\ldots
(h_{N1}h_{N2}\ldots h_{Nn_N})^{\varepsilon_N}$\!\!\;,
with $h_{ij}\in \mathcal U$ and
\begin{equation}\label{eq: few h's}
h_{i1}h_{i2}\ldots h_{in_i}=g_i.
\end{equation}

Each little triangle $x-y-z-x$ of the triangulation corresponds to a $3$-term relation among elements of $\mathcal U$.
Depending on the orientation of the edges, this $3$-term relation could be any one of the following eight possibilities:
\[
\begin{matrix}
\scriptstyle(x^{\text{-}1}y)(y^{\text{-}1}z)(z^{\text{-}1}x)=e\,\,\,
&\scriptstyle(x^{\text{-}1}y)(y^{\text{-}1}z)(x^{\text{-}1}z)^{\text{-}1}=e\,\,\,
&\scriptstyle(x^{\text{-}1}y)(z^{\text{-}1}y)^{\text{-}1}(z^{\text{-}1}x)=e\,\,\,
&\scriptstyle(x^{\text{-}1}y)(z^{\text{-}1}y)^{\text{-}1}(x^{\text{-}1}z)^{\text{-}1}=e
\\
\scriptstyle(y^{\text{-}1}x)^{\text{-}1}(y^{\text{-}1}z)(z^{\text{-}1}x)=e\,\,\,
&\scriptstyle(y^{\text{-}1}x)^{\text{-}1}(y^{\text{-}1}z)(x^{\text{-}1}z)^{\text{-}1}=e\,\,\,
&\scriptstyle(y^{\text{-}1}x)^{\text{-}1}(z^{\text{-}1}y)^{\text{-}1}(z^{\text{-}1}x)=e\,\,\,
&\scriptstyle(y^{\text{-}1}x)^{\text{-}1}(z^{\text{-}1}y)^{\text{-}1}(x^{\text{-}1}z)^{\text{-}1}=e.
\end{matrix}
\]
The whole disc is a van Kampen diagram exhibiting the relation
\begin{equation}\label{eq: lots of h's}
(h_{11}h_{12}\ldots h_{1n_1})^{\varepsilon_1}
(h_{21}h_{22}\ldots h_{2n_2})^{\varepsilon_2}\ldots\ldots\ldots
(h_{N1}h_{N2}\ldots h_{Nn_N})^{\varepsilon_N}=e
\end{equation}
as a formal consequence of the above $3$-term relations
(see \cite[Chapt 4]{MR1191619} for generalities about van Kampen diagrams).

The relation \eqref{eq: few g's} is a formal consequence of the relations \eqref{eq: lots of h's} and \eqref{eq: few h's}.
Therefore, in order to finish the lemma, it remains to show that \eqref{eq: few h's} is a formal consequence of relations of length $3$ between elements of~$\mathcal U$.
By construction, $g_{ij}:=h_{i1}\ldots h_{ij}$ is in $\mathcal U$ for every $j\le n_i$.
Therefore
\[
g_{i,j}\,h_{i,j+1}=g_{i,j+1}
\]
is a 3-term relation between elements of $\mathcal U$.
One checks easily that \eqref{eq: few h's} is a formal consequence of the above 3-term relations.
\end{proof}

We have shown that equation \eqref{eq: rel1 -- LG} is a formal consequence of relations of length $3$ between elements of $\mathcal U=L_{J_1}G\cdot\ldots\cdot L_{J_n}G$.
It is therefore a formal consequence of certain relations
\begin{equation}\label{eq: rel2 -- LG}
(\delta_1\, \delta_2 \ldots \delta_{n})^{\varepsilon_1}(\delta_{n+1} \ldots \delta_{2n})^{\varepsilon_2}(\delta_{2n+1} \ldots \delta_{3n})^{\varepsilon_3}=e
\end{equation}
of length $3n$ between elements of the subgroup $L_{J_j}G$.
Here, $\delta_i$, $\delta_{n+i}$, $\delta_{2n+i}\in L_{J_i}G$.
The implication \eqref{eq: rel2 -- LG} $\Rightarrow$ \eqref{eq: rel1 -- LG} is formal:
any group generated by subgroups isomorphic to the $L_{J_j}G$'s in which the relations \eqref{eq: rel2 -- LG} hold also satisfies the relation \eqref{eq: rel1 -- LG}.

In order to prove that the equation $[\gamma_1][\gamma_2]\ldots[\gamma_N]=e$ holds,
it is therefore enough to show that the relations 
\begin{equation}\label{eq: lots of delta}
\big([\delta_1]\, [\delta_2] \ldots [\delta_n]\big)^{\varepsilon_1}\big([\delta_{n+1}] \ldots [\delta_{2n}]\big)^{\varepsilon_2}\big([\delta_{2n+1}] \ldots [\delta_{3n}]\big)^{\varepsilon_3}=e
\end{equation}
hold in $(\mathrm{colim}_{\mathcal I}L_IG)/N$.
Using that $[\delta_i]^{-1}=[\delta_i^{-1}]$,
we may rewrite \eqref{eq: lots of delta} as:
\begin{equation}\label{eq: our goal -- LG}
\prod_{i=1}^{3n}\,[\alpha_i]=e,
\end{equation}
with\, \medskip
\(
\alpha_i:=\begin{cases}
\delta_{i'}^{\,\varepsilon_1}&\scriptstyle i'=i\,\,\text{if}\,\,\varepsilon_1=1;\,\,i'=n+1-i\,\,\text{if}\,\,\varepsilon_1=-1,\textstyle\,\,\,\,\,\,\,\,\text{when}\,\,\,\,\,\,1\le i\le n
\\
\delta_{i'}^{\,\varepsilon_2}&\scriptstyle i'=i\,\,\text{if}\,\,\varepsilon_2=1;\,\,i'=2n+1-i\,\,\text{if}\,\,\varepsilon_2=-1,\textstyle\,\,\,\,\,\,\text{when}\,\,\,\,\,\,n+1\le i\le 2n
\\
\delta_{i'}^{\,\varepsilon_3}&\scriptstyle i'=i\,\,\text{if}\,\,\varepsilon_3=1;\,\,i'=3n+1-i\,\,\text{if}\,\,\varepsilon_3=-1,\textstyle\,\,\,\,\,\,\text{when}\,\,\,\,\,\,2n+1\le i\le 3n.
\end{cases}
\)

At this point, it is useful to note that, for any $\gamma\in L_{J_j}G$ and $\delta\in L_{J_k}G$, ($J_j,J_k\in \mathcal I$), the following equation holds:
\begin{equation}\label{eq: conj loops}
[\delta][\gamma][\delta]^{-1}=[\delta\gamma\delta^{-1}].
\end{equation}
If $k=j\pm 1$, this is true because $J_j\cup J_k\in\mathcal I$.
If $k\not=j\pm 1$, then $\gamma$ and $\delta$ have disjoint supports, $[\delta][\gamma][\delta]^{-1}[\gamma]^{-1}\in N$, and both sides of \eqref{eq: conj loops} are equal to $[\gamma]$.
Conjugation by a loop does not increase supports.
So we can iterate equation \eqref{eq: conj loops} to get:
\begin{equation}\label{eq: conj loops+}
[\delta_1]\ldots[\delta_s]\,[\gamma]\,[\delta_s]^{-1}\ldots[\delta_1]^{-1}=[\delta_1\ldots\delta_s\,\gamma\,\delta_s^{-1}\ldots\delta_1^{-1}].
\end{equation}

Let $\sigma\in\mathfrak S_{3n}$ be a permutation such that
$\alpha_{\sigma(3k-2)},\alpha_{\sigma(3k-1)},\alpha_{\sigma(3k)}\in L_{J_k}G $ for every $k\in\{1,\ldots,n\}$.
By Lemma \ref{lem: perm free gp},
there exist words $w_i\in (\mathrm{colim}_{\mathcal I}L_IG)/N$ in the $[\alpha_j]$'s so that
\[
\prod_{i=1}^{3n}\,[\alpha_i]=\prod_{i=1}^{3n}w_i[\alpha_{\sigma(i)}]w_i^{-1}.
\]
By \eqref{eq: conj loops+}, we have $w_i[\alpha_{\sigma(i)}]w_i^{-1}=[w_i\alpha_{\sigma(i)} w_i^{-1}]$
where, in the right hand side, we have identified $w_i$ with its image in $LG$.
Let $\beta_i:=\alpha_{\sigma(i)}$.
Recall that our goal is to show that equation \eqref{eq: our goal -- LG} holds.
So far, we have shown that
\[
\begin{split}
\prod_{i=1}^{3n}\,[\alpha_i]&=\prod_{i=1}^{3n}w_i[\beta_i]w_i^{-1}\\
&=
\prod_{k=1}^n
\big[ w_{3k-2}\beta_{3k-2} w_{3k-2}^{-1}\big]
\big[ w_{3k-1}\beta_{3k-1} w_{3k-1}^{-1}\big]
\big[ w_{3k}\beta_{3k} w_{3k}^{-1}\big]\\
&=
\prod_{k=1}^n
\big[ w_{3k-2}\beta_{3k-2} w_{3k-2}^{-1}
 w_{3k-1}\beta_{3k-1} w_{3k-1}^{-1}
 w_{3k}\beta_{3k} w_{3k}^{-1}\big].
\end{split}
\]
Letting $\chi_k:= w_{3k-2}\beta_{3k-2} w_{3k-2}^{-1}
 w_{3k-1}\beta_{3k-1} w_{3k-1}^{-1}
 w_{3k}\beta_{3k} w_{3k}^{-1}$, we rewrite this as:
\[
\prod_{i=1}^{3n}\,[\alpha_i]=\prod_{k=1}^n[\chi_k].
\]
By construction, $\mathrm{supp}(\chi_k)\subset J_k$.
Since $\chi_1 \chi_2 \ldots \chi_{n}=e$ in $LG$, 
and since $J_k\cap J_{k+2}=\emptyset$,
the support of each $\chi_k$ is contained in $(J_k\cap J_{k-1})\cup (J_k\cap J_{k+1})$.
We can thus write $\chi_k$ as $\chi_k=\chi^-_k\chi^+_k$ with $\mathrm{supp}(\chi^-_k)\subset J_k\cap J_{k-1}$ and  $\mathrm{supp}(\chi^+_k)\subset J_k\cap J_{k+1}$.
Finally, since $\chi_1\, \chi_2 \ldots \chi_{n}=e$, we have $\chi^+_k\chi^-_{k+1}=e$.
It follows that
\vspace{.2cm}
\[
\begin{split}
\tikzmath{
\node[inner ysep=-8] {$\displaystyle\prod_{i=1}^{3n}\,[\alpha_i]=\prod_{k=1}^n[\chi_k]$};}
&=[\chi^-_1\chi^+_1][\chi^-_2\chi^+_2]\ldots[\chi^-_n\chi^+_n]\\
&=[\chi^-_1][\chi^+_1][\chi^-_2][\chi^+_2]\ldots[\chi^-_n][\chi^+_n]\\
&=[\chi^-_1][\chi^+_1\chi^-_2][\chi^+_2\chi^-_3]\ldots[\chi^+_{n-1}\chi^-_n][\chi^+_n]\\
&=[\chi^-_1][\chi^+_n]=e.\qedhere
\end{split}
\]
\end{proof}

\begin{lemma}\label{lem: perm free gp}
Let $F_n=\langle a_1,\ldots,a_n\rangle$ be the free group on $n$ letters.
Then for any permutation $\sigma\in\mathfrak S_n$, there exist words $w_i\in F_n$ so that 
\[
\prod_{i=1}^{n}\,a_i=\prod_{i=1}^nw_i a_{\sigma(i)}w_i^{-1}.
\]
Moreover, the $w_i$ may be chosen so that each $a_i$ appears at most once in each~$w_i$.
\end{lemma}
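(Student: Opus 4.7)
The plan is to induct on $n$, the key algebraic identity being the telescoping identity
\[
a_{1}a_{2}\cdots a_{n}\;=\;\bigl(a_{1}\cdots a_{j-1}\bigr)\,a_{j}\,\bigl(a_{1}\cdots a_{j-1}\bigr)^{-1}\cdot\bigl(a_{1}\cdots a_{j-1}\,a_{j+1}\cdots a_{n}\bigr),
\]
which is immediate by cancellation in the second factor. This lets me peel off a conjugate of $a_{j}$ at the front of the word, for any chosen $j$.

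The base case $n=1$ is trivial. For the inductive step, given $\sigma\in\mathfrak{S}_{n}$, set $j:=\sigma(1)$ and apply the telescoping identity with this $j$. The first factor $w_{1}a_{\sigma(1)}w_{1}^{-1}$ with $w_{1}=a_{1}\cdots a_{j-1}$ is exactly what is wanted in the first slot, and visibly each generator appears at most once in $w_{1}$. The second factor is the product
\[
a_{1}\,a_{2}\cdots a_{j-1}\,a_{j+1}\cdots a_{n}
\]
of the $n-1$ letters $\{a_{1},\ldots,\widehat{a_{j}},\ldots,a_{n}\}$ in their natural left-to-right order. Viewing these as the free generators of a free group of rank $n-1$, the restriction of $\sigma$ provides a permutation of this $(n-1)$-element set, to which I can apply the induction hypothesis. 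This produces words $w_{2},\ldots,w_{n}$ in $a_{1},\ldots,\widehat{a_{j}},\ldots,a_{n}$ (each generator appearing at most once in each $w_{i}$) such that the second factor equals $\prod_{i=2}^{n}w_{i}a_{\sigma(i)}w_{i}^{-1}$. Concatenating yields the desired factorisation $\prod_{i=1}^{n}a_{i}=\prod_{i=1}^{n}w_{i}a_{\sigma(i)}w_{i}^{-1}$.

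For the ``at most once'' clause: $w_{1}=a_{1}\cdots a_{j-1}$ has each of $a_{1},\ldots,a_{j-1}$ appearing exactly once and does not use any other generator, while for $i\ge 2$ the word $w_{i}$ involves only generators distinct from $a_{j}$, each at most once by the inductive hypothesis. Hence all $w_{i}$ satisfy the required property.

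I do not expect any serious obstacle here: the whole content of the lemma is the telescoping identity together with a clean bookkeeping induction, and the ``at most once'' refinement falls out automatically because the conjugator $w_{1}$ introduced at each stage is a strict initial segment of the remaining product.
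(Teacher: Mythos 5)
Your proposal is correct and follows essentially the same route as the paper: peel off $w_1 a_{\sigma(1)} w_1^{-1}$ with $w_1=a_1\cdots a_{\sigma(1)-1}$ via the telescoping/cancellation identity, then induct on the remaining product $a_1\cdots \widehat{a_{\sigma(1)}}\cdots a_n$. Your explicit verification of the ``at most once'' clause is a welcome (if routine) addition that the paper leaves implicit.
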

\begin{proof}
Letting $w_1:=a_1\ldots a_{\sigma(1)-1}$, we have
\(
a_1\ldots a_n=
(w_1a_{\sigma(1)} w_i^{-1})\,
a_1\ldots \widehat{a_{\sigma(1)}}\ldots a_n
\).
Now use induction on $n$ to rewrite $a_1\ldots \widehat{a_{\sigma(1)}}\ldots a_n$ as
$\prod_{i=2}^nw_i a_{\sigma(i)}w_i^{-1}$.
\end{proof}

\subsubsection{The based loop group}

Fix a base point $p\in S$, and let $\Omega G\subset LG$ be the subgroup consisting of loops that map $p$ to the neutral element of $G$,
and all of whose derivatives vanish at that point.
We call $\Omega G$ the \emph{based loop group} of $G$.
Let $\widetilde{\Omega G}$ be the central extension of $\Omega G$ induced by the basic central extension~\eqref{eq: SES LG} of $LG$.

The arguments of the previous section can be adapted without difficulty to prove the following variants:
\[
\Omega G\,=\,\underset{I\subset S^1}{\mathrm{colim}}\,\, (L_IG\cap \Omega G),
\qquad\,\,\,\,\,
\widetilde{\Omega G}\,=\,\underset{I\subset S^1}{\mathrm{colim}}\,\, (\widetilde{L_IG}\cap \widetilde{\Omega G}).
\]
(The proofs are identical to those in the previous section:
replace every occurrence of $LG$ by $\Omega G$, and every occurrence of $L_IG$ by $L_IG\cap \Omega G$.)

It is also possible to express $\Omega G$ and $\widetilde{\Omega G}$ as colimits over
the poset of subintervals whose interior does not contain $p$, provided one works in the category of Hausdorff topological groups as opposed to the category of topological groups.

\begin{definition}
Given a diagram $\{G_i\}_{i\in\mathcal I}$ of Hausdorff topological groups, let us write $\mathrm{colim^{\scriptscriptstyle H}_{\mathcal I}}\,G_i$
for the colimit in Hausdorff topological groups.
Equivalently, this is the maximal Hausdorff quotient of $\mathrm{colim}_{\mathcal I}\,G_i$.
\end{definition}

Given an interval $I$, we write $\mathring I$ for its interior.
The next result does not seem to hold when the colimit is taken in the category of topological groups:

\begin{proposition}\label{prop: colim^H for LG}
The natural maps
\begin{equation}\label{eq: based loop gps as colimit}
\underset{I\subset S^1,\,p\;\!\not\in\;\!\mathring I\,\,\,}{\mathrm{colim^{\scriptscriptstyle H}}} L_IG\,\,\to\,\,\Omega G
\quad\qquad\text{and}\qquad\quad
\underset{I\subset S^1,\,p\;\!\not\in\;\!\mathring I\,\,\,}{\mathrm{colim^{\scriptscriptstyle H}}} \widetilde{L_IG}\,\,\to\,\,\widetilde{\Omega G}
\end{equation}
are isomorphisms of topological groups.
\end{proposition}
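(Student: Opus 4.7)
Write $P$ for the poset of subintervals $I \subset S^1$ with $p \notin \mathring I$, and set $C := \mathrm{colim^{\scriptscriptstyle H}}_{I \in P} L_IG$. Since $I \in P$ forces $L_IG \subset \Omega G$ (loops supported in such an $I$ are automatically flat at $p$), there is a natural continuous map $q : C \to \Omega G$. The plan is to exhibit $q$ as an isomorphism of topological groups by constructing a two-sided continuous inverse $\varphi : \Omega G \to C$, and then to deduce the central-extension statement from an adaptation of Proposition~\ref{prop: central ex of colim}.

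The inverse $\varphi$ is built from the already-established variant of Theorem~\ref{thm: LG=colimit} for $\Omega G$, namely $\Omega G = \mathrm{colim}_{I \subset S}(L_IG \cap \Omega G)$. For each interval $I \subset S$ one defines a continuous homomorphism $\varphi_I : L_IG \cap \Omega G \to C$ as follows. When $I \in P$, let $\varphi_I$ be the canonical map $L_IG \to C$. When $p \in \mathring I$, write $I = I_- \cup I_+$ with $I_\pm$ sharing the boundary point $p$ (so $I_\pm \in P$); flatness at $p$ lets one decompose any $\gamma \in L_IG \cap \Omega G$ uniquely and continuously as $\gamma = \gamma_-\gamma_+$ with $\gamma_\pm \in L_{I_\pm}G$, and one sets $\varphi_I(\gamma) := \varphi_{I_-}(\gamma_-)\,\varphi_{I_+}(\gamma_+)$. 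A case check verifies compatibility under inclusions $I' \subset I$, so the $\varphi_I$ assemble to a continuous $\varphi : \Omega G \to C$. Both composites $q\varphi$ and $\varphi q$ agree with the identity on each generating subgroup, so must be identities.

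The main technical point --- and the only place where the Hausdorff quotient is essential --- is the claim that, when $p \in \mathring I$, the images of $L_{I_-}G$ and $L_{I_+}G$ commute in $C$; without this, $\varphi_I$ is not a homomorphism. Given $\gamma \in L_{I_-}G$ and $\delta \in L_{I_+}G$, I plan to approximate $\gamma$ in $C^\infty$ by loops $\gamma_\epsilon$ supported in $[a, p - \epsilon]$: write $\gamma$ near $p$ as $\exp X$ with $X$ valued in $\g$ and flat at $p$, replace $X$ by $\chi_\epsilon X$ for a smooth cutoff $\chi_\epsilon$ vanishing on $[p - \epsilon, p]$, and set $\gamma_\epsilon$ equal to $\gamma$ away from $p$; flatness of $X$ at $p$ provides $C^\infty$ convergence $\gamma_\epsilon \to \gamma$. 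Approximate $\delta$ analogously by $\delta_\epsilon$ supported in $[p+\epsilon, b]$. For small $\epsilon$ both supports fit into a single $J \in P$ going the long way around $S^1$ (namely $S^1$ minus a small open arc about $p$), so in $L_JG$ the approximants have disjoint supports and commute, giving $[\gamma_\epsilon, \delta_\epsilon] = e$ in $C$. Continuity of the commutator together with $\gamma_\epsilon \to \gamma$ and $\delta_\epsilon \to \delta$ in $C$ forces $[\gamma, \delta]$ into the closure of $\{e\}$; Hausdorffness of $C$ then yields $[\gamma, \delta] = e$.

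For the central extension, the proof of Proposition~\ref{prop: central ex of colim} transcribes verbatim with every ordinary colimit replaced by its Hausdorff variant, so it suffices to verify the required local factorization hypothesis: this is supplied by applying the partition-of-unity argument from Lemma~\ref{lem: LG is genr'ted} to a cover of $S^1$ by finitely many intervals exactly one of which contains $p$ in its interior, then splitting that one at $p$ into two intervals in $P$ --- the same construction used in defining $\varphi_I$ above. The main obstacle throughout is the commutation claim of the previous paragraph, which exploits \emph{both} the flatness condition defining $\Omega G$ (to enable approximation by loops supported strictly away from $p$) and the Hausdorff quotient (which turns ``lies in $\overline{\{e\}}$'' into ``equals $e$'').
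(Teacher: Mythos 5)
Your proof is correct, but it is organised quite differently from the paper's. The paper re-runs the whole proof of Theorem~\ref{thm2: LG=colimit} over the poset $\mathcal I=\{I\subset S:\,p\notin\mathring I\}$ (using a cover whose wrap-around intersection is exactly $\{p\}$) to obtain that $({\mathrm{colim}}_{\mathcal I}\,L_IG)/N\to\Omega G$ is an isomorphism, passes to the maximal Hausdorff quotient, and then shows that the image $N^{\mathrm{H}}$ of the disjoint-support commutator subgroup $N$ is trivial by approximating the two loops by loops supported in a common closed interval avoiding $p$ and invoking uniqueness of limits. You instead take as input the already-established identity $\Omega G={\mathrm{colim}}_{I\subset S}\,(L_IG\cap\Omega G)$ over the \emph{full} poset of intervals and build an explicit two-sided continuous inverse by splitting a loop that is flat at $p$ into its two halves; the one analytic fact you need --- that the images of $L_{I_-}G$ and $L_{I_+}G$ commute in the Hausdorff colimit --- is established by exactly the same cut-off-near-$p$ approximation plus uniqueness-of-limits device that the paper uses to kill $N^{\mathrm{H}}$ (and your use of flatness at $p$ to get $C^\infty$ convergence of the truncations is the point that makes this work). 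Your route avoids repeating the van Kampen and local-section machinery of Theorem~\ref{thm2: LG=colimit} and isolates the role of Hausdorffness in a single commutation relation at the base point, at the cost of leaning on the unbased-poset variant for $\Omega G$, which the paper asserts with only a one-line indication of proof. Your treatment of the centrally extended case (the Hausdorff version of Proposition~\ref{prop: central ex of colim}, fed with a local factorization through intervals of the restricted poset obtained by splitting the one partition-of-unity factor that straddles $p$) matches the paper's, though the paper does not spell out the factorization hypothesis as you do.
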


\begin{proof}
Let $\mathcal I$ be the poset of subintervals of $S$ whose interior does not contain $p$, and
let $N\triangleleft\, {\mathrm{colim}}_{\mathcal I}\,L_IG$ be the normal subgroup generated by commutators of loops whose supports have disjoint interiors.
The proof of Theorem~\ref{thm2: LG=colimit} applies verbatim (using a cover $\{J_j\}_{j=1...n}$ for which the $J_j\cap J_{j+1}$ are intervals for $0<j<n$, $J_0\cap J_n=\{p\}$, and all other intersections empty)
and shows that the map $(\mathrm{colim}_{\mathcal I}\, L_IG)\big/N\,\to\,\Omega G$ is an isomorphism of topological groups.
Since $\Omega G$ is Hausdorff, the natural map
\begin{equation*}
\Big(\underset{I\subset S^1,\,p\;\!\not\in\;\!\mathring I\,\,\,}{\mathrm{colim^{\scriptscriptstyle H}}}\, L_IG\Big)\big/N^{\scriptscriptstyle \mathrm H}\,\to\,\Omega G
\end{equation*}
is therefore an isomorphism, where $N^{\scriptscriptstyle \mathrm H}$ denotes the image of $N$ in $\mathrm{colim}^{\scriptscriptstyle \mathrm H}_{\mathcal I}L_IG$.

We wish to show that $N^{\scriptscriptstyle \mathrm H}$ is trivial.
Let $\gamma$ and $\delta$ be two loops whose supports have disjoint interiors. 
Write $S\setminus \{p\}$ as an increasing union of closed intervals $I_i\subset S$, and write
\[
\gamma=\lim \gamma_i\qquad \delta=\lim \delta_i,
\]
with $\mathrm{supp}(\gamma_i)\subset I_i$, $\mathrm{supp}(\delta_i)\subset I_i$, and $\mathrm{supp}(\gamma_i)\cap\mathrm{supp}(\delta_i)=\emptyset$.
The commutator $[\gamma_i,\delta_i]$ is trivial in $L_{I_i}G$, and therefore in ${\mathrm{colim}}_{\mathcal I}\,L_IG$.
By uniqueness of limits (this is where we use Hausdorffness\footnote{In the absence of the Hausdorffness condition, we could only deduce $[\gamma,\delta] \in\lim [\gamma_i,\delta_i]= \text{closure}(\{e\})$.}), it follows that $[\gamma,\delta] = [\lim \gamma_i,\lim \delta_i] = \lim [\gamma_i,\delta_i] = \lim e = e$ in $\mathrm{colim}^{\scriptscriptstyle \mathrm H}_{\mathcal I}L_IG$.
This show that $N^{\scriptscriptstyle \mathrm H}$ is the trivial group, and that the first map in \eqref{eq: based loop gps as colimit} is an isomorphism.

Proposition~\ref{prop: central ex of colim} is stated in the category of topological groups, but it also holds in the category of Hausdorff topological groups (with identical proof: just replace every occurrence of $\mathrm{colim}$ by $\mathrm{colim}^{\scriptscriptstyle \mathrm H}$).
The first isomorphism in \eqref{eq: based loop gps as colimit} therefore implies the second one.
\end{proof}

Similarly, letting $\widetilde{\Omega G}^k\!$ be the pullback of $\widetilde{LG}^k\!$ along the inclusion $\Omega G\to LG$, the natural map
\begin{equation}\label{eq: based loop gps as colimit -- BIS}
\underset{I\subset S^1,\,p\;\!\not\in\;\!\mathring I\,\,\,}{\mathrm{colim^{\scriptscriptstyle H}}} \widetilde{L_IG}^k\,\,\to\,\,\widetilde{\Omega G}^k
\end{equation}
is an isomorphism of topological groups.
The proof that this map is an isomorphism is identical to that of the second isomorphism in \eqref{eq: based loop gps as colimit}.

\subsection{Diffeomorphism groups}\label{sec:Diffeomorphism groups}

The material in this section is largely parallel to the one in the previous section, with one notable difference.
Whereas conjugating by a loop never increases the support, conjugating by a diffeomorphism does typically increase supports.
This introduces a number of small subtleties.

Recall that we write $S^1:=\{z\in\C:|z|=1\}$ for the standard circle, and $S$ for a manifold diffeomorphic to $S^1$.
All our circles are assumed oriented.

\subsubsection{$\Diff(S^1)$ and its universal cover}

Given an interval $I$, we write $\Diff_0(I)\subset \Diff(I)$ for the group of diffeomorphisms  of $I$ that are tangent up to infinite order to the identity map at the two boundary points.
If $I$ is a subinterval of a circle $S$, then this group can be equivalently described as the subgroup $\Diff_0(I)\subset \Diff_+(S)$ of diffeomorphisms with support in $I$.

\begin{theorem}\label{thm: Diff=colimit}
Let $S$ be a circle, and
let $\tilde{\Diff_+}(S)$ be the universal cover of the group of orientation preserving diffeomorphisms of $S$.
Then the natural map
\[
{\mathrm{colim}}_{I\subset S}\, \Diff_0(I)\,\to\,\tilde{\Diff_+}(S)
\]
is an isomorphism of topological groups.
\end{theorem}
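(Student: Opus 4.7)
The plan is to adapt the proof of Theorem~\ref{thm2: LG=colimit} to diffeomorphisms and then deduce Theorem~\ref{thm: Diff=colimit} by the same quotient-is-trivial argument that derived Theorem~\ref{thm: LG=colimit} from Theorem~\ref{thm2: LG=colimit}. Concretely, I would first prove the strengthened statement: for any collection $\mathcal{I}$ of subintervals of $S$ whose interiors cover $S$ and which is closed under taking subintervals, the quotient $(\mathrm{colim}_{\mathcal{I}}\,\Diff_0(I))/N \to \tilde{\Diff_+}(S)$ is an isomorphism of topological groups, where $N$ is the normal subgroup generated by commutators $[\phi,\psi]$ of diffeomorphisms with disjoint supports. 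The canonical maps $\Diff_0(I) \to \Diff_+(S)$ lift uniquely to $\tilde{\Diff_+}(S)$ because each $\Diff_0(I)$ is contractible. Theorem~\ref{thm: Diff=colimit} then follows on taking $\mathcal{I}$ to be all subintervals: any two diffeomorphisms with disjoint supports lie in a common interval, so their commutator is already trivial there.

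The skeleton mirrors the loop-group proof. I would first establish the analog of Lemma~\ref{lem: LG is genr'ted}: whenever the interiors of intervals $\{I_i\}$ cover $S$, the subgroups $\Diff_0(I_i)$ generate $\Diff_+(S)$. This is obtained via a partition-of-unity factorization on the Lie algebra $\mathrm{Vect}(S)$: any diffeomorphism sufficiently close to the identity equals $\exp(X)$ for a small vector field $X$, and using a partition of unity $\{\phi_i\}$ subordinate to $\{I_i\}$ one factors $\exp(X) = \prod_i \exp(\phi_i X)$ (the ordering is handled as in~\eqref{eq: part of 1 for LG}). This same factorization depends continuously on $X$, so it simultaneously provides a continuous local section of $\mathrm{colim}_{\mathcal{I}}\,\Diff_0(I) \to \tilde{\Diff_+}(S)$ near the identity. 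Since $\tilde{\Diff_+}(S)$ is simply connected by construction, Lemma~\ref{lem: vK diagr EZ} applies and reduces any relation in the kernel to a formal consequence of length-$3$ relations among elements of a neighbourhood $\mathcal{U}$, hence of length-$3n$ relations among elements of the $\Diff_0(J_j)$ for a fine cover $\{J_j\}_{j=1,\ldots,n}$. The endgame then reorganizes each such relation by conjugation, exactly as in the $\alpha_i,\beta_i,\chi_k$ manipulation at the end of the proof of Theorem~\ref{thm2: LG=colimit}.

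The main obstacle is the one flagged at the start of Section~\ref{sec:Diffeomorphism groups}: conjugation in $\Diff_+(S)$ moves supports, since $\mathrm{supp}(\phi\psi\phi^{-1}) = \phi(\mathrm{supp}(\psi))$, so the analog of identity~\eqref{eq: conj loops+} is not automatic. Fortunately, because $\phi$ fixes the complement of its support, one has $\phi(\mathrm{supp}(\psi)) \subseteq \mathrm{supp}(\phi) \cup \mathrm{supp}(\psi)$, so iterated conjugates $\delta_1 \ldots \delta_s \gamma \delta_s^{-1}\ldots \delta_1^{-1}$ of $\gamma \in \Diff_0(J_k)$ by $\delta_i \in \Diff_0(J_{j_i})$ still have support in $J_{j_1} \cup \ldots \cup J_{j_s} \cup J_k$. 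By choosing the cover $\{J_j\}$ fine enough that all such short unions lie in $\mathcal{I}$, one ensures the conjugates represent well-defined elements of $\mathrm{colim}_{\mathcal{I}}\,\Diff_0(I)$, and the conjugation identity then holds modulo $N$ (using that conjugation by a diffeomorphism whose support is disjoint from that of $\gamma$ is trivial modulo $N$). With this adjustment, the rest of the argument from the proof of Theorem~\ref{thm2: LG=colimit} goes through essentially verbatim.
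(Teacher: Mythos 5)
Your overall architecture is exactly the paper's: prove the strengthened statement with the normal subgroup $N$ generated by commutators of disjointly supported diffeomorphisms, establish a continuous local section and surjectivity, reduce injectivity to length-$3$ relations via Lemma~\ref{lem: vK diagr EZ}, and then kill $N$ when $\mathcal I$ is the full poset of subintervals. However, the two places where the diffeomorphism case genuinely differs from the loop group case are not handled correctly. The first is the factorization step. You propose to write a diffeomorphism near the identity as $\exp(X)$ and to factor $\exp(X)=\prod_i\exp(\phi_i X)$ using a partition of unity. This fails twice over: the exponential map $\mathrm{Vect}(S^1)\to\Diff_+(S^1)$ is not locally surjective (there are diffeomorphisms arbitrarily close to the identity that are not the time-one map of any flow), and even for $\varphi=\exp(X)$ the vector fields $\phi_iX$ do not commute where the $\phi_i$ overlap, so $\prod_i\exp(\phi_iX)\neq\exp\bigl(\sum_i\phi_iX\bigr)$ in any ordering. (In the loop group factorization~\eqref{eq: part of 1 for LG} the factors commute because at each point of $S^1$ they all lie on a common one-parameter subgroup of $G$; nothing analogous holds for flows.) The paper replaces this by Lemma~\ref{lem: part of un}, a ``partition of unity for diffeomorphisms'': any $\varphi$ of displacement less than $d$ is factored as $\varphi_-\varphi_+$ with controlled supports by solving an explicit functional equation, and Corollary~\ref{cor: part of un on S^1} iterates this. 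Some such construction is indispensable both for generation (Lemma~\ref{lem: diff is genr'ted}) and for the continuous local section.

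The second gap is in the conjugation step. Your bound $\mathrm{supp}(\phi\psi\phi^{-1})\subseteq\mathrm{supp}(\phi)\cup\mathrm{supp}(\psi)$ is correct but too weak: after conjugating $\gamma\in\Diff_0(J_j)$ by an element of $\Diff_0(J_{j+1})$ the support may reach into $J_{j+1}$, whence a subsequent conjugator in $\Diff_0(J_{j+2})$ can no longer be discarded as disjointly supported, and the union of supports cascades around the circle. No refinement of the cover prevents this, since the words $w_i$ produced by Lemma~\ref{lem: perm free gp} may contain one letter from every $J_k$; once the union wraps around, the conjugate no longer lies in any $\Diff_0(I)$ with $I\in\mathcal I$ and the right-hand side of your conjugation identity is undefined. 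The paper's fix is quantitative rather than set-theoretic: the conjugators are taken in $\Diff_0^{<d}(J_k)$, so each relevant conjugation enlarges the support by at most $d$ (not by a whole $J_k$); since each $\alpha_j$ occurs at most once in each $w_i$, at most three conjugators on each side are relevant, so the support stays within the $3d$-neighbourhood of $J_j$, which lies in $J_{j-1}\cup J_j\cup J_{j+1}\in\mathcal I$ because the cover is chosen with $\mathrm{dist}(J_j,J_{j+2})>6d$. You correctly located both difficulties, but the arguments you supply for them do not close.
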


The Lie algebra $\mathfrak X(S)$ of smooth vector fields on $S$ has a well known central extension by $i\R$, constructed as follows.
Upon identifying $S$ with $S^1$, it can be described as the central extension associated to the $2$-cocycle $(f\tfrac{\partial}{\partial z},g\tfrac{\partial}{\partial z})\mapsto 
\tfrac1{12}\int_{S^1}\tfrac{\partial f}{\partial z}(z)\,\tfrac{\partial^2 g}{\partial z^2}(z)\,\tfrac{dz}{2\pi i}$.
In terms of the topological basis $\ell_n:=-z^{n+1}\tfrac{\partial}{\partial z}$ of the complexified Lie algebra, this cocycle can also be described by the formula:\footnote{The cocycle \eqref{vircoc} is cohomologous to $(\ell_m,\ell_n) \mapsto \tfrac{1}{12}\cdot m^3 \delta_{m+n,0}$, but the former is usually preferred because it is $\mathit{PSL}(2,\R)$-invariant.}
\begin{equation}\label{vircoc}
(\ell_m,\ell_n) \mapsto \tfrac1{12}(m^3-m)\delta_{m+n,0}.
\end{equation}
The corresponding central extension of $\mathfrak X(S^1)$ is a universal central extension in the category of topological Lie algebras \cite{MR0245035}.
Since $\mathfrak X(S)$ and $\mathfrak X(S^1)$ are isomorphic as topological Lie algebras, the former also admits a universal central extension by~$i\R$ (universal central extensions are well defined up to unique isomorphism).

Finally, the universal central extension of $\mathfrak X(S)$ integrates to a central extension $0\to\R\to\Diff_+^\R(S)\to\Diff_+(S)\to 0$,
called the Virasoro-Bott group of $S$ \cite[Chapt II.2]{MR2456522}\cite{MR0488080}\cite{MR1674631} (where we have identified $i\R$ with $\R$ for notational convenience).
Let $\Diff_+^{\R\times \Z}(S)$ be the universal cover of the Virasoro-Bott group.

\begin{theorem}\label{thm: Diff=colimit -- BIS}
Let $S$ be a circle.
Given an interval $I\subset S$, let us denote by $\Diff_0^\R(I)$ the pullback of the central extension $\Diff_+^\R(S) \to \Diff_+(S)$
along the inclusion $\Diff_0(I)\to \Diff_+(S)$.
Then the natural map
\[
{\mathrm{colim}}_{I\subset S}\,\, \Diff_0^\R(I)\,\to\,\Diff_+^{\R\times \Z}(S)
\]
is an isomorphism of topological groups.
\end{theorem}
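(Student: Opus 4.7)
The plan is to deduce Theorem~\ref{thm: Diff=colimit -- BIS} from Theorem~\ref{thm: Diff=colimit} by applying Proposition~\ref{prop: central ex of colim}, exactly as Theorem~\ref{thm: LG=colimit -- BIS} was deduced from Theorem~\ref{thm: LG=colimit}. I would take the central extension $0 \to \R \to \Diff_+^{\R\times\Z}(S) \to \tilde{\Diff_+}(S) \to 0$ and feed it into Proposition~\ref{prop: central ex of colim} with the colimit presentation of $\tilde{\Diff_+}(S)$ furnished by Theorem~\ref{thm: Diff=colimit}.

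To do this, I need to check the local factorisation hypothesis: a neighbourhood $U \subset \tilde{\Diff_+}(S)$ of the identity, finitely many subintervals $J_1,\ldots,J_n \subset S$, and continuous maps $f_i: U \to \Diff_0(J_i)$ with $f_i(\mathrm{id}) = \mathrm{id}$ and $f_1(\varphi)\cdots f_n(\varphi) = \varphi$ for all $\varphi \in U$. This is exactly the ingredient used to build the continuous local section in the proof of Theorem~\ref{thm: Diff=colimit}: it is obtained by choosing a continuous family of logarithms $\varphi \mapsto X_\varphi \in \mathfrak{X}(S)$ defined on $U$, decomposing $X_\varphi$ via a partition of unity $\{\phi_i\}$ subordinate to $\{J_i\}$, and combining the flows of the $\phi_i X_\varphi$ into a product equal to $\varphi$.

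Proposition~\ref{prop: central ex of colim} then immediately yields an isomorphism ${\mathrm{colim}}_{I\subset S}\, \widetilde{\Diff_0(I)} \cong \Diff_+^{\R\times\Z}(S)$, where $\widetilde{\Diff_0(I)}$ denotes the pullback of $\Diff_+^{\R\times\Z}(S) \to \tilde{\Diff_+}(S)$ along the embedding $\Diff_0(I) \hookrightarrow \tilde{\Diff_+}(S)$ (recall that $\Diff_0(I)$ is contractible, so it sits canonically inside the universal cover). It remains to identify $\widetilde{\Diff_0(I)}$ with $\Diff_0^\R(I)$. Because the commutative square
\[
\begin{array}{ccc}
\Diff_+^{\R\times\Z}(S) & \longrightarrow & \tilde{\Diff_+}(S) \\
\downarrow & & \downarrow \\
\Diff_+^\R(S) & \longrightarrow & \Diff_+(S)
\end{array}
\]
has both vertical arrows given by quotienting by the same subgroup $\Z$ and induces the identity on the central $\R$ of the horizontal extensions, the two pullbacks of the horizontal rows along $\Diff_0(I)$ are canonically isomorphic as central extensions of $\Diff_0(I)$ by $\R$.

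The main obstacle is the diffeomorphism factorisation. In the loop group case, pointwise commutativity after exponentiation makes the partition-of-unity decomposition~\eqref{eq: part of 1 for LG} exact essentially for free; in the diffeomorphism case, composing flows of partitioned vector fields only produces a diffeomorphism close to $\varphi$, not $\varphi$ itself, and a more delicate construction (for example an iterative correction, or a carefully ordered product of flows) is needed. This gap is bridged in the proof of Theorem~\ref{thm: Diff=colimit}, and once it is closed the present theorem is a purely formal consequence of Proposition~\ref{prop: central ex of colim}.
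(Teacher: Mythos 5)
Your proposal is correct and follows essentially the same route as the paper: the paper's proof of Theorem~\ref{thm: Diff=colimit -- BIS} consists precisely of observing that the continuous factorisation maps of Lemma~\ref{lem: part of un}(ii) (obtained by the functional-equation construction, not by flowing a partitioned logarithm vector field --- your own caveat about that sketch is well taken, and the needed exact, continuous factorisation is exactly what that lemma and Corollary~\ref{cor: part of un on S^1} supply) verify the hypothesis of Proposition~\ref{prop: central ex of colim}, and then invoking Theorem~\ref{thm: Diff=colimit}. Your additional verification that the pullback of $\Diff_+^{\R\times\Z}(S)\to\tilde{\Diff_+}(S)$ along $\Diff_0(I)\hookrightarrow\tilde{\Diff_+}(S)$ agrees with $\Diff_0^\R(I)$ as a central extension by $\R$ is correct and makes explicit a point the paper leaves implicit.
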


\begin{remark}\label{rem: Bott-Virasoro non-trivial?}
The central extension $\Diff_+^{\R}(S^1) \to \Diff_+(S^1)$ is non-trivial not only as an extension of Lie groups, but also as an extension of abstract groups.
To see this, one can argue as follows:

Let $\mathit{PSL}(2,\R)^{(n)}$ be the subgroup of $\Diff_+(S^1)$ corresponding to the subalgebra $\mathrm{Span}_{\R}\{i\ell_0,\ell_n-\ell_{-n},i\ell_n+i\ell_{-n}\}$ of $\mathfrak X(S^1)$.
That Lie algebra lifts to a Lie algebra $\mathrm{Span}_{\R}\{(i\ell_0,\tfrac{i}{12}{\cdot}\tfrac{n^2-1}{2}),(\ell_n-\ell_{-n},0),(i\ell_n+i\ell_{-n},0)\}$ in the central extension of $\mathfrak X(S^1)$.
The latter integrates to a subgroup of the Virasoro-Bott group isomorphic to $\mathit{PSL}(2,\R)^{(n)}$:
\begin{equation}\label{eq: lift of PSL2}
\tikzmath{
\node (A) at (0,0) {$\mathit{PSL}(2,\R)^{(n)}$};
\node (B) at (3,1.5) {$\Diff_+^{\R}(S^1)$};
\node (C) at (3,0) {$\Diff_+(S^1)$};
\draw [->, dashed] (A) --node[xshift=-3, above]{$\scriptstyle s_n$} (B);
\draw [->] (B) -- (C);
\draw [->] (A) -- (C);
}
\end{equation}
Moreover, since $\mathit{PSL}(2,\R)^{(n)}$ has trivial abelianization\footnote{
The commutators
$\big[(\begin{smallmatrix}\lambda & 0 \\ 0 & 1/\lambda\end{smallmatrix}),
(\begin{smallmatrix}1 & a \\ 0 & 1\end{smallmatrix})
\big]$
and
$\big[(\begin{smallmatrix}\lambda & 0 \\ 0 & 1/\lambda\end{smallmatrix}),
(\begin{smallmatrix}1 & 0 \\ b & 1\end{smallmatrix})
\big]$
generate a neighbourhood of the identity in $\mathit{PSL}(2,\R)$.
}, the lift $s_n$ is unique (without any continuity assumptions).

Assume by contradiction that there exists a section $s:\Diff_+(S)\to \Diff_+^{\R}(S)$ which is a group homomorphism, possibly discontinuous.
By uniqueness of the lift \eqref{eq: lift of PSL2}, we would then have $s|_{\mathit{PSL}(2,\R)^{(n)}}=s_n$,
from which it would that follow that
\[
s_n|_{\mathit{PSL}(2,\R)^{(n)}\cap\mathit{PSL}(2,\R)^{(m)}}\,=\,s_m|_{\mathit{PSL}(2,\R)^{(n)}\cap\mathit{PSL}(2,\R)^{(m)}}.
\]
But $\mathit{PSL}(2,\R)^{(n)}\cap\mathit{PSL}(2,\R)^{(m)}$ is the circle subgroup with Lie algebra $\mathrm{Span}_{\R}\{i\ell_0\}$, and one can easily check at the Lie algebra level that
$s_n|_{S^1}\not=s_m|_{S^1}$.

A similar argument shows that the central extension $\Diff_+^{\R\times \Z}(S) \to \tilde{\Diff_+}(S)$ remains non-trivial when viewed as an extension of abstract groups.
\end{remark}

The next remark provides an answer to a question by Vaughan Jones:

\begin{remark}
Theorem~\ref{thm: Diff=colimit -- BIS} 
can be used to show that the central extension $\Diff_+^{\R}(S)\to \Diff_+(S)$ remains non-trivial upon restricting it to a subgroup $\Diff_0(I)$.
We first note that, since $\Diff_0(I)$ is perfect \cite{MR1882783}, for any $J\subset I\subset S$, the inclusion map $\iota^\R_{\scriptscriptstyle JI}:\Diff_0^\R(J)\to \Diff_0^\R(I)$ is uniquely characterized by the fact that $\iota^\R_{\scriptscriptstyle JI}|_{\R}=\mathrm{id}_\R$, and that it covers the inclusion map $\iota_{\scriptscriptstyle JI}:\Diff_0(J)\to \Diff_0(I)$.

Suppose by contradiction that the central extension was trivial: $\Diff_0^\R(I)\cong \R\times \Diff_0(I)$.
Then we would have $\iota^\R_{\scriptscriptstyle JI}=\mathrm{id}_\R\times \iota_{\scriptscriptstyle JI}$,
from which it would follow that
\[
\begin{split}
\Diff_+^{\R\times \Z}(S)\,&\cong\,{\mathrm{colim}}_{I\subset S}\,\, \Diff_0^\R(I)
\\&\cong\,
{\mathrm{colim}}_{I\subset S}\,\, \big(\R\times \Diff_0(I)\big)
\\&\cong\,
\R\times {\mathrm{colim}}_{I\subset S}\,\, \Diff_0(I)\,\cong\,
\R\times \tilde{\Diff_+}(S),
\end{split}
\]
contradicting the non-triviality of the central extension $\Diff_+^{\R\times \Z}(S) \to \tilde{\Diff_+}(S)$.
\end{remark}

The main technical tool in our proofs of Theorems~\ref{thm: Diff=colimit} and~\ref{thm: Diff=colimit -- BIS} is a kind of partitions of unity for diffeomorphisms.
The result is very similar to \cite[Lem.\,3]{MR2078164}.
Let
\[
\Diff^{<d}(\R):=\,\{\varphi\in\Diff(\R):|\varphi(t)-t|<d\}
\]
be the set of diffeomorphisms of displacement smaller than $d$.
Similarly, let $\Diff^{<d}(S^1):=\,\{\varphi\in\Diff(S^1):|\varphi(t)-t|<d\}$.

\begin{lemma}\label{lem: part of un}
{\rm(i)}
There exist continuous maps $\!\tikzmath{\node[scale=.97]{$(\,\,\,)_-$};}\!,\!\tikzmath{\node[scale=.97]{$(\,\,\,)_+$};}\!\!:\Diff^{<d}(\R)\to \Diff^{<d}(\R)$ such that,
for every $\varphi \in \Diff^{<d}(\R)$, we have:
\[
\qquad\quad\varphi\,=\,\varphi_- \varphi_+,\,\quad \mathrm{supp}(\varphi_-)\subset(-\infty,d\,],\quad \mathrm{supp}(\varphi_+)\subset[-d,\infty).
\]
{\rm(ii)}
Let $I_-, I_+\subset S^1$ be two subintervals that cover the standard circle. Assume that each connected component of $I_-\cap I_+$ has length $2d$.
Then there exist continuous maps $\!\tikzmath{\node[scale=.97]{$(\,\,\,)_-$};}\!,\!\tikzmath{\node[scale=.97]{$(\,\,\,)_+$};}\!\!:\Diff^{<d}(S^1)\to \Diff^{<d}(S^1)$ such that, 
for every $\varphi \in \Diff^{<d}(S^1)$, we have:
\[
\qquad\quad\varphi\,=\,\varphi_- \varphi_+,\,\quad \mathrm{supp}(\varphi_-)\subset I_-,\quad \mathrm{supp}(\varphi_+)\subset I_+.
\]
\end{lemma}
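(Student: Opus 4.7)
The plan is to prove part~(i) by constructing $\varphi_+$ as a smooth interpolation from $\mathrm{id}$ on $(-\infty,-d]$ to $\varphi$ on a suitable right-hand ray, and then to set $\varphi_-:=\varphi\circ\varphi_+^{-1}$; part~(ii) will then follow by localisation. The support $\mathrm{supp}(\varphi_+)\subset[-d,\infty)$ is immediate from the first prescription. For the support condition $\mathrm{supp}(\varphi_-)\subset(-\infty,d]$ one needs $\varphi_+=\varphi$ on the set $\{t:\varphi(t)\ge d\}$, which by $|\varphi(t)-t|<d$ is of the form $[\varphi^{-1}(d),\infty)$ with $\varphi^{-1}(d)\in(0,2d)$; so it certainly suffices to arrange $\varphi_+=\varphi$ on the universal ray $[2d,\infty)$, but one must work a bit harder to push the $\varphi_-$-side constant down to exactly $d$.

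The technical heart is making the interpolation a diffeomorphism, continuously in $\varphi$. A naive convex combination $\varphi_+(t)=(1-\rho(t))t+\rho(t)\varphi(t)$, with $\rho$ a smooth monotone cutoff vanishing on $(-\infty,-d]$, can fail to be monotone when $\varphi'$ becomes very small inside the transition zone, since the cross-term $\rho'(t)(\varphi(t)-t)$ is only bounded by $\|\rho'\|_\infty\cdot d$ and may swamp $(1-\rho)+\rho\varphi'$. I plan to bypass this by passing through the linear isotopy $\varphi_s:=(1-s)\mathrm{id}+s\varphi$, $s\in[0,1]$, which is a diffeomorphism throughout because $\varphi'>0$ implies $(1-s)+s\varphi'>0$. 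Its time-dependent generator $V_s$, defined by $V_s(\varphi_s(t))=\varphi(t)-t$, satisfies $|V_s(u)|<d$ pointwise. I will then take $\varphi_+$ to be the time-$1$ map of a cutoff vector field $\beta_s V_s$, where $\beta_s$ is a smooth time-dependent cutoff vanishing on $(-\infty,-d]$ and equal to $1$ on a rightward half-line whose left endpoint is chosen so that the trajectory of $\beta_s V_s$ starting in $\{\beta_s=1\}$ stays there throughout $s\in[0,1]$ (using that it moves by at most $\int_0^1|\beta_s V_s|\,ds<d$); on this invariant region $\beta_s V_s=V_s$, so $\varphi_+=\varphi$ there. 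Continuity of $\varphi_+$ in $\varphi$ is then inherited from the continuous dependence of the flow on the vector field.

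The main obstacle I foresee is squeezing the support constants to \emph{exactly} $d$ rather than to something slightly larger: a purely spatial $\beta$ only yields $\mathrm{supp}(\varphi_-)\subset(-\infty,3d]$, so one really does need $\beta_s$ to be genuinely time-dependent, with its left boundary advancing slightly to the right as $s$ increases, in order to absorb the worst-case flow displacement and keep the invariant region as large as possible. For part~(ii), I cover $S^1=I_-\cup I_+$ with overlap components each of length $2d$; identifying a neighbourhood of each overlap component with a subinterval of $\R$ on which $\varphi$ has displacement $<d$, I apply part~(i) locally on each component and compose the results. The two local decompositions act on disjoint arcs of $S^1$, so they combine into a global factorisation $\varphi=\varphi_-\varphi_+$ with supports in $I_-$ and $I_+$ respectively, and continuity in $\varphi$ is inherited from part~(i).
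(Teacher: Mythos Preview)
Your flow-based strategy is a workable alternative, but it is genuinely different from the paper's argument, which is considerably shorter. The paper fixes a smooth monotone $\sigma:\R\to[0,d]$ with $\sigma\equiv 0$ on $(-\infty,-d]$, $\sigma\equiv d$ on $[d,\infty)$, and $\sigma'<1$, and defines $\varphi_+$ \emph{implicitly} by
\[
\frac{\varphi_+(t)-t}{\varphi(t)-t}\;=\;\frac{\sigma(\varphi_+(t))}{d}.
\]
The key trick is that the interpolation parameter depends on the \emph{output} $\varphi_+(t)$ rather than on $t$ or on a flow time. This immediately forces $\varphi_+(t)=t$ whenever $\varphi_+(t)\le -d$ and $\varphi_+(t)=\varphi(t)$ whenever $\varphi_+(t)\ge d$, so the exact support constants $[-d,\infty)$ and $(-\infty,d]$ drop out with no further effort. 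The condition $\sigma'<1$ makes the implicit equation uniquely solvable and $\varphi_+$ a diffeomorphism; and since $\varphi_+(t)-t$ is a $[0,1]$-multiple of $\varphi(t)-t$, both $\varphi_+$ and $\varphi_-=\varphi\,\varphi_+^{-1}$ automatically land in $\Diff^{<d}(\R)$.

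Your route via the linear isotopy and the cutoff flow of $\beta_s V_s$ can be pushed through, but two points in your sketch deserve attention. First, the time-dependent $\beta_s$ is not really needed: you do not require the whole region $\{\beta=1\}$ to be flow-invariant, only that the specific trajectories $\varphi_s(t)$ with $\varphi(t)\ge d$ stay there; since such $t$ satisfy $t>0$ and $\varphi(t)\ge d>0$, these trajectories remain in $(0,\infty)$, so a purely spatial $\beta$ with $\{\beta=1\}=[0,\infty)$ already gives $\varphi_+^{-1}=\varphi^{-1}$ on $[d,\infty)$ exactly. Second, and more seriously, you never verify that $\varphi_-\in\Diff^{<d}(\R)$, which the lemma requires and which the paper uses downstream. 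This does hold for your construction, but it needs an ODE comparison argument to show that $\varphi_+(y)$ always lies between $y$ and $\varphi(y)$ (so that $|\varphi(y)-\varphi_+(y)|\le|\varphi(y)-y|<d$); this is not automatic from $|\beta_s V_s|<d$ alone. The paper's implicit equation gives this inclusion for free.
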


\begin{proof}
We only prove the fist part of the lemma
(the second part is completely analogous).
Let $\sigma:\R\to [0,d]$ be a monotonic function such that $\sigma(t)=0$ on $(-\infty,-d]$, $\sigma(t)=d$ on $[d,\infty)$, and $\sigma'(t)<1$.
Given $\varphi \in \Diff^{<d}(\R)$, we let $\varphi_+$ be the unique solution of the functional equation
$\tfrac{\varphi_+(t)-t}{\varphi(t)-t}=\tfrac{\sigma(\varphi_+(t))}d$:
\[
\tikzmath[scale=1.8]{
\draw(-2.1,0) -- (4,0)node[right, scale=.9, yshift=-2.5]{$\R$}(-2.1,1) -- (4,1)node[right, scale=.9, yshift=-2.5]{$\R$};
\node[scale=.8] at (3.95,.47) {$\varphi$};
\foreach \P in {{(-1.7,0) -- (-1.95,1)},{(-1.25,0) -- (-1.67,1)},{(-.7,0) -- (-1.4,1)},{(-.2,0) -- (-1.1,1)},{(0.05,0) -- (-.5,1)},{(.3,0) -- (0.1,1)},{(.75,0) -- (.4,1)},{(1.25,0) -- (.7,1)},{(1.5,0) -- (1.25,1)},{(1.7,0) -- (1.7,1)},{(1.83,0) -- (2.06,1)},{(2.05,0) -- (2.4,1)},{(2.3,0) -- (2.8,1)},{(2.7,0) -- (3.2,1)},{(3.15,0) -- (3.5,1)},{(3.65,0) -- (3.75,1)}}
{\draw[-stealth, shorten > = 1.5, shorten < = 2] \P;}
\draw (0,.02) -- (0,-.05)node[below, yshift=-.5]{$\scriptstyle t=-d$};
\draw (2,.02) -- (2,-.05)node[below, yshift=-.5]{$\scriptstyle t=d$};
\draw[fill=white] (1.25,0) circle (.02); 
\node[below, scale=1.2, yshift=.7] at (1.25,-.05) {$\scriptstyle t$};
\draw[fill=white] (.7,1) circle (.02); 
\draw[thick, dashed] (-2.1,0) -- (0,0) to[out=0,in=-180] (2,1) -- (4,1);
\draw[densely dashed, -stealth, shorten > =2] (.982,.487) -- (.982,1);
\node[above, scale=.8, yshift=.7, xshift=-1.2] at (.7,1) {$\scriptstyle \varphi(t)$};
\draw[fill=white] (.982,1) circle (.02) node[above, xshift=1.2, scale=.8, yshift=.7] {$\scriptstyle \varphi_{\hspace{-.2mm}+}\hspace{-.2mm}(t)$};
\node[scale=.9, rotate=35, fill=white, inner ysep=1.5] at (.62,.35) {$\scriptstyle \sigma(t)$};
}
\]
It has support in $[-d,\infty)$, and satisfies $\varphi_+^{-1}(t)=\varphi^{-1}(t)$ for every $t\ge d$.
The diffeomorphism $\varphi_-:=\varphi\,\varphi_+^{-1}$ has displacement smaller than $d$, and support in $(-\infty,d\,]$.
\end{proof}

Given a subinterval $I$ of $\R$ or $S^1$, let $\Diff^{<d}_0(I) := \Diff_0(I)\cap \Diff^{<d}(I)$.

\begin{corollary}\label{cor: part of un on S^1}
Let $\{I_i\}_{i\,=\,1\ldots n}$ be a cover of $S^1$ such that each intersection $I_i\cap I_{i+1}$ (cyclic numbering) has length $2d$ for some $d$, and the other intersections are empty:
\[
\tikzmath{
\draw circle (1);
\draw (-5:1.1) arc (-5:50:1.1);
\draw (40:1.2) arc (40:95:1.2);
\draw (90-5:1.1) arc (90-5:90+50:1.1);
\draw (90+40:1.2) arc (90+40:90+95:1.2);
\draw (-5:-1.1) arc (-5:50:-1.1);
\draw (40:-1.2) arc (40:95:-1.2);
\draw (90+40:-1.2) arc (90+40:90+95:-1.2);
\foreach \r in {1,2,3,4,5}
{\node at (\r*45-22.5:1.5) {$I_\r$};}
\node at (8*45-22.5:1.5) {$I_n$};
\node at (-90-8.5:1.5) {$\cdot$};
\node at (-90-8.5-8.5:1.5) {$\cdot$};
\node at (-90-8.5-8.5-8.5:1.5) {$\cdot$};
}
\]
Then any element $\varphi\in\Diff^{<d}(S^1)$ can be factored as
$\varphi=\varphi_1\ldots  \varphi_n$, with $\varphi_j\in \Diff_0^{<d}(I_j)$.
Moreover, the factorisation can be chosen to depend continuously on~$\varphi$.
\end{corollary}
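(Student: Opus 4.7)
The plan is to extract one factor at a time by iterated application of Lemma~\ref{lem: part of un}. The first step would use part (ii) of that lemma applied to the pair $(I_1,\, I_2 \cup \cdots \cup I_n)$: this pair covers $S^1$, and its intersection $(I_1 \cap I_2) \sqcup (I_1 \cap I_n)$ has two connected components, each of length exactly $2d$ as required. This produces a continuous factorisation $\varphi = \varphi_1\, \psi_1$ with $\varphi_1 \in \Diff_0^{<d}(I_1)$ and $\psi_1 \in \Diff^{<d}(S^1)$ supported in the subinterval $J_1 := I_2 \cup \cdots \cup I_n \subsetneq S^1$.

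For the inductive step, I would assume I have produced $\psi_{j-1} \in \Diff^{<d}(S^1)$ with support in $J_{j-1} := I_j \cup \cdots \cup I_n$. Identifying $J_{j-1}$ with an open subinterval of $\R$, with coordinates chosen so that the midpoint of $I_j \cap I_{j+1}$ sits at $0$, I would extend $\psi_{j-1}$ by the identity to an element of $\Diff^{<d}(\R)$ and apply Lemma~\ref{lem: part of un}(i). This yields a continuous factorisation $\psi_{j-1} = \varphi_j\, \psi_j$ with $\mathrm{supp}(\varphi_j) \subset (-\infty, d] \cap J_{j-1} = I_j$ and $\mathrm{supp}(\psi_j) \subset [-d, \infty) \cap J_{j-1} = J_j$, both of displacement $<d$. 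After $n-1$ such iterations the final residue $\psi_{n-1}$ is already supported in $I_n$, and we set $\varphi_n := \psi_{n-1}$.

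Continuity of the overall assignment $\varphi \mapsto (\varphi_1, \ldots, \varphi_n)$ is automatic because each step is a continuous operation, as guaranteed by the continuity clauses of Lemma~\ref{lem: part of un}. The only real thing to track is the geometric bookkeeping: that after peeling off $\varphi_j$, the residue genuinely has support inside the shorter interval $J_j$ and still satisfies the displacement bound $<d$. This is the mild obstacle, but it is essentially handed to us by the precise support and displacement conclusions of Lemma~\ref{lem: part of un}(i)-(ii), so the induction goes through without difficulty.
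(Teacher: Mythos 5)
Your proposal is correct and is essentially the paper's own proof: the paper likewise applies Lemma~\ref{lem: part of un}(ii) once to split off $\varphi_1\in\Diff_0^{<d}(I_1)$, leaving a residue supported in $I_2\cup\dots\cup I_n$, and then peels off $\varphi_2,\dots,\varphi_n$ by iterating Lemma~\ref{lem: part of un}(i), with continuity coming from the continuity of the maps $(\,\,)_\pm$. The only detail you wave at --- that each residue stays supported in $J_{j}$ rather than spilling outside it --- does follow in one line from the factorisation identity together with the half-line support conclusions of part (i), so the induction closes as you claim.
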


\begin{proof}
Apply Lemma \ref{lem: part of un}(ii) to write $\varphi=\varphi_1\varphi_+$ with $\varphi_1\in\Diff^{<d}_0(I_1)$ and $\varphi_+\in\Diff^{<d}_0(I_2\cup\ldots\cup I_n)$.
Then use Lemma \ref{lem: part of un}(i) to write $\varphi_+=\varphi_2\ldots \varphi_n$ with $\varphi_i\in\Diff^{<d}_0(I_i)$, for $2\le i\le n$.
\end{proof}

\begin{lemma}\label{lem: diff is genr'ted}
{\rm(i)}
Let $I$ be an interval and let $\mathcal I=\{I_i\}$ be a finite collection of subintervals whose interiors cover that of $I$.
Then the subgroups $\Diff_0(I_i)$ generate $\Diff_0(I)$.\\
{\rm(ii)}
Let $S$ be a circle and let $\mathcal I=\{I_i\}$ be a collection of subintervals whose interiors cover $S$.
Then the subgroups $\Diff_0(I_i)$ generate $\tilde \Diff_+(S)$.
\end{lemma}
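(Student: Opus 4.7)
The plan is to follow the strategy of Lemma~\ref{lem: LG is genr'ted}, replacing the partition of unity for loops by the diffeomorphism partition of unity provided by Lemma~\ref{lem: part of un} and Corollary~\ref{cor: part of un on S^1}. In each case, I will exhibit an open neighbourhood of the identity contained in the subgroup generated by $\{\Diff_0(I_i)\}$, and then conclude via the standard fact that an open neighbourhood of the identity generates any connected topological group.

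For part~(ii), the compactness of $S$ together with the Lebesgue number lemma applied to the cover $\{\mathring{I}_i\}$ allow me, for all sufficiently small $d>0$, to produce a cover $\{J_j\}_{j=1,\ldots,n}$ of $S$ of the form required by Corollary~\ref{cor: part of un on S^1}---consecutive $J_j$'s overlap in arcs of length exactly $2d$ (cyclic numbering) and all other intersections are empty---with each $J_j$ contained in some $I_{i(j)}\in\mathcal I$. Corollary~\ref{cor: part of un on S^1} then gives a continuous factorisation $\varphi=\varphi_1\cdots\varphi_n$ on $\Diff^{<d}(S^1)$ with $\varphi_j\in\Diff_0(J_j)\subseteq\Diff_0(I_{i(j)})$. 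Each $\Diff_0(I_i)$ is contractible and hence lifts uniquely to a subgroup of $\tilde\Diff_+(S)$ containing the identity; multiplying the lifts of the $\varphi_j$'s in $\tilde\Diff_+(S)$ defines a continuous map $s:\Diff^{<d}(S^1)\to\tilde\Diff_+(S)$ with $s(e)=e$ covering the inclusion $\Diff^{<d}(S^1)\hookrightarrow\Diff_+(S)$. Since a continuous section of a covering map over an open set is an open embedding, the image of $s$ is an open neighbourhood of the identity in $\tilde\Diff_+(S)$ sitting inside the subgroup generated by the $\Diff_0(I_i)$, which together with the connectedness of $\tilde\Diff_+(S)$ finishes part~(ii).

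Part~(i) is analogous and slightly easier: since diffeomorphisms in $\Diff_0(I)$ extend by the identity to $\Diff(\R)$, iterated application of Lemma~\ref{lem: part of un}(i) to a linearly-ordered refinement of $\mathcal I$ produces similar continuous factorisations of sufficiently small-displacement elements of $\Diff_0(I)$ as products of elements in the $\Diff_0(I_i)$. Since $\Diff_0(I)$ is itself connected (indeed contractible), no covering-space step is needed. The main subtlety, present only in (ii), is the lifting step; it is automatic once one observes that the factorisation is continuous and sends $e$ to $e$, which forces the product of lifts to trace out a connected set through the identity and hence to remain on the identity sheet of the covering.
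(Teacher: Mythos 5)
Your proof is correct and follows essentially the same route as the paper: refine the cover so that consecutive intervals overlap in arcs of length $2d$, use Lemma~\ref{lem: part of un} and Corollary~\ref{cor: part of un on S^1} to factor small-displacement diffeomorphisms into pieces supported in the $J_j\subseteq I_{i(j)}$, and conclude because a connected group ($\tilde\Diff_+(S)$, resp.\ $\Diff_0(I)$) is generated by any neighbourhood of the identity. The only difference is cosmetic: you spell out the universal-cover lifting via the continuous section $s$ with $s(e)=e$, a point the paper leaves implicit by treating the contractible groups $\Diff_0(I_i)$ directly as subgroups of $\tilde\Diff_+(S)$ and writing an arbitrary element as a product of small-displacement diffeomorphisms.
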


\begin{proof} 
We only prove the second statement (the first one is completely analogous).
Assume without loss of generality that $S=S^1$.
Let $\{J_j\}_{j\,=\,1\ldots n}$ be a refinement of our cover such that each $J_j\cap J_{j+1}$ has length $2d$, for some constant $d>0$, and the other intersections are empty (cyclic numbering).
Write $\varphi$ as a product $\varphi_1\ldots \varphi_m$ of diffeomorphisms of displacement smaller than $d$.
Now apply Corollary~\ref{cor: part of un on S^1} to each $\varphi_i$ to rewrite it as a product $\varphi_i=\varphi_{i,1}\ldots \varphi_{i,n}$, with $\varphi_{i,j}\in\Diff_0(J_j)$.
\end{proof}

Let $\mathcal I$ be a collection of intervals in $S^1$ whose interiors form a cover, and that is closed under taking subintervals.
As in \eqref{eq: LG two ways}, for any $I_1$, $I_2 \in \mathcal I$, the diagram 
\begin{equation}\label{eq: diff two ways}
\tikzmath{
\node (A) at (2,0) {$\Diff_0(I_1\cap I_2)$};
\node (B) at (5.1,.8) {$\Diff_0(I_1)$};
\node (C) at (5.1,-.8) {$\Diff_0(I_2)$};
\node (D) at (8.2,0) {${\mathrm{colim}}_{\mathcal I}\,\Diff_0(I)$};
\draw [->] (A) -- (B);
\draw [->] (A) -- (C);
\draw [<-] (D.north west) + (.2,-.03) --node[above, pos=.45]{$\scriptstyle \iota_1$} (B);
\draw [<-] (D.south west) + (.2,.03) --node[below, pos=.45]{$\scriptstyle \iota_2$} (C);
}
\end{equation}
commutes.
Given a diffeomorphism $\varphi\in\Diff_+(S^1)$ with support in some interval $I\in\mathcal I$, we also write $\varphi$ for its image in ${\mathrm{colim}}_{\mathcal I}\,\Diff_0(I)$.
This element is well-defined by the commutativity of \eqref{eq: diff two ways}.

The following result is a strengthening of Theorem~\ref{thm: Diff=colimit}:

\begin{theorem}\label{thm2: Diff=colimit}
Let $\mathcal I$ be a collection of intervals in $S^1$ whose interiors form a cover, and that is closed under taking subintervals.
Let $N\triangleleft\, {\mathrm{colim}}_{\mathcal I}\,\Diff_0(I)$
be the normal subgroup generated by commutators of diffeomorphisms with disjoint supports.
Then the natural map
\begin{equation}\label{eq: diff is a colimit+++}
\big({\mathrm{colim}}_{\mathcal I}\, \Diff_0(I)\big)/N\,\to\,\tilde{\Diff_+}(S^1)
\end{equation}
to the universal cover of $\Diff_+(S^1)$ is an isomorphism of topological groups.
\end{theorem}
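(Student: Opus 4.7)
The plan is to mirror the proof of Theorem \ref{thm2: LG=colimit}, adapting it to the fact that conjugation by a diffeomorphism---unlike conjugation by a loop---generally enlarges supports. As in the loop case, I would begin by choosing a cyclic cover $\{J_j\}_{j=1}^n$ of $S^1$ in which consecutive intersections $J_j\cap J_{j+1}$ are intervals of length $2d$, all other intersections are empty, and every union of a bounded number of consecutive $J_j$'s lies in $\mathcal I$. Corollary \ref{cor: part of un on S^1} then produces a continuous section of \eqref{eq: diff is a colimit+++} defined on the neighbourhood $\mathcal U:=\Diff_0^{<d}(J_1)\cdot\ldots\cdot \Diff_0^{<d}(J_n)$ of the identity, while Lemma \ref{lem: diff is genr'ted}(ii) supplies surjectivity. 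The real work is injectivity.

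For injectivity, an element of the kernel can be written as $[\varphi_1]\cdots[\varphi_N]$ with each $\varphi_i\in\Diff_0(J_{k_i})$, and the induced relation $\varphi_1\cdots\varphi_N=e$ holds in $\tilde{\Diff_+}(S^1)$. Since $\tilde{\Diff_+}(S^1)$ is simply connected, Lemma \ref{lem: vK diagr EZ} expresses this relation as a formal consequence of length-$3$ relations in $\mathcal U$. Factoring each $\mathcal U$-element via Corollary \ref{cor: part of un on S^1} converts each such length-$3$ relation into a relation $\prod_{i=1}^{3n}\alpha_i=e$ with each $\alpha_i$ in an individual $\Diff_0(J_j)$, so it suffices to show $\prod_{i=1}^{3n}[\alpha_i]=e$ in $({\mathrm{colim}}_{\mathcal I}\,\Diff_0(I))/N$. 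Applying Lemma \ref{lem: perm free gp} with a permutation $\sigma$ that gathers the three $\alpha$'s belonging to each $\Diff_0(J_k)$ yields
\[
\prod_i [\alpha_i] \;=\; \prod_i w_i\,[\alpha_{\sigma(i)}]\,w_i^{-1},
\]
where each generator occurs at most once in each $w_i$.

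The main obstacle is the diffeomorphism analogue of \eqref{eq: conj loops+}: the identity $w_i[\alpha_{\sigma(i)}]w_i^{-1}=[w_i\alpha_{\sigma(i)}w_i^{-1}]$ in the colimit requires $w_i\alpha_{\sigma(i)}w_i^{-1}$ to be supported in some interval of $\mathcal I$. I would handle this one letter at a time: any letter $\delta\in\Diff_0(J_l)$ of $w_i$ whose support is disjoint from the current running support commutes past it modulo $N$ and may be discarded; when $\delta\in\Diff_0(J_l)$ is adjacent to (or equal to) the current supporting interval $J_m$, the single conjugation step yields an element supported in $J_l\cup J_m\in\mathcal I$, inside which the relation $[\delta][\gamma][\delta]^{-1}=[\delta\gamma\delta^{-1}]$ already holds. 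Because each generator occurs at most once in each $w_i$, only boundedly many active conjugations occur, and each enlarges support by at most $d$ across a single overlap region; choosing the cover $\{J_j\}$ sufficiently fine---and if necessary further subdividing the $\alpha_i$ via Lemma \ref{lem: part of un}---guarantees that the intermediate supports never escape some fixed interval in $\mathcal I$.

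Once every conjugation has been absorbed, the factors regroup as $\prod_i[\alpha_i]=\prod_{k=1}^n[\chi_k]$, with each $\chi_k$ supported in a bounded neighbourhood $\tilde J_k$ of $J_k$, where, by choice of cover, $\tilde J_k$ and $\tilde J_{k'}$ are disjoint whenever $|k-k'|$ is large enough. Since $\chi_1\cdots\chi_n=e$ in $\tilde{\Diff_+}(S^1)$ and consecutive $\tilde J_k$'s are the only ones meeting, each $\chi_k$ splits as $\chi_k^-\chi_k^+$ with $\chi_k^{\mp}$ supported in the overlap with $\tilde J_{k\mp 1}$; the telescoping identity $\chi_k^+\chi_{k+1}^-=e$ then collapses the product to the identity exactly as in the loop case. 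The quantitative support-control described in the previous paragraph is where I expect the bulk of the technical effort to lie.
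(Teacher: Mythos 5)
Your proposal is correct and follows essentially the same route as the paper's proof: continuous local section via Corollary~\ref{cor: part of un on S^1}, surjectivity via Lemma~\ref{lem: diff is genr'ted}, reduction to length-$3$ relations via Lemma~\ref{lem: vK diagr EZ}, regrouping via Lemma~\ref{lem: perm free gp} with each generator used at most once per $w_i$, and the same support-control idea to justify $w_i[\alpha_{\sigma(i)}]w_i^{-1}=[w_i\alpha_{\sigma(i)}w_i^{-1}]$, followed by the telescoping of the $\chi_k^\pm$. The only difference is that where you say ``choose the cover sufficiently fine,'' the paper makes this quantitative: overlaps of length $2d$, at most three active conjugations on each side, and the requirement that $J_j$ and $J_{j+2}$ be at distance greater than $6d$, so that the enlarged supports $J_k^+$ stay inside $J_{k-1}\cup J_k\cup J_{k+1}\in\mathcal I$ and non-consecutive ones remain disjoint.
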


Before embarking in the proof, let us show how Theorems~\ref{thm: Diff=colimit} and~\ref{thm: Diff=colimit -- BIS} follow from the above result.

\begin{proof}[Proof of Theorem~\ref{thm: Diff=colimit}]
Without loss of generality, we take $S=S^1$. 
Let $\mathcal I$ be the poset of all subintervals of $S^1$.
By Theorem \ref{thm2: Diff=colimit}, the map $({\mathrm{colim}}_{I\subset S^1}\, \Diff_0(I))/N\,\to\,\tilde{\Diff_+}(S^1)$ is an isomorphism.
So it suffices to show that $N$ is trivial.
Given diffeomorphisms with disjoint support $\varphi,\psi\in\Diff_+(S^1)$, there exists an interval $I\subset S^1$ such that both $\varphi$ and $\psi$ are in $\Diff_0(I)$.
The commutator of $\varphi$ and $\psi$ is trivial in $\Diff_0(I)$. It is therefore also trivial in the colimit.
\end{proof}

\begin{proof}[Proof of Theorem~\ref{thm: Diff=colimit -- BIS}]
The maps 
$\!\tikzmath{\node[scale=.97]{$(\,\,\,)_-$};}\!,\!\tikzmath{\node[scale=.97]{$(\,\,\,)_+$};}\!\!:\Diff^{<d}(S^1)\to \Diff(S^1)$ in Lemma~\ref{lem: part of un}(ii)
show that the colimit which appears in Theorem~\ref{thm: Diff=colimit} satisfies the assumption of Proposition~\ref{prop: central ex of colim}.
Theorem~\ref{thm: Diff=colimit -- BIS} therefore follows from Theorem~\ref{thm: Diff=colimit}.
\end{proof}

\begin{proof}[Proof of Theorem \ref{thm2: Diff=colimit}]
Given a diffeomorphism $\varphi\in\Diff_+(S^1)$ with support in some interval $I\in\mathcal I$, we write $[\varphi]$ for its image in the group $({\mathrm{colim}}_{\mathcal I}\,\Diff_0(I))/N$.

Let $\{J_j\}_{j\,=\,1\ldots n}$ be a cover of $S^1$ such that each intersection $J_j\cap J_{j+1}$ (cyclic numbering) has length $2d$ for some $d$, and the other intersections are empty.
The $J_j$ are chosen so that each $J_{j-1}\cup J_{j}\cup J_{j+1}$ is in $\mathcal I$ (cyclic numbering)
and the distance between $J_j$ and $J_{j+2}$ is greater than $6d$.
By Corollary~\ref{cor: part of un on S^1}, any element $\varphi\in\Diff^{<d}(S^1)$ can be factored as
$\varphi=\varphi_1\ldots  \varphi_n$, with $\varphi_j\in \Diff_0^{<d}(J_j)$.
Moreover, that factorisation may be chosen to depend continuously on~$\varphi$.
After identifying $\Diff^{<d}(S^1)$ with an open subset of $\tilde{\Diff_+}(S^1)$,
this provides a local section of the map in~\eqref{eq: diff is a colimit+++}:
\begin{equation*}
\tikzmath{
\node[below] (A) at (0,0) {$(\mathrm{colim}_{\mathcal I}\,\Diff_0(I))/N$}; 
\node[below] (B) at (4.2,.075) {$\tilde{\Diff_+}(S^1)$.};
\draw[->] (A.east) -- (B.west|-A.east);
\node[inner sep=3] (C) at (2.5,-1.5) {$\Diff^{<d}(S^1)\,\,$};
\draw[<-right hook, shorten >=-.8] (B.south)+(-.69,0) -- (C);
\draw[dashed, ->, shorten <=2.5] (C) -- (1,-.75);
\node at (0,-1) {$\scriptstyle [\varphi_1]\ldots [\varphi_n]$};
\node at                    (.83,-1.48) {$\scriptstyle \varphi$};
\node[rotate=154] at (.5,-1.3) {$\scriptstyle \mapsto$};
}
\end{equation*}
The map \eqref{eq: diff is a colimit+++} is surjective by Lemma~\ref{lem: diff is genr'ted}, and admits continuous local sections.

It remains to prove injectivity.
Let $g\in (\mathrm{colim}_{\mathcal I}\Diff_0(I))/N$ be in the kernel of the map to $\tilde{\Diff_+}(S^1)$.
By Lemma~\ref{lem: diff is genr'ted}, we may rewrite it as a product
\(
[\varphi_1][\varphi_2]\ldots[\varphi_N]
\),
where each $\varphi_i$ is in some $\Diff_0^{<d}(J_j)$.
By assumption, the relation
\begin{equation}\label{eq: rel1 -- PRE}
\varphi_1 \varphi_2 \ldots \varphi_N=e
\end{equation}
holds in $\tilde{\Diff_+}(S^1)$. Our goal is to show that the relation 
\begin{equation}\label{eq: rel1}
[\varphi_1][\varphi_2] \ldots [\varphi_N]=e
\end{equation}
holds in $(\mathrm{colim}_{\mathcal I}\,\Diff_0(I))/N$.

Any $x\in\Diff^{<d}(S^1)$ can be factored as $x=x_1\ldots x_n$ with $x_j\in \Diff_0^{<d}(J_j)$,
so the set $\mathcal U:=\Diff_0^{<d}(J_1)\ldots \Diff_0^{<d}(J_n) =\{x_i\ldots x_n:x_j\in \Diff_0^{<d}(J_j)\}$ is a neighbourhood of $e\in \tilde{\Diff_+}(S^1)$.
The set $\mathcal U$ is visibly path-connected.
By Lemma~\ref{lem: vK diagr EZ}, 
equation \eqref{eq: rel1 -- PRE} is therefore a formal consequence of certain relations of length $3$ between the elements of~$\mathcal U$:
\begin{equation}\label{eq: rel2 -- PRE}
(\psi_1\, \psi_2 \ldots \psi_{n})^{\varepsilon_1}(\psi_{n+1} \ldots \psi_{2n})^{\varepsilon_2}(\psi_{2n+1} \ldots \psi_{3n})^{\varepsilon_3}=e,
\end{equation}
$\psi_i$, $\psi_{n+i}$, $\psi_{2n+i}\in \Diff_0^{<d}(J_i)$.
In order to prove that \eqref{eq: rel1} holds,
it is therefore enough to show that the relations
\begin{equation}\label{eq: rel2}
\big([\psi_1][\psi_2] \ldots [\psi_{n}]\big)^{\varepsilon_1}\big([\psi_{n+1}] \ldots [\psi_{2n}]\big)^{\varepsilon_2}\big([\psi_{2n+1}] \ldots [\psi_{3n}]\big)^{\varepsilon_3}=e
\end{equation}
hold in $(\mathrm{colim}_{\mathcal I}\Diff_0(I))/N$.
Using that $[\psi_i]^{-1}=[\psi_i^{-1}]$,
we rewrite \eqref{eq: rel2} as
\begin{equation}\label{eq: our goal}
\prod_{i=1}^{3n}\,[\alpha_i]=e,
\end{equation}
with\, \medskip
\(
\alpha_i:=\begin{cases}
\psi_{i'}^{\,\varepsilon_1}&\scriptstyle i':=i\,\,\text{if}\,\,\varepsilon_1=1;\,i':=n+1-i\,\,\text{if}\,\,\varepsilon_1=-1,\textstyle\,\,\,\,\,\,\,\,\text{when}\,\,\,\,\,\,1\le i\le n
\\
\psi_{i'}^{\,\varepsilon_2}&\scriptstyle i':=i\,\,\text{if}\,\,\varepsilon_2=1;\,i':=2n+1-i\,\,\text{if}\,\,\varepsilon_2=-1,\textstyle\,\,\,\,\,\,\text{when}\,\,\,\,\,\,n+1\le i\le 2n
\\
\psi_{i'}^{\,\varepsilon_3}&\scriptstyle i':=i\,\,\text{if}\,\,\varepsilon_3=1;\,i':=3n+1-i\,\,\text{if}\,\,\varepsilon_3=-1,\textstyle\,\,\,\,\,\,\text{when}\,\,\,\,\,\,2n+1\le i\le 3n.
\end{cases}
\)

As in \eqref{eq: conj loops}, for any $\varphi\in\Diff_0(J_j)$ and $\psi\in\Diff_0(J_k)$, the following equation holds:
\begin{equation}\label{eq: conj diffeos}
[\psi][\varphi][\psi]^{-1}=[\psi\varphi\psi^{-1}].
\end{equation}
We would like to replace $[\psi]$ in \eqref{eq: conj diffeos} by an arbitrary word $[\psi_1]\ldots[\psi_s]$:
\begin{equation}\label{eq: conj diffeos+}
[\psi_1]\ldots[\psi_s]\,[\varphi]\,[\psi_s]^{-1}\ldots[\psi_1]^{-1}=[\psi_1\ldots\psi_s\,\varphi\,\psi_s^{-1}\ldots\psi_1^{-1}].
\end{equation}
However, for general $\psi_i\in \bigcup_k\Diff_0(J_k)$ it is not clear that \eqref{eq: conj diffeos+} should hold, because each time one conjugates $\varphi$ by a diffeomorphism, its support grows.
If we insist, however, that the $\psi_i$ have small displacement, so as to control the supports of $\psi_r\psi_{r+1}\ldots\psi_s\,\varphi\,\psi_s^{-1}\ldots\psi_{r+1}^{-1}\psi_{r}^{-1}$,
then equation \eqref{eq: conj diffeos+} will hold.
The precise version of \eqref{eq: conj diffeos+} that we will need is he following:
{\it Let $\psi_i\in \bigcup_k\Diff_0^{<d}(J_k)$, and let $\varphi\in\Diff_0(J_j)$.
If there are at most three $\psi_i$'s whose support is in $J_{j-1}$ and not in some other $J_k$, and at most three $\psi_i$'s whose support $J_{j+1}$ and not in some other $J_k$, then equation \eqref{eq: conj diffeos+} holds.}
The proof is an iteration of the argument used for \eqref{eq: conj diffeos}, while keeping track of the size of the supports.
(This only uses the fact that the distance between $J_j$ and $J_{j+2}$ is greater than $3d$.
Later, we will use that it is greater than $6d$.)

Let $\sigma\in\mathfrak S_{3n}$ be a permutation such that $\alpha_{\sigma(3k-2)},\alpha_{\sigma(3k-1)},\alpha_{\sigma(3k)}\in \Diff_0^{<d}(J_k) $.
By Lemma \ref{lem: perm free gp}, there exist words $w_i$ in the $[\alpha_j]$'s so that $\prod^{3n}\,[\alpha_i]=\prod^{3n}w_i[\alpha_{\sigma(i)}]w_i^{-1}$.
Moreover, these words can be chosen so that each $[\alpha_i]$ appears at most once in each $w_i$.
By \eqref{eq: conj diffeos+}, we then have:
\[
w_i[\alpha_{\sigma(i)}]w_i^{-1}=[w_i\alpha_{\sigma(i)}w_i^{-1}].
\]
Recall that our goal is to show that \eqref{eq: our goal} holds.
Let $\beta_i:=\alpha_{\sigma(i)}$.
So far, we have:
\[
\begin{split}
\prod_{i=1}^{3n}\,[\alpha_i]
=\prod_{i=1}^{3n}w_i[\beta_i]w_i^{-1}
&=
\prod_{k=1}^n
\big[ w_{3k-2}\beta_{3k-2} w_{3k-2}^{-1}\big]
\big[ w_{3k-1}\beta_{3k-1} w_{3k-1}^{-1}\big]
\big[ w_{3k}\beta_{3k} w_{3k}^{-1}\big]\\
&=
\prod_{k=1}^n
\big[ w_{3k-2}\beta_{3k-2} w_{3k-2}^{-1}
 w_{3k-1}\beta_{3k-1} w_{3k-1}^{-1}
 w_{3k}\beta_{3k} w_{3k}^{-1}\big].
\end{split}
\]
Let $\chi_k:= w_{3k-2}\beta_{3k-2} w_{3k-2}^{-1} w_{3k-1}\beta_{3k-1} w_{3k-1}^{-1} w_{3k}\beta_{3k} w_{3k}^{-1}$ so that $\prod^{3n}\,[\alpha_i]=\prod^n[\chi_k]$.
By construction, $\mathrm{supp}(\chi_k)\subset J_k^+$, where $J_k^+$ is obtained from $J_k$ by by enlarging it by $3d$ on each side.
The crucial property of those slightly larger intervals is that $J_k^+$ does not overlap with $J_{k+2}^+$.
The relation $\chi_1 \chi_2 \ldots \chi_{n}=e$ holds in $\tilde{\Diff_+}(S^1)$ so the support of each $\chi_k$ is contained in $(J_k^+\cap J_{k-1}^+)\cup (J_k^+\cap J_{k+1}^+)$.
We can thus write $\chi_k$ as $\chi_k=\chi^-_k\chi^+_k$, with $\mathrm{supp}(\chi^-_k)\subset J_k^+\cap J_{k-1}^+$ and  $\mathrm{supp}(\chi^+_k)\subset J_k^+\cap J_{k+1}^+$.
Since $\chi_1\, \chi_2 \ldots \chi_{n}=e$, we must have $\chi^+_k\chi^-_{k+1}=e$.
Finally, as in the proof of Theorem \ref{thm2: LG=colimit},\vspace{.2cm}
\[
\begin{split}
\tikzmath{
\node[inner ysep=-8] {$\displaystyle\prod_{i=1}^{3n}\,[\alpha_i]=\prod_{k=1}^n[\chi_k]$};}
&=[\chi^-_1\chi^+_1][\chi^-_2\chi^+_2]\ldots[\chi^-_n\chi^+_n]\\
&=[\chi^-_1][\chi^+_1\chi^-_2][\chi^+_2\chi^-_3]\ldots[\chi^+_{n-1}\chi^-_n][\chi^+_n]=e.
\end{split}
\]
\end{proof}

\subsubsection{The based diffeomorphism group}

Choose a base point $p\in S$, and let $\Diff_*(S)\subset \Diff(S)$ be the subgroup of diffeomorphisms that fix $p$ and that are tangent to $\mathrm{id}_S$ up to infinite order at that point. 
We call this group the \emph{based diffeomorphism group} of $S$.
Let $\Diff_*^\R(S)$ be the restriction of the central extension by $\R$ to $\Diff_*(S)$.
The arguments of the previous section can be adapted without difficulty to prove the following variants:
\begin{equation*}
\Diff_*(S)=\underset{\substack{I\subset S}}{\mathrm{colim}}\, \big(\Diff_0(I)\cap \Diff_*(S)\big),
\quad\,\,\,
\Diff_*^\R(S)=\underset{\substack{I\subset S}}{\mathrm{colim}}\, \big(\Diff_0^\R(I)\cap \Diff_*^\R(S)\big).
\end{equation*}
The proofs are identical to those in the previous section:
replace every occurrence of $\Diff(S^1)$ by $\Diff_*(S^1)$, and every occurrence of $\Diff_0(I)$ by $\Diff_0(I)\cap \Diff_*(S^1)$.

It is also possible to express the groups $\Diff_*(S)$ and $\Diff_*^\R(S)$ as colimits over
the poset of subintervals whose interior does not contain $p$, provided one works in the category of Hausdorff topological groups:
\begin{proposition}\label{prop: based Diff colimit}
The natural maps
\begin{equation}\label{eq: based diffeo gps as colimit}
\underset{\substack{I\subset S,\,p\;\!\not\in\;\!\mathring I\,\,\,}}{\mathrm{colim^{\scriptscriptstyle H}}}\, \Diff_0(I)\,\to\, \Diff_*(S)
\quad\qquad
\underset{\substack{I\subset S,\,p\;\!\not\in\;\!\mathring I\,\,\,}}{\mathrm{colim^{\scriptscriptstyle H}}}\, \Diff_0^\R(I)\,\to\,\Diff_*^\R(S)
\end{equation}
are isomorphisms of topological groups.
\end{proposition}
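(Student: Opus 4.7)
The plan is to follow the same strategy as in the proof of Proposition~\ref{prop: colim^H for LG} for based loop groups. Let $\mathcal I$ be the poset of subintervals of $S$ whose interior does not contain $p$, and let $N\triangleleft {\mathrm{colim}}_{\mathcal I}\,\Diff_0(I)$ be the normal subgroup generated by commutators of diffeomorphisms with disjoint supports. The proof of Theorem~\ref{thm2: Diff=colimit} applies with almost no change (using a cover $\{J_j\}_{j=1\ldots n}$ in which consecutive intersections $J_j\cap J_{j+1}$ for $1\le j<n$ are intervals of length $2d$ not containing $p$, while $J_1\cap J_n=\{p\}$ and all other intersections are empty, and in which every $J_{j-1}\cup J_j\cup J_{j+1}$ lies in $\mathcal I$ with $J_j$ and $J_{j+2}$ at distance greater than $6d$) to show that $(\mathrm{colim}_{\mathcal I}\,\Diff_0(I))/N\to\Diff_*(S)$ is an isomorphism of topological groups. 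Since $\Diff_*(S)$ is Hausdorff, this factors through an isomorphism $(\mathrm{colim^{\scriptscriptstyle H}_{\mathcal I}}\,\Diff_0(I))/N^{\scriptscriptstyle H}\to \Diff_*(S)$, where $N^{\scriptscriptstyle H}$ denotes the image of $N$ in the Hausdorff colimit.

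The heart of the argument is then to show that $N^{\scriptscriptstyle H}$ is trivial. Given diffeomorphisms $\varphi,\psi\in \Diff_*(S)$ with disjoint supports, I would write $S\setminus\{p\}$ as an increasing union of closed subintervals $I_i\subset S$ and construct smooth approximating sequences $\varphi_i\to\varphi$, $\psi_i\to\psi$ with $\mathrm{supp}(\varphi_i),\mathrm{supp}(\psi_i)\subset I_i$ and $\mathrm{supp}(\varphi_i)\cap\mathrm{supp}(\psi_i)=\emptyset$. Such approximations exist precisely because elements of $\Diff_*(S)$ are tangent to $\mathrm{id}_S$ up to infinite order at $p$: in a local chart near $p$, the infinite-order vanishing of $\varphi-\mathrm{id}$ permits a smooth cut-off interpolation between $\varphi$ and $\mathrm{id}$ that converges to $\varphi$ in the $C^\infty$ topology. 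Each commutator $[\varphi_i,\psi_i]$ is trivial in $\Diff_0(I_i)$ and hence in ${\mathrm{colim}}_{\mathcal I}\Diff_0(I)$, so by uniqueness of limits in the Hausdorff colimit,
\[
[\varphi,\psi]\,=\,\lim_i[\varphi_i,\psi_i]\,=\,e.
\]
This step is where I expect the main technical subtlety to lie: because conjugation by diffeomorphisms enlarges supports (in contrast with the loop group case), the construction of $C^\infty$-convergent approximations with prescribed disjoint supports requires a slightly more careful interpolation, but the infinite-order tangency condition supplies exactly the flexibility needed to carry it out.

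For the second isomorphism in \eqref{eq: based diffeo gps as colimit}, I would apply Proposition~\ref{prop: central ex of colim} in the Hausdorff setting (its proof carries over unchanged, as already noted at the end of the proof of Proposition~\ref{prop: colim^H for LG}). The required continuous local factorisation near the identity is supplied by the decomposition of diffeomorphisms of small displacement using the cover $\{J_j\}$ constructed above, in the spirit of Corollary~\ref{cor: part of un on S^1} and Lemma~\ref{lem: part of un}(ii).
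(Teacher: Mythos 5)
Your proposal follows the paper's proof almost exactly: the same poset $\mathcal I$, the same reduction to the proof of Theorem~\ref{thm2: Diff=colimit} using a cover $\{J_j\}$ with the two extreme intervals meeting only at $p$, the same passage to the maximal Hausdorff quotient, the same limiting argument (cut-off approximations made possible by infinite-order tangency at $p$) to kill $N^{\scriptscriptstyle\mathrm H}$, and the same appeal to Proposition~\ref{prop: central ex of colim} in the Hausdorff category for the central extension. Your description of the cut-off step is in fact more explicit than the paper's, which simply declares that part identical to Proposition~\ref{prop: colim^H for LG}.

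One detail needs correcting: you take $N$ to be generated by commutators of diffeomorphisms with \emph{disjoint supports}, whereas the paper takes commutators of diffeomorphisms whose supports have \emph{disjoint interiors}. The difference matters precisely at the pair of cover intervals $J_1,J_n$ abutting at $p$: since $J_1\cup J_n$ contains $p$ in its interior it is not in $\mathcal I$, so the conjugation identity $[\psi][\varphi][\psi]^{-1}=[\psi\varphi\psi^{-1}]$ for $\varphi\in\Diff_0^{<d}(J_1)$ and $\psi\in\Diff_0^{<d}(J_n)$ can only be justified by placing their commutator in $N$; but such $\varphi$ and $\psi$ may both have $p$ in their supports, so their supports need not be disjoint, and your $N$ does not visibly contain this commutator. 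Hence the claim that the proof of Theorem~\ref{thm2: Diff=colimit} applies ``with almost no change'' has an unjustified step with your smaller $N$. The repair is cosmetic: define $N$ via disjoint \emph{interiors} of supports, as the paper does; your limiting argument for the triviality of $N^{\scriptscriptstyle\mathrm H}$ then goes through unchanged for these extra generators as well, since the cut-offs near $p$ can be chosen with genuinely disjoint supports.
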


\begin{proof}The proof is identical to that of Proposition~\ref{prop: colim^H for LG}. 
Let $\mathcal I$ be the poset of subintervals of $S$ whose interior does not contain $p$, and
let $N\triangleleft\, {\mathrm{colim}}_{\mathcal I}\, \Diff_0(I)$ be the normal subgroup generated by commutators of diffeomorphisms whose supports have disjoint interiors.
The proof of Theorem~\ref{thm2: Diff=colimit} applies verbatim (using a cover $\{J_j\}_{j=1...n}$ for which the $J_j\cap J_{j+1}$ have length $2d$ for some $d$, $J_0\cap J_n=\{p\}$, and all other intersections are empty)
and shows that the map $(\mathrm{colim}_{\mathcal I}\, \Diff_0(I))\big/N\,\to\,\Diff_*(S)$ is an isomorphism of topological groups.
Since $\Diff_*(S)$ is Hausdorff, the natural map
\begin{equation*}
\Big(\underset{I\subset S^1,\,p\;\!\not\in\;\!\mathring I\,\,\,}{\mathrm{colim^{\scriptscriptstyle H}}}\, \Diff_0(I)\Big)\big/N^{\scriptscriptstyle \mathrm H}\,\to\,\Diff_*(S)
\end{equation*}
is an isomorphism, where $N^{\scriptscriptstyle \mathrm H}$ denotes the image of $N$ in $\mathrm{colim}^{\scriptscriptstyle \mathrm H}_{\mathcal I}\Diff_0(I)$.
The end of the proof consists in showing that $N^{\scriptscriptstyle \mathrm H}$ is trivial.
The argument is identical to the one in Proposition~\ref{prop: colim^H for LG}.
\end{proof}

\section{Application: representations of loop group conformal nets}

Fix a compact, simple, simply connected Lie group $G$, and let $k>0$ be an integer.
Let $\g$ be the complexified Lie algebra of $G$.
There is a certain central extension $\hat\g$ of $\g[z,z^{-1}]$ called the \emph{affine Lie algebra}.
There is also a conformal net $\cA_{G,k}$ associated to $G$ and $k$, called \emph{the loop group conformal net} (we will review these notions below).

It is well believed among experts that there should be an equivalence between the category of representations of $\cA_{G,k}$ and a certain category of representations of $\hat \g$
(see Conjecture~\ref{conj: Rep(A_G,k) cong Rep^k(hat g)} for a precise statement).
In this section, leveraging Theorems~\ref{thm: LG=colimit -- BIS} and~\ref{thm: Diff=colimit -- BIS}, we prove one half of this conjecture.
Namely, given a representation of the loop group conformal net, we construct a representation of the corresponding affine Lie algebra:
\[
\Rep(\cA_{G,k}) \,\,\,\rightsquigarrow\,\,\, \Rep(\hat \g).
\]

\subsection{Loop group conformal nets and affine Lie algebras}
\subsubsection{Loop group conformal nets}
Let $S$ be a circle.
Let $\widetilde{LG}$ be the basic central extension of $LG=\mathrm{Map}(S,G)$, and let $\widetilde{LG}^k$ be its quotient by the central subgroup $\mu_k\subset U(1)$ of $k$-th roots of unity.
Upon identifying $U(1)/\mu_k$ with $U(1)$, we get a central extension
\[
\tikzmath[xscale=1.5]{
\node (0) at (0,0) {$0$};
\node (1) at (1,0) {$U(1)$};
\node (2) at (2.07,.1) {$\widetilde{LG}^k$};
\node (3) at (3,0) {$LG$};
\node (4) at (4,0) {$0$};
\draw [->] (0) -- (1);
\draw [->] (1) -- ++(.75,0);
\draw [<-] (3) -- ++(-.7,0);
\draw [->] (3) -- (4);
}
\]
called the level $k$ central extension of $LG$.
Given an interval $I\subset S$, we write $\widetilde{L_IG}^k$ for the restriction of that central extension to the subgroup $L_IG$:
\begin{equation}\label{eq: central ext of L_IG}
\tikzmath[xscale=1.5]{
\node (0) at (0,0) {$0$};
\node (1) at (1,0) {$U(1)$};
\node (2) at (2.15,.1) {$\widetilde{L_IG}^k$};
\node (3) at (3.15,0) {$L_IG$};
\node (4) at (4.15,0) {$0$.};
\draw [->] (0) -- (1);
\draw [->] (1) -- ++(.75,0);
\draw [<-] (3) -- ++(-.7,0);
\draw [->] (3) -- (4);
}
\end{equation}
The latter only depends on $I$ and not on the choice of circle $S$ in which the interval is embedded.

Recall that the group algebra $\C[\mathcal G]$ of a group $\mathcal G$ is the set of finite linear combinations of elements of $\mathcal G$.
We write $[g]\in\C[\mathcal G]$ for the image in the group algebra of an element $g\in \mathcal G$.
Given a central extension $0\to U(1)\to \widetilde{\mathcal G}\to \mathcal G\to 0$, the \emph{twisted group algebra} of $\mathcal G$ is the quotient of $\C[{\;\!\widetilde{\mathcal G}\;\!}]$ by the relation
$[\lambda g]\sim \lambda [g]$, for $\lambda\in U(1)$ and $g\in \widetilde{\mathcal G}$.

Following \cite[\S1.A]{BDH-cn1}, a conformal net is a functor from the category of intervals and embeddings to the category of von Neumann algebras and $*$-algebra homomorphisms (satisfying various axioms).
Associated to each compact, simple, simply connected Lie group $G$ and integer $k\ge 1$,
there is a conformal net $\cA_{G,k}$ called the \emph{loop group conformal net}.
The loop group conformal net sends an interval $I$ to a certain von Neumann algebra $\cA_{G,k}(I)$.
The latter is a completion of the twisted group algebra of $L_IG$ associated to the central extension \eqref{eq: central ext of L_IG}.
In particular, there is a homomorphism
\begin{equation}\label{eq: LIG --> U(A)}
\,\widetilde{L_IG}^k\,\to\,\, U\big(\cA_{G,k}(I)\big)
\end{equation}
from $\widetilde{L_IG}^k$ to the group of unitaries of $\cA_{G,k}(I)$.\vspace{-1mm}
The homomorphism \eqref{eq: LIG --> U(A)} is continuous for the topology on $\widetilde{L_IG}^k$ induced from the $C^\infty$ topology on $L_IG$, and the strong operator topology on $U(\cA_{G,k}(I))$.
We refer the reader to \cite[\S4.C]{BDH-cn1}\cite{MR1231644}\cite[\S8]{WZW-classification}\cite{Toledano(PhD-thesis)}\cite{MR1645078} for background on loop group conformal nets.

We write $B(H)$ for the algebra of bounded operators on a Hilbert space $H$, and $U(H)$ for the group of unitary operators.

\begin{definition}
A representation of a conformal net $\cA$ on a Hilbert space $H$ is a collection of actions (i.e. normal, unital, $*$-homomorphisms) $\rho_I:\cA(I)\to B(H)$, indexed by the subintervals $I\subset S^1$, which are compatible in the sense that $\rho_I|_{\cA(J)}=\rho_J$ for every $J\subset I\subset S^1$.
\end{definition}

We write $\Rep(\cA)$ for the category of representation of $\cA$, and 
$\Rep_{\mathrm{f}}(\cA)$ for the subcategory whose objects are finite direct sums of irreducible representations.

\subsubsection{Affine Lie algebras}

Let $\g_\R$ be the Lie algebra of $G$, and let $\g:=\g_\R\otimes_\R\C$ be its complexification.
The \emph{affine Lie algebra} $\hat\g$ is the central extension of $\g[z,z^{-1}]$ by the $2$-cocycle $(f,g)\mapsto \mathrm{Res}_{z=0}\langle f,dg\rangle$,
where $\langle\,\,,\,\rangle$ denotes the basic inner product on $\g$ (c.f. Section~\ref{sec:The free loop group}).\footnote{
This normalization ensures that the set of possible levels of integrable positive energy representations of $\hat\g$ is exactly the set $\N$ of non-negative integers.
}
The \emph{Kac-Moody algebra} is the semi-direct product $\C\,\ltimes\, \hat\g$ associated to the derivation
$(f,a)\mapsto (-z\frac{\partial f}{\partial z},0)$ of $\hat\g$.
We write $\ell_0$ for the generator of $\C$ in the semi-direct product.

\begin{definition}[{\cite[Chapt.\,3 and 10]{MR1104219}}]\label{def: Rep Lg}
Let $k\in\N$.
A representation $\rho$ of $\hat\g$ on a vector space $V$ is called a level~$k$ integrable positive energy representation if:
\begin{enumerate}
\item It is the restriction of a representation of $\C\ltimes \hat\g$ for which the generator $L_0:=\rho(\ell_0)$ of $\C$ is diagonalizable, with positive spectrum.
\item For every nilpotent element $X\in\g$ and every $n\ge 0$, the operator $Xz^{-n}$ acts locally nilpotently on $V$
(the operators $Xz^n$ are automatically locally nilpotent for $n>0$).
\item The central element $1\in\C\subset \hat \g$ acts by the scalar $k$.
\end{enumerate}
We note that the choice of operator $L_0$ is not part of the data of an integrable positive energy representation.
\end{definition}

We write $\Rep^k(\hat \g)$ for the category of level~$k$ integrable positive energy representations of $\hat\g$,
and write $\Rep^k_{\mathrm{f}}(\hat \g)$ for the subcategory whose objects are finite sums of irreducible representations.

It is well known that every object of $\Rep^k(\hat \g)$ can be equipped with a positive definite $\hat\g$-invariant inner product \cite[Chapt.\,11]{MR1104219}, and thus completed to a Hilbert space.
The action of $\hat\g$ extends to an action on the Hilbert space by unbounded operators, and
the real form $\big(\g[z,z^{-1}]\cap C^\infty(S^1,\g_\R)\big)\oplus i\R\subset \hat \g$ acts by skew-adjoint operators.
The latter can then be integrated to a strongly continuous positive energy unitary representation of $\widetilde{LG}^k$ \cite{MR733047, MR1674631}.
(A unitary representation $\mathcal G\to U(H)$ is called strongly continuous if it is continuous with respect to the strong operator topology on~$U(H)$.)\vspace{-.7mm}
In order to have a better parallel with Definition~\ref{def: Rep Lg}, we\vspace{-.7mm} prefer to view representations of $\widetilde{LG}^k$ as representations of $\widetilde{LG}$, via the quotient map $\widetilde{LG}\to \widetilde{LG}^k$:

\begin{definition}[{\cite[\S 9.2]{MR900587}}]\label{def: pos en rep of LG}
A strongly continuous unitary representation $\rho:\widetilde{LG}\to U(H)$ is called a level~$k$ positive energy representation if:
\begin{enumerate}
\item It is the restriction of a representation $S^1\ltimes \widetilde{LG}\to U(H)$, for which the generator $L_0:=-i\tfrac{d}{dt}\big|_{t=0}(\rho(e^{it},1))$ of $S^1$ has positive spectrum.
\item $\lambda\in U(1)\subset \widetilde{LG}$ acts by scalar multiplication by $\lambda^k$.
\end{enumerate}
We note that the action of $S^1\subset S^1\ltimes \widetilde{LG}$ is not part of the data of a level $k$ positive energy representation.
\end{definition}

We write $\Rep^k(LG)$ for the category of level~$k$ positive energy representations of $\widetilde{LG}$,
and $\Rep^k_{\mathrm{f}}(LG)$ for the subcategory whose objects are finite direct sums of irreducible representations.

\subsection{The comparison functors\\ $\Rep(\cA_{G,k}) \to \Rep^k(LG)$ and $\Rep^k(LG) \to \Rep^k(\hat{\g})$}

It has long been expected that there should be a one-to-one correspondence between representations of $\cA_{G,k}$ and level $k$ integrable positive energy representations of $\hat\g$.
One way to state this is as an equivalence of categories:
\[
\Rep_{\mathrm{f}}(\cA_{G,k})\,\cong\, \Rep^k_{\mathrm{f}}(\hat \g).
\]
We prefer the following statement, as it excludes the possibility of $\cA_{G,k}$ having representations which are not direct sums of irreducible ones:

\begin{conjecture}\label{conj: Rep(A_G,k) cong Rep^k(hat g)}
Let $\mathsf{Vec}_{\mathrm{f}}$ be the category of finite dimensional vector spaces, and let $\mathsf {Hilb}$ be the category of Hilbert spaces and bounded linear maps.
Then there is a natural equivalence of categories:
\[
\Rep(\cA_{G,k}) \,\,\cong\,\, \Rep^k_{\mathrm{f}}(\hat \g)\otimes_{\mathsf{Vec}_{\mathrm{f}}} \mathsf {Hilb}.
\]
\end{conjecture}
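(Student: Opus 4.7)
I would factor the conjectured equivalence through the category of loop-group representations, constructing comparison functors
\[
\Rep(\cA_{G,k})\xrightarrow{\;\Phi\;}\Rep^k(LG)\xrightarrow{\;\Psi\;}\Rep^k(\hat\g),
\]
and arguing in the reverse direction that every irreducible object of $\Rep^k_{\mathrm{f}}(\hat\g)$ (and hence every object of $\Rep^k_{\mathrm{f}}(\hat\g)\otimes_{\mathsf{Vec}_{\mathrm{f}}}\mathsf{Hilb}$) produces a representation of $\cA_{G,k}$. Combined with the classification of level $k$ integrable highest-weight modules for $\hat\g$ and with the semisimple, finite-rank structure of $\Rep(\cA_{G,k})$, this would yield the stated equivalence.

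The construction of $\Phi$ is the immediate payoff of the colimit theorems. Given $\rho\in\Rep(\cA_{G,k})$ on a Hilbert space $H$, composing \eqref{eq: LIG --> U(A)} with $\rho_I$ produces a compatible family of strongly continuous homomorphisms $\widetilde{L_IG}^k\to U(H)$ as $I$ varies over subintervals of $S^1$. By Theorem~\ref{thm: LG=colimit -- BIS}, the colimit of the $\widetilde{L_IG}^k$ in topological groups is $\widetilde{LG}^k$, and its universal property delivers a single strongly continuous representation $\widetilde{LG}^k\to U(H)$. Positive energy comes from net covariance: the maps $\Diff_0^\R(I)\to U(\cA_{G,k}(I))\to U(H)$ glue, via Theorem~\ref{thm: Diff=colimit -- BIS}, to a continuous action of $\Diff_+^{\R\times\Z}(S^1)$ on $H$, whose rotation generator is the required $L_0$. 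The functor $\Psi$ is then the standard differentiation step: on smooth vectors for the $\widetilde{LG}$-action the real form $\bigl(C^\infty(S^1,\g_\R)\cap\g[z,z^{-1}]\bigr)\oplus i\R$ of $\hat\g$ acts by skew-adjoint operators, complexifies to an action of $\hat\g$, and finite-dimensionality of $L_0$-eigenspaces together with positive energy forces local nilpotency of the root generators, verifying Definition~\ref{def: Rep Lg}.

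For the reverse direction, one starts from an irreducible $V\in\Rep^k_{\mathrm{f}}(\hat\g)$, integrates it to a strongly continuous unitary representation $\pi$ of $\widetilde{LG}$ on its Hilbert-space completion $\bar V$, and for each interval $I\subset S^1$ forms the von Neumann algebra $\mathcal M(I)$ generated by $\pi\bigl(\widetilde{L_IG}^k\bigr)$ in $B(\bar V)$. The assignment $I\mapsto\mathcal M(I)$ is isotonic and diffeomorphism covariant, and Haag duality together with strong additivity for $\cA_{G,k}$ identify $\mathcal M(I)$ with a normal image of $\cA_{G,k}(I)$, yielding the desired net representation. Fullness and faithfulness of $\Psi\circ\Phi$ are then essentially formal, since the twisted group algebra of $\widetilde{L_IG}^k$ is weakly dense in $\cA_{G,k}(I)$.

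The genuine obstacle is essential surjectivity, together with the assertion that every object of $\Rep(\cA_{G,k})$ decomposes as a Hilbert direct sum of irreducibles indexed exactly by the level $k$ integrable highest weights. These claims are equivalent to complete rationality of $\cA_{G,k}$ and to the matching of its sectors with those weights, and both lie much deeper than the colimit results of this paper: complete rationality rests on the Kawahigashi--Longo--M\"uger framework combined with the Wassermann and Toledano-Laredo theory of Connes fusion, while identifying the sectors with the expected highest weights is the content of their computation of the fusion rules. For this reason the paper itself establishes only the ``forward'' composite $\Psi\circ\Phi$; the remaining ingredients for the full equivalence must be imported from outside the colimit methods developed here.
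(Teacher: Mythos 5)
The statement you are addressing is stated in the paper as a \emph{conjecture}, and the paper does not prove it: it only establishes the ``forward'' half, namely a fully faithful functor $\Rep(\cA_{G,k})\to\Rep^k_{\mathrm{f}}(\hat\g)\otimes_{\mathsf{Vec}_{\mathrm{f}}}\mathsf{Hilb}$ obtained by composing Theorem~\ref{thm: Rep(A) --> Rep(LG)} with Theorem~\ref{thm: Rep(LG) --> Rep(hat g)}. Your functors $\Phi$ and $\Psi$ follow essentially that same route, and you are right to flag that the full equivalence does not follow from the colimit technology alone. So as a proof of the conjecture your proposal is incomplete, and honestly so; the real question is whether the ingredients you propose to ``import'' actually exist in the literature.

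For general $G$ they do not, and this is precisely why the statement remains a conjecture. Your reverse direction asserts that Haag duality together with strong additivity ``identify $\mathcal M(I)$ with a normal image of $\cA_{G,k}(I)$''; that identification is exactly what the paper calls local equivalence, for which it notes only partial results of Wassermann and Toledano Laredo and an unpublished argument that falls short of being complete. Likewise, complete rationality and the matching of sectors with level~$k$ integrable weights are available for $SU(n)$ (whence Remark~\ref{rem: ok for SU(n)}), not for arbitrary compact simple simply connected $G$, so they cannot simply be quoted. Two technical points in your forward direction also need repair: (i) the covariance action produced by Theorem~\ref{thm: Diff=colimit -- BIS} is an action of $\Diff_+^{\R\times\Z}(S^1)$, so the rotation subgroup one obtains is the universal cover $(S^1)^\Z$, and descending it to a genuine $S^1$-action as required by Definition~\ref{def: pos en rep of LG} is not automatic---the paper needs a direct-integral twisting over the characters of the central $\Z$; (ii) differentiating ``on smooth vectors'' presupposes their density, which is Zellner's nontrivial theorem, and your appeal to finite-dimensionality of $L_0$-eigenspaces is false in general, since the multiplicity spaces can be arbitrary Hilbert spaces---the paper instead confines the weight support to a paraboloid and exploits the $SU(2)_a$ subgroups attached to the nodes of the affine Dynkin diagram (for $G\neq SU(2)$), or else cites Zellner.
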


\noindent
Here, the objects of $\Rep^k_{\mathrm{f}}(\hat \g)\otimes_{\mathsf{Vec}_{\mathrm{f}}} \mathsf {Hilb}$ are formal expressions of the form $\bigoplus_{i=1}^nV_i\otimes H_i$ with $V_i\in
\Rep^k_{\mathrm{f}}(\hat \g)$ and $H_i\in\mathsf {Hilb}$, and
$\mathrm{Hom}(V_i\otimes H_i,V'_j\otimes H'_j)=\mathrm{Hom}_{\Rep^k_{\mathrm{f}}(\hat \g)}(V_i,V'_j) \otimes_\C \mathrm{Hom}_{\mathsf {Hilb}}(H_i,H'_j)$.

\begin{remark}\label{rem: ok for SU(n)}
For $G = SU(n)$, Conjecture~\ref{conj: Rep(A_G,k) cong Rep^k(hat g)} is a consequence of \cite[Thm. 2.2]{MR1776984}, \cite[Thm. 33]{MR1838752}, and \cite[42, Thm. 3.5]{MR1776984}.
\end{remark}

Theorems \ref{thm: Rep(A) --> Rep(LG)} and \ref{thm: Rep(LG) --> Rep(hat g)} below, together with Remark~\ref{rem Zellner did it} (or Remark~\ref{rem: ok for SU(n)}), prove one half of Conjecture~\ref{conj: Rep(A_G,k) cong Rep^k(hat g)}.
Namely, they combine to a fully faithful functor
\begin{equation}\label{eq: Rep(A)-->Rep(g hat)}
\Rep(\cA_{G,k}) \,\,\to\,\, \Rep^k_{\mathrm{f}}(\hat \g)\otimes_{\mathsf{Vec}_{\mathrm{f}}} \mathsf {Hilb}.
\end{equation}
This proves, among other things, that every representation of $\cA_{G,k}$ is a direct sum of irreducible ones,
and that there is an injective map (conjecturally a bijection) from the set of isomorphism classes of 
irreducible objects of $\Rep(\cA_{G,k})$ to the set of isomorphism classes of irreducible objects of $\Rep^k(\hat \g)$.

\begin{remark}An alternative proof of the above result can be found in the unpublished manuscript \cite{Local_energy_bounds}.
\end{remark}

\begin{remark}
Constructing the inverse functor of \eqref{eq: Rep(A)-->Rep(g hat)}
requires something called ``local equivalence''.
Results about local equivalence can be found in \cite[Thm.\,B in \S17]{MR1645078} and \cite[Prop.\,2.4.1 in Chapt.\,IV]{Toledano(PhD-thesis)}.
The unpublished preprint \cite[\S15]{Wass_unpub_1990} seems to contain most of the ingredients of a proof, but falls short of being a complete argument.
\end{remark}

\begin{remark}
By the results in \cite[App.\,D]{MR1838752}, the existence of an injective map from the set of isomorphism classes of irreducible $\cA_{G,k}$-reps to the set of
isomorphism classes of irreducible positive energy $\hat\g$-reps implies that every $\cA_{G,k}$-rep is a direct sum of irreducible ones.
This fact could have been used to simplify the proof of Theorem~\ref{thm: Rep(A) --> Rep(LG)} (specifically the direct integral argument) by assuming from the beginning that $L_0$ is diagonalizable. We have opted instead for a more self-contained exposition.
\end{remark}

\begin{remark}
The corresponding questions for the Virasoro conformal net have been studied by a number of people.
Carpi \cite{MR2031030}, based on results by Loke \cite{Loke(PhD-thesis)}
and D'Antoni and K\"oster \cite{MR2078164},
proved that every irreducible positive energy representation of a Virasoro conformal net
comes from a positive energy representation of the Virasoro algebra with same central charge.
The converse (local normality) was shown to hold by Weiner \cite{MR3627412} (with partial results by Buchholz and Schulz-Mirbach \cite{MR1079298}), and the positive energy condition was removed in \cite{MR2231680}.
\end{remark}

\begin{theorem}\label{thm: Rep(A) --> Rep(LG)}
For\vspace{-1mm} every representation of the loop group conformal net $\cA_{G,k}$,
the actions of the subgroups $\widetilde{L_IG}^k\subset U(\cA_{G,k}(I))$
assemble to a level $k$ positive energy representation of the loop group.
That construction yields a fully faithful functor
\begin{equation}\label{eq: Rep(A_G,k) --> Rep^k(LG)}
\Rep(\cA_{G,k}) \,\,\to\,\, \Rep^k(LG).
\end{equation}
\end{theorem}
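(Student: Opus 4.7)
The plan is to build the $\widetilde{LG}$-action on $H$ by invoking Theorem~\ref{thm: LG=colimit -- BIS}, to check the level-$k$ condition by inspection, to transport rotation covariance from the vacuum via Theorem~\ref{thm: Diff=colimit -- BIS} in order to obtain positive energy, and finally to derive full faithfulness from a standard double-commutant argument.

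Given $\pi=(\pi_I)\in\Rep(\cA_{G,k})$ acting on $H$, I would first compose the canonical maps $\widetilde{L_IG}^k\to U(\cA_{G,k}(I))$ with $\pi_I$ to obtain, for every subinterval $I\subset S^1$, a strongly continuous unitary representation $\rho_I\colon\widetilde{L_IG}^k\to U(H)$. Strong continuity is inherited from the $C^\infty$-to-SOT continuity of $\widetilde{L_IG}^k\hookrightarrow U(\cA_{G,k}(I))$ together with the normality of $\pi_I$; compatibility of the $\rho_I$ under inclusions $J\subset I$ reduces to that of the $\pi_I$. By the universal property expressed in Theorem~\ref{thm: LG=colimit -- BIS}, the $\rho_I$ thus assemble into a single strongly continuous homomorphism $\widetilde{LG}^k\to U(H)$, and pulling back along $\widetilde{LG}\twoheadrightarrow\widetilde{LG}^k$ produces a unitary representation $\rho\colon\widetilde{LG}\to U(H)$. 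The level-$k$ condition is then immediate: the central $U(1)\subset\widetilde{LG}$ projects by the $k$-th power map to $U(1)=U(1)/\mu_k\subset\widetilde{LG}^k$, which acts by scalars on $\cA_{G,k}(I)$ by the very definition of the twisted group algebra.

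The main obstacle is the positive energy condition, which demands producing a suitable $S^1$-action on $H$. The idea is to first build a representation of $\Diff_+^{\R\times\Z}(S^1)$ on $H$ by applying the same colimit strategy to Theorem~\ref{thm: Diff=colimit -- BIS}: diffeomorphism covariance of $\cA_{G,k}$ promotes each $\Diff_0(I)$ to a group of inner automorphisms of $\cA_{G,k}(I)$, implemented by a projective unitary representation $\Diff_0^\R(I)\to U(\cA_{G,k}(I))$; composing with $\pi_I$ and assembling via Theorem~\ref{thm: Diff=colimit -- BIS} yields an action $\Diff_+^{\R\times\Z}(S^1)\to U(H)$ intertwining the $\rho$ just constructed. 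Restriction to the rotation subgroup then furnishes the candidate $S^1$-action, and positivity of its generator $L_0$ is established by a direct integral decomposition of $\pi$ followed by a fibrewise comparison with the (positive energy) vacuum representation; any additive constant that arises in the fibrewise normalisation can be absorbed into the choice of $L_0$, which by Definition~\ref{def: pos en rep of LG} is not part of the positive energy data.

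Functoriality and full faithfulness are then formal. An $\cA_{G,k}$-intertwiner $T\colon H\to H'$ commutes with every $\pi_I(\cA_{G,k}(I))$ and hence with every $\rho_I(\widetilde{L_IG}^k)$, so it intertwines $\rho$ and $\rho'$. Conversely, a $\widetilde{LG}$-intertwiner commutes with each $\rho_I(\widetilde{L_IG}^k)$ and therefore, by the double commutant theorem, with its weak closure, which is exactly $\pi_I(\cA_{G,k}(I))$: indeed the loop group conformal net is \emph{defined} as the weak closure of the twisted group algebra action on the vacuum, a description preserved by any normal representation. Hence $T$ is an $\cA_{G,k}$-intertwiner, and the functor in~\eqref{eq: Rep(A_G,k) --> Rep^k(LG)} is fully faithful.
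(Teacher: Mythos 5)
Your construction of the loop group action, the level-$k$ check, the use of diffeomorphism covariance together with Theorem~\ref{thm: Diff=colimit -- BIS}, and the double-commutant argument for full faithfulness (which the paper leaves implicit) all match the intended argument. The genuine gap is in the positive energy step. First, positivity of the rotation generator is not something you can get by ``a direct integral decomposition of $\pi$ followed by a fibrewise comparison with the vacuum representation'': for an arbitrary representation of the net there is no a priori relation between its rotation generator and that of the vacuum -- establishing such a relation is exactly the hard ``local equivalence'' direction that the paper explicitly does not prove. What the paper does instead is invoke the theorem of Weiner \cite{MR2231680}, which says that once one has an action of the (covering of the) diffeomorphism group implementing covariance, the generator $L_0$ automatically has positive spectrum. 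Your proposal neither cites this nor supplies a substitute, so positivity is unproved as written. (There is also a technical obstruction to your route: disintegration theory applies to separable locally compact groups, and $\widetilde{LG}$ is not locally compact, so one cannot blithely decompose $\pi$ as a direct integral of net/loop group representations.)

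Second, the colimit of Theorem~\ref{thm: Diff=colimit -- BIS} produces an action of $\Diff_+^{\R\times\Z}(S^1)$, so the rotation subgroup acting on $H$ is the universal cover $(S^1)^{\Z}\cong\R$, not $S^1$; Definition~\ref{def: pos en rep of LG} requires a genuine $S^1\ltimes\widetilde{LG}$ action. Your remark that ``any additive constant can be absorbed into the choice of $L_0$'' does not settle this: the obstruction is that the central element of $(S^1)^{\Z}$ (the full rotation) need not act by the identity, and the required correction is a fibrewise character twist over the spectral decomposition of the central $\Z$ -- a shift that varies over the direct integral rather than a single constant. Moreover, after twisting one must still verify that the corrected rotation action and the $\widetilde{LG}$ action assemble into an action of the semidirect product $S^1\ltimes\widetilde{LG}$; because of the non-local-compactness issue above, the paper does this by a careful argument with countable dense subgroups $S^1_\star\ltimes\widetilde{LG}_\star$ and strong continuity. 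This whole descent-to-$S^1$ step is missing from your proposal and is not a formality.
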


\begin{proof}
Let $H$ be a representation of $\cA_{G,k}$.
By definition, $H$ is equipped with compatible actions $\cA_{G,k}(I)\to B(H)$ for all $I\subset S^1$.\vspace{-1mm}
Precomposing by the maps \eqref{eq: LIG --> U(A)}, we get a compatible system of homomorphisms $\widetilde{L_IG}^k\to U(H)$.
By Theorem~\ref{thm: LG=colimit -- BIS}, these assemble to a strongly continuous action
\begin{equation}\label{eq: assembled action}
\widetilde{LG}^k=\mathrm{colim}_{I\subset S^1}\widetilde{L_IG}^k\to\, U(H).
\end{equation}
By construction, any $\lambda\in U(1)\subset \widetilde{LG}^k$ acts by scalar multiplication by $\lambda$.

By the diffeomorphism covariance of the loop group conformal nets (\cite[Prop.\,4.3]{BDH-cn1} with \cite[Thm. 6.7]{MR733047} or \cite[Thm. 6.1.2]{MR1674631}),\vspace{-1mm}
there exist canonical maps $\Diff_0^\R(I)\to \cA_{G,k}(I)$ which assemble to homomorphisms $\Diff_0^\R(I)\ltimes \widetilde{L_IG}^k\! \to U(\cA_{G,k}(I))$.\vspace{-1mm}
Composing with the map to $U(H)$ and using \eqref{eq: assembled action}, we get a compatible system of maps
$\Diff_0^\R(I)\ltimes \widetilde{LG}^k\! \to U(H)$.
By Theorem~\ref{thm: Diff=colimit -- BIS}, these then assemble to a strongly continuous action
\begin{equation*}
\Diff_+^{\R\times \Z}(S^1)\ltimes \widetilde{LG}^k\to\, U(H).
\end{equation*}
Precomposing by the quotient map $\widetilde{LG}\to \widetilde{LG}^k$, we get an action of $\Diff_+^{\R\times \Z}(S^1)\ltimes \widetilde{LG}$ such that
every $\lambda\in U(1)\subset \widetilde{LG}$ acts by scalar multiplication by $\lambda^k$.
In particular, we get an action of $(S^1)^\Z\ltimes \widetilde{LG}$ on $H$, where $(S^1)^\Z\subset \Diff_+^{\R\times \Z}(S^1)$ denotes the universal cover of $S^1\subset \Diff_+^{\R}(S^1)$.
The main result of \cite{MR2231680} shows that the generator $L_0$ of $(S^1)^\Z$ has positive spectrum.

So far, we have constructed a representation of $\widetilde{LG}$ on $H$ that satisfies all the conditions of a level $k$ positive energy representation,
except that the $S^1$ is replaced by its universal cover $(S^1)^\Z$.
In order to show that $H$ is a positive energy representation (Definition~\ref{def: pos en rep of LG}), we need to modify the action of $(S^1)^\Z$ so that it descends to an action of $S^1$.

Decompose $H$ as a direct integral according to the characters of the central $\Z\subset (S^1)^\Z$:
\[
H=\int^\oplus_{\theta\in U(1)} H_\theta.
\]
(This direct integral will turn out to be a mere direct sum, but don't know this at the moment.)
Direct integrals for loop group representations are tricky, because disintegration theory only applies to separable locally compact groups, and $\widetilde{LG}$ is not locally compact.
So we proceed with care.
In particular, we never make the claim that the Hilbert spaces $H_\theta$ carry actions of $\widetilde{LG}$.

For each $\theta\in U(1)$, extend the character $n\mapsto \theta^n$ of $\Z$ to a character $z\mapsto z^{\log(\theta)/2\pi i}$ of $(S^1)^\Z$ (principal branch of the logarithm).
Let $\C_\theta$ denote the vector space $\C$, equipped with the action of~$(S^1)^\Z$ given by the above character.
Then the representation
\[
H':=\int^\oplus_{\theta\in U(1)} H_\theta\otimes \overline{\C_{\theta}}
\]
of $(S^1)^\Z$ descends to a representation of $S^1$ whose generator has positive spectrum (the spectrum of $L_0$ has been modified by a bounded amount).
As mere vector spaces, we have $\overline{\C_{\theta}}\cong \C$, and therefore $H'\cong H$.
Use this isomorphism to equip $H'$ with an action of $\widetilde{LG}$.
We wish to show that the actions of $S^1$ and of $\widetilde{LG}$ on $H'$ assemble to an action of $S^1\ltimes\widetilde{LG}$.

Pick a countable dense subgroup $(S^1)^\Z_\star\subset (S^1)^\Z$ that contains the central $\Z$,
and let $S^1_\star:=(S^1)^\Z_\star/\Z\subset S^1$.
Pick an $(S^1)^\Z_\star$-invariant countable dense subgroup $\widetilde{LG}_\star\subset\widetilde{LG}$.
Since $\Z$ is central in $(S^1)^\Z_\star\ltimes\widetilde{LG}_\star$, the Hilbert spaces $H_\theta$
carry representations of $(S^1)^\Z_\star\ltimes\widetilde{LG}_\star$ for almost all $\theta$.
By construction, on almost each $H_\theta\otimes \overline{\C_{\theta}}$, the action of $(S^1)^\Z_\star\ltimes\widetilde{LG}_\star$ descends to an action of $S^1_\star\ltimes\widetilde{LG}_\star$.
The actions of $S^1_\star$ and $\widetilde{LG}_\star$ on $H'$ therefore assemble to an action of $S^1_\star\ltimes\widetilde{LG}_\star$.
At last, since $S^1_\star\ltimes\widetilde{LG}_\star$ is dense in $S^1\ltimes\widetilde{LG}$ and since the actions of $S^1$ and $\widetilde{LG}$ on $H'$ are strongly continuous, 
these two actions assemble to an action of $S^1\ltimes\widetilde{LG}$.
This finishes the proof that $H'$, and hence $H$, is a positive energy representation of $\widetilde{LG}$.
\end{proof}

Positive energy representations of $S^1\ltimes \widetilde{LG}$ come with no smoothness assumptions.
It is therefore not clear, a priori, that it should be possible to differentiate them.
Zellner showed that, in such a representation, the set of smooth vectors is always dense \cite[Thm\,2.16]{Zellner}.
One can therefore differentiate it to a representation of the corresponding Kac-Moody Lie algebra.
We present an alternative proof of that same result.
(Our proof does not cover the case $G=SU(2)$:
it relies on the fact that every rank $2$ sub diagram of the affine Dynkin diagram of $G$ is of finite type, something which holds for all groups except for $SU(2)$.)

\begin{theorem}\label{thm: Rep(LG) --> Rep(hat g)}
Let $G\not = SU(2)$.
A level $k$ positive energy representation $\widetilde{LG}\to U(H)$ can be differentiated to a level $k$ integrable positive energy representation of $\hat \g$ on a dense subset of~$H$.
This construction yields an equivalence of categories
\[
\Rep^k(LG) \,\,\to\,\, \Rep^k_{\mathrm{f}}(\hat \g)\otimes_{\mathsf{Vec}_{\mathrm{f}}} \mathsf {Hilb}.
\]
\end{theorem}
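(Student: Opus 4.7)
The plan is to differentiate the $\widetilde{LG}$-action on $H$ by restricting $\rho$ to a family of finite-dimensional compact Lie subgroups indexed by the simple roots, and pairs of simple roots, of the affine Dynkin diagram of $\hat\g$, thereby sidestepping any global density-of-smooth-vectors argument. First I extend $\rho$ to an action of $S^1\ltimes\widetilde{LG}$ as in Definition~\ref{def: pos en rep of LG} and decompose $H=\bigoplus_{n\ge 0}H(n)$ into $L_0$-weight spaces; positive spectrum of $L_0$ forces $H(n)=0$ for $n<0$. Set $H^{\mathrm{fin}}:=\bigoplus_{n\ge 0}H(n)$, which is dense in $H$ and will be the domain of the $\hat\g$-action. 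For each simple root $\alpha_i$ of the affine Dynkin diagram, identify a compact subgroup $SU(2)_{\alpha_i}\subset S^1\ltimes\widetilde{LG}$ with Lie algebra $\mathfrak{sl}_2^{(i)}$: the finite simple roots give subgroups of the constant-loop copy of $G\subset\widetilde{LG}$, while the affine simple root $\alpha_0$ uses a non-constant loop in the $SU(2)$ of the highest root, combined with an $S^1$-rotation. Peter--Weyl applied to $\rho|_{SU(2)_{\alpha_i}}$ gives a decomposition into finite-dimensional irreducibles that differentiates to a skew-adjoint $\mathfrak{sl}_2^{(i)}$-action on smooth vectors. Since $SU(2)_{\alpha_i}$ shifts $L_0$ by a bounded amount, every vector of $H^{\mathrm{fin}}$ is smooth for this action, yielding Chevalley generators $e_i,f_i,h_i$ on $H^{\mathrm{fin}}$.

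The central step, and the main obstacle, is verifying the Serre relations among these generators. For each pair $(\alpha_i,\alpha_j)$ with $i\ne j$, the assumption $G\ne SU(2)$ guarantees that the corresponding rank-$2$ sub-diagram of the affine Dynkin diagram is of finite type, so the subalgebra $\g_{ij}\subset\C\ltimes\hat\g$ generated by the six Chevalley generators of indices $i,j$ is finite-dimensional and semisimple. It integrates to a compact connected Lie subgroup $K_{ij}\subset S^1\ltimes\widetilde{LG}$ containing both $SU(2)_{\alpha_i}$ and $SU(2)_{\alpha_j}$, and Peter--Weyl applied to $\rho|_{K_{ij}}$ yields a smooth action differentiating to a $\g_{ij}$-representation on a dense subspace. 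Uniqueness of the already-constructed $\mathfrak{sl}_2^{(i)}$- and $\mathfrak{sl}_2^{(j)}$-actions identifies this with our Chevalley generators on $H^{\mathrm{fin}}$, so the Serre relations for $(\alpha_i,\alpha_j)$ are inherited from the semisimple structure of $\g_{ij}$. This argument breaks down for $G=SU(2)$ because the only non-trivial rank-$2$ sub-diagram of the affine Dynkin diagram is $A_1^{(1)}$ itself, for which no such compact $K_{ij}$ exists.

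Assembling these data produces a level-$k$ integrable positive-energy $\hat\g$-module structure on $H^{\mathrm{fin}}$: the level-$k$ and positive-energy conditions are inherited from $\rho$, and local nilpotency of the operators $Xz^{-n}$ for nilpotent $X\in\g$ follows from the $\mathfrak{sl}_2$-analysis above. For the categorical equivalence, the inverse functor is the classical integration of integrable highest-weight $\hat\g$-modules to unitary $\widetilde{LG}$-representations on their Hilbert completions, tensored with multiplicity Hilbert spaces. Full faithfulness follows from density of $H^{\mathrm{fin}}$ together with strong continuity, since every bounded intertwiner of $\widetilde{LG}$-reps commutes with $L_0$, preserves each $H(n)$, and hence restricts to a $\hat\g$-intertwiner on $H^{\mathrm{fin}}$, while every algebraic $\hat\g$-intertwiner extends uniquely by continuity to the Hilbert-space completion.
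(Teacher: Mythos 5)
Your overall strategy is the same as the paper's: avoid a global density-of-smooth-vectors theorem by differentiating only along the compact $SU(2)$'s attached to the nodes of the affine Dynkin diagram, verify the Serre relations inside the compact rank-two subgroups coming from finite-type sub-diagrams (which is exactly where $G\neq SU(2)$ enters), and then decompose the resulting $\C\ltimes\hat\g$-module into irreducibles tensored with multiplicity Hilbert spaces. However, there is a genuine gap at the crucial analytic step. You assert that every vector of $H^{\mathrm{fin}}=\bigoplus_n H(n)$ is a smooth vector for each $SU(2)_{\alpha_i}$ ``since $SU(2)_{\alpha_i}$ shifts $L_0$ by a bounded amount.'' That does not follow. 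For a finite node, $SU(2)_{\alpha_i}$ preserves each energy eigenspace $H(n)$, but $H(n)$ is in general an infinite-dimensional Hilbert space, and a priori the restriction of $\rho$ to $SU(2)_{\alpha_i}$ on $H(n)$ could contain infinitely many isomorphism classes of irreducibles; in that case the $\mathfrak{sl}_2^{(i)}$-operators are unbounded on $H(n)$ and a generic finite-energy vector is \emph{not} smooth, so your Chevalley generators $e_i,f_i,h_i$ are not defined on $H^{\mathrm{fin}}$. (For the affine node the same issue appears: a vector of fixed energy can a priori have components in $SU(2)_{\alpha_0}$-irreps of unbounded highest weight, since positivity of $L_0$ alone does not exclude this.) The same gap recurs in your Serre-relation step: Peter--Weyl gives a $\g_{ij}$-action only on the (unspecified) dense subspace of $K_{ij}$-smooth vectors, and identities among unbounded operators that hold on some dense domain do not automatically hold on $H^{\mathrm{fin}}$ unless you know $H^{\mathrm{fin}}$ lies in that domain.

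What is missing is precisely the paper's key technical input. One must decompose $H$ under the full maximal torus $T_{S^1\ltimes\widetilde{LG}}=S^1\times U(1)\times T_G$ (not just under $L_0$), observe that the set $P$ of occurring weights is invariant under the affine Weyl group and, by positivity of the energy, contained in a half-space, hence contained in a paraboloid. This paraboloid bound is what guarantees that on each $T_a$-isotypic block $\bigoplus_{\lambda\in\Lambda(\sigma)}H_\lambda$ only finitely many torus weights, and therefore only finitely many isomorphism classes of $SU(2)_a$-representations, occur; consequently $\mathfrak{su}(2)_a$ (and likewise each rank-two $\g_{ab}$) acts by bounded, everywhere-defined operators on these blocks, giving a genuine action on the algebraic sum of weight spaces (which, thanks to the same paraboloid bound, coincides with your $H^{\mathrm{fin}}$). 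Without this affine-Weyl-group/paraboloid argument --- or an appeal to Zellner's density-of-smooth-vectors theorem, which you explicitly set out to avoid --- the construction of the $\hat\g$-action on a dense domain is unjustified, and with it the rest of your proof. The remaining parts (integrability/local nilpotency from the $SU(2)$-analysis, the integration functor as inverse, and the decomposition into $H(\mu)\otimes \bar L_\mu$ with closed multiplicity spaces) are in line with the paper once this step is repaired.
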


\begin{remark}\label{rem Zellner did it}
For $G=SU(2)$ (and indeed for any compact simple group Lie group $G$), Theorem~\ref{thm: Rep(LG) --> Rep(hat g)} follows from \cite[Thm\,2.16]{Zellner}.
\end{remark}

\begin{proof}
An integration functor $\Rep^k_{\mathrm{f}}(\hat \g) \to \Rep^k(LG)$ was constructed in \cite{MR733047} and \cite{MR1674631}.
The functor sends irreducible representations of $\hat \g$ to irreducible representations of $\widetilde{LG}$. It is therefore visibly fully faithful.

The category $\Rep^k(LG)$ is tensored over the category of Hilbert spaces (i.e., the tensor product of a positive energy $\widetilde{LG}$ representation with a Hilbert space is again a positive energy $\widetilde{LG}$ representation). So the above functor extends to a functor
\begin{equation}\label{eq: differentiation functor}
\Rep^k_{\mathrm{f}}(\hat \g)\otimes_{\mathsf{Vec}_{\mathrm{f}}} \mathsf {Hilb} \to \Rep^k(LG),
\end{equation}
which is again visibly fully faithful.
In order to show that the functor \eqref{eq: differentiation functor} is an equivalence of categories, we need to show that it is essentially surjective.

Let $T_G$, $T_{\widetilde{LG}}$, $T_{S^1\ltimes\widetilde{LG}}$ be the maximal tori of $G$, $\widetilde{LG}$, and $S^1\ltimes\widetilde{LG}$, so that
\[
T_{\widetilde{LG}}=U(1)\times T_G
\qquad\text{and}\qquad
T_{S^1\ltimes\widetilde{LG}}=S^1\times U(1)\times T_G.
\]
Let $\Lambda_G$, $\Lambda_{\widetilde{LG}}$, $\Lambda_{S^1\ltimes\widetilde{LG}}$ be the character lattices of $T_G$, $T_{\widetilde{LG}}$, and $T_{S^1\ltimes\widetilde{LG}}$,
and let
\[
\Lambda_k:= \big\{\chi\in \Lambda_{S^1\ltimes\widetilde{LG}}\,\big|\,\chi(\lambda)=\lambda^k\,\,\, \text{for every } \lambda\in U(1)\big\}
\]
(an affine sublattice canonically isomorphic to $\Z\times \Lambda_G$).
We write $\Lambda_k^+\subset \Lambda_k$ for the set of possible highest weights of irreducible highest weight level $k$ integrable representations of $\C\ltimes \hat\g$ \cite[Chapt.\,10]{MR1104219}.
Let $\pi:\Lambda_k\to \Lambda_G$ be the projection map, and let $\mathrm{A}_k:=\pi(\Lambda_k^+)\subset \Lambda_G$.
The finite set $\mathrm{A}_k$ parametrizes the isomorphism classes of irreducible objects of $\Rep^k(\hat \g)$.

Let $H$ be a level $k$ positive energy representation of $\widetilde{LG}$.
By definition, the action of $\widetilde{LG}$ extends (in a non-unique way) to an action of $S^1\ltimes\widetilde{LG}$
such that the generator of $S^1$ has positive spectrum.
Pick such as extension of the action.
Then $H$ decomposes as the Hilbert space direct sum of its weight spaces:
\[
H={\bigoplus_{\lambda\in\Lambda_k}}^{\tikz{
\useasboundingbox (0,-.1) rectangle (.15,.1);
\node[scale=.9] at (0,0){$\scriptstyle\ell_2$};}} H_\lambda.
\]
Let $P:=\{\lambda\in \Lambda_k\,|\,H_\lambda\not =0\}$.
The affine Weyl group $\widehat W= NT_{S^1\ltimes\widetilde{LG}}/T_{S^1\ltimes\widetilde{LG}}$ acts on $\Lambda_k$, and preserves $P$.
By the positive energy condition, $P$ is contained in the ``half-space'' $\Z_{\ge 0}\times \Lambda_G\subset \Z\times \Lambda_G = \Lambda_k$.
Combining this with its $\widehat W$-invariance, we learn that $P$ is contained in a paraboloid.

Let $D$ be the affine Dynkin diagram associated to $G$.
Every node $a\in D$ corresponds to an embedding $SU(2)\hookrightarrow \widetilde{LG}$.
We write $SU(2)_a$ for the subgroup of $\widetilde{LG}$ which is the image of that embedding.
Let $T_a\subset T_{S^1\ltimes\widetilde{LG}}$ be the subgroup of the torus which centralizes  $SU(2)_a$,
and let $\Lambda_a$ be the corresponding quotient of $\Lambda_{S^1\ltimes\widetilde{LG}}$, with projection map $p_a:\Lambda_{S^1\ltimes\widetilde{LG}}\to \Lambda_a$
($\Lambda_a$ is the character lattice of $T_a$). The kernel of $p_a$ has rank one.
Let us also define $\Lambda_{k,a}:=p_a(\Lambda_k)$.

Since $P$ is contained inside a paraboloid,
for each $\sigma\in \Lambda_{k,a}$, the set $\Lambda(\sigma):=\{\lambda\in P\,|\,p_a(\lambda)=\sigma\}$ is finite.
It follows that, for every $\sigma$, the representation of $SU(2)_a$ on $\bigoplus_{\lambda\in\Lambda(\sigma)} H_\lambda$ contains only finitely many isomorphism classes of irreducible $SU(2)_a$ representations. In particular, the action of $\mathfrak{su}(2)_a$ on $\bigoplus_{\lambda\in\Lambda(\sigma)} H_\lambda$ is by bounded operators (which are in particular everywhere defined).
In this way, we obtain actions of the Lie algebras $\mathfrak{su}(2)_a$ on the algebraic direct sum
\[
\check H\,:=\,\bigoplus_{\lambda\in\Lambda_k} H_\lambda\,\subset\, H.
\]
Those Lie algebras contain all the generators $\{E_a, F_a, H_a\}$ of the Serre presentation of~$\hat\g$.

To check that the above generators satisfy the Serre relations, we consider rank two subgroups of $\widetilde{LG}$.
For every pairs of vertices $a,b\in D$, the subgroup $G_{ab}\subset\widetilde{LG}$ generated by $SU(2)_a$ and $SU(2)_b$ is compact
as it correspnds to the sub-Dynkin diagram of $D$ on the two vertices, and the latter is either $A_1\sqcup A_1$, $A_2$, $B_2$, or $G_2$ 
--- this is where we use that $G$ is not $SU(2)$.
Applying the same arguments as above, we see that there are actions of the corresponding Lie algebras $\mathfrak g_{ab}$ on $\check H$.
Every Serre relation is detected in one of the Lie algebras $\mathfrak g_{ab}$.
So the generators $\{E_a, F_a, H_a\}$ satisfy all the relations and we get an action of $\hat \g$ on $\check H$.
Finally, we can use the action of $T_{S^1\ltimes\widetilde{LG}}$ on $H$ (and thus on $\check H$) to extend the action of $\hat\g$ on $\check H$ to a action of $\C\ltimes \hat\g$.

Let $L_\mu$ be the irreducible highest weight representation of $\C\ltimes \hat\g$ with highest weight $\mu\in \Lambda_k^+$.
We write $H(\mu):=\mathrm{Hom}_{\C\ltimes \hat\g}(L_\mu,\check H)$ for the multiplicity space of $L_\mu$ inside $\check H$, so that
\begin{equation}\label{eq: bigoplus for H_lambda}
\check H\,=\,\bigoplus_{\mu\in\Lambda_k^+}\, H(\mu)\otimes L_\mu.
\end{equation}
($\check H$ is a representation of $\C\ltimes \hat\g$ satisfying the three conditions listed in Definition~\ref{def: Rep Lg}, and the category of such representations is semi-simple in the sense that every object is a direct sum of irreducible ones \cite[Chapt.\,9,\,10]{MR1104219}.)
The multiplicity space $H(\mu)$ can also be described as the joint kernel of the lowering operators $F_a$ acting $H_\mu$.
By this second description, we see that $H(\mu)$ is a closed subspace of $H_\mu$, and thus a Hilbert space in its own right.
Letting $\bar L_\mu$ be the Hilbert space completion of $L_\mu$,
we can then upgrade the isomorphism \eqref{eq: bigoplus for H_lambda} to an isomorphism of Hilbert spaces:
\begin{equation} \label{eq: bigoplus for H_lambda -- BIS}
H\,=\,{\bigoplus_{\mu\in\Lambda_k^+}}^{\tikz{
\useasboundingbox (0,-.1) rectangle (.15,.1);
\node[scale=.9] at (0,0){$\scriptstyle\ell_2$};}}\, H(\mu)
\otimes \bar L_\mu
\end{equation}
(where $\otimes$ now denotes the Hilbert space tensor product).

Recall the projection $\pi:\Lambda_k^+\to \mathrm{A}_k$.
Two representations $L_\mu$ and $L_{\mu'}$ of $\C\ltimes \hat \g$ are isomorphic as representations of $\hat \g$ if and only if $\pi(\mu)=\pi(\mu')$.
For $\lambda\in\mathrm{A}_k$, let $H[\lambda]:=\bigoplus_{\pi(\mu)=\lambda}^{\tikz{
\useasboundingbox (-.13,-.1) rectangle (.15,.1);
\node[scale=.9] at (0,0){$\scriptstyle\ell_2$};}} H(\mu)$.
The decomposition \eqref{eq: bigoplus for H_lambda -- BIS} then induces a direct sum decomposition
\[
H\;=\;{\bigoplus_{\lambda\in\mathrm{A}_k}}\,\, H[\lambda]
\otimes \bar L_\lambda.
\]
This finishes the proof that $H$ is in the essential image of the functor \eqref{eq: differentiation functor}.
\end{proof}

\subsection{The based loop group and its representations}

Let $\Rep^k_{\scriptscriptstyle \mathrm{l.n.\!\!}}(LG)$ be the essential image of the functor \eqref{eq: Rep(A_G,k) --> Rep^k(LG)}.
We call it the category of \emph{locally normal representations of $\widetilde{LG}$ at level $k$}.
By Theorem~\ref{thm: Rep(LG) --> Rep(hat g)} (and Remark~\ref{rem Zellner did it}),
Conjecture~\ref{conj: Rep(A_G,k) cong Rep^k(hat g)} is equivalent to the statement that $\Rep^k_{\scriptscriptstyle \mathrm{l.n.\!\!}}(LG)=\Rep^k(LG)$ (the latter was defined in Definition~\ref{def: pos en rep of LG}).

In \cite{CS(pt)}, we introduced the category of \emph{locally normal representations}\footnote{In the first versions of \cite{CS(pt)}, we called these representations ``positive energy representations''.
We call them here `locally normal representations' and
reserve the term `positive energy representations' for another type of representations
(we conjecture that the two conditions are equivalent --- see Conjecture~\ref{conj: stronger conjecture}).} of $\widetilde{\Omega G}$ at level $k$ (Definition~\ref{def: locally normal rep of Omega G}).
We denote it here by $\Rep^k_{\scriptscriptstyle \mathrm{l.n.\!\!}}(\Omega G)$.
In that same preprint, we announced that the Drinfel'd center of 
the category of locally normal representations of the based loop group at level $k$ is equivalent to 
the category of locally normal representations of the free loop group at level $k$,
where the latter is equipped with the fusion and braiding inherited from $\Rep(\cA_{G,k})$:
\begin{equation}\label{eq: Z(Rep^k(Omega G))Rep^k(LG)}
Z\big(\Rep^k_{\scriptscriptstyle \mathrm{l.n.\!\!}}(\Omega G)\big)\,=\,\Rep^k_{\scriptscriptstyle \mathrm{l.n.\!\!}}(LG).
\end{equation}
In the more recent preprint \cite{Bicommutant-categories-from-conformal-nets}, we considered the category $T_{\cA_{G,k}}$ of solitons of the conformal net $\cA_{G,k}$, and proved that
\[
Z(T_{\cA_{G,k}})=\Rep^k_{\scriptscriptstyle \mathrm{l.n.\!\!}}(LG).
\]
In order to complete our proof of \eqref{eq: Z(Rep^k(Omega G))Rep^k(LG)}, we need to identify $\Rep^k_{\scriptscriptstyle \mathrm{l.n.\!\!}}(\Omega G)$ with $T_{\cA_{G,k}}$.
This is the main result of the present section.

\subsubsection{Solitons and representations of $\Omega G$}\label{sec: Solitons and representations of OmG}

We work with the standard circle $S^1$, and the base point $p:=1\in S^1$.

\begin{definition}\label{def: locally normal rep of Omega G}
A strongly continuous representation $\rho:\widetilde{\Omega G}\to U(H)$ is a locally normal level~$k$ representation if
\begin{enumerate}
\item
$\lambda\in U(1)\subset \widetilde{\Omega G}$ acts by scalar multiplication by $\lambda^k$, and
\item
for every interval $I \subset S^1$, $p \not \in \mathring I$, the action of $\widetilde{L_IG}$ extends to an action of the von Neumann algebra $\cA_{G,k}(I)$.
\end{enumerate}
We write $\Rep^k_{\scriptscriptstyle \mathrm{l.n.\!\!}}(\Omega G)$ for the category of locally normal representations of $\widetilde{\Omega G}$ at level~$k$.
\end{definition}

For every conformal net $\cA$, we also have its category of solitons \cite{MR1652746, MR1892455, MR1332979, MR2100058}:

\begin{definition}
A soliton (or solitonic representation) of a conformal net $\cA$ on a Hilbert space $H$ is a collection of actions $\rho_I:\cA(I)\to B(H)$
for every interval $I\subset S^1$ with $p \not \in \mathring I$ which satisfy $\rho_I|_{\cA(J)}=\rho_J$ for every $J\subset I\subset S^1$.
\end{definition}

There is an obvious fully faithful functor
\begin{equation}\label{eq: Rep^k(Omega G) --> eq: Sol(A_G,k)}
\Rep^k_{\scriptscriptstyle \mathrm{l.n.\!\!}}(\Omega G)\to T_{\cA_{G,k}}
\end{equation}
which takes a locally normal representation of $\widetilde{\Omega G}$
and only remembers the actions of the von Neumann algebras $\cA_{G,k}(I)$, $p \not \in \mathring I$.

\begin{theorem}
The functor \eqref{eq: Rep^k(Omega G) --> eq: Sol(A_G,k)} is an equivalence of categories.
\end{theorem}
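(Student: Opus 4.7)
The functor \eqref{eq: Rep^k(Omega G) --> eq: Sol(A_G,k)} is already fully faithful by construction, so the plan is to prove essential surjectivity: given a soliton $(\rho_I : \cA_{G,k}(I) \to B(H))_{p \not\in \mathring I}$, I construct a locally normal level~$k$ representation of $\widetilde{\Omega G}$ whose restriction to the von Neumann algebras $\cA_{G,k}(I)$ recovers $\rho$.

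The main step is to assemble the soliton data into a group action. For each interval $I\subset S^1$ with $p\notin\mathring I$, composing the canonical homomorphism $\widetilde{L_IG}^k \to U(\cA_{G,k}(I))$ from \eqref{eq: LIG --> U(A)} with $\rho_I$ gives a strongly continuous group homomorphism $\widetilde{L_IG}^k \to U(H)$ under which the central $U(1)$ acts by scalar multiplication. By the compatibility $\rho_I|_{\cA_{G,k}(J)} = \rho_J$ for $J \subset I$, these homomorphisms fit into a cocone on the diagram $\{\widetilde{L_IG}^k\}_{p\notin\mathring I}$ with values in the topological group $U(H)$ (equipped with the strong operator topology). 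Since $U(H)$ is Hausdorff, this cocone factors through the Hausdorff colimit, which by the analogue \eqref{eq: based loop gps as colimit -- BIS} of Proposition~\ref{prop: colim^H for LG} is canonically identified with $\widetilde{\Omega G}^k$. Precomposing with the quotient $\widetilde{\Omega G} \twoheadrightarrow \widetilde{\Omega G}^k$ yields a strongly continuous homomorphism $\widetilde{\Omega G} \to U(H)$ on which the central $U(1)$ acts by $\lambda \mapsto \lambda^k$.

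It remains to verify the two defining conditions of Definition~\ref{def: locally normal rep of Omega G}. The level~$k$ condition is immediate from the construction. For local normality, observe that for any $I$ with $p \notin \mathring I$, the action of $\widetilde{L_IG}$ on $H$ obtained above factors as $\widetilde{L_IG} \to \widetilde{L_IG}^k \hookrightarrow U(\cA_{G,k}(I)) \xrightarrow{\rho_I} U(H)$, and the last two arrows extend to the normal action $\rho_I$ of the whole von Neumann algebra $\cA_{G,k}(I)$. Hence the resulting object lies in $\Rep^k_{\scriptscriptstyle \mathrm{l.n.\!\!}}(\Omega G)$, and by construction the functor \eqref{eq: Rep^k(Omega G) --> eq: Sol(A_G,k)} sends it back to the original soliton.

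The main potential obstacle is the Hausdorff colimit step: one must know that the collection of continuous representations $\widetilde{L_IG}^k \to U(H)$ really does assemble into a \emph{continuous} representation of the colimit, and this is precisely why the strengthening from $\mathrm{colim}$ to $\mathrm{colim}^{\scriptscriptstyle \mathrm H}$ in Proposition~\ref{prop: colim^H for LG} (and its central-extension variant \eqref{eq: based loop gps as colimit -- BIS}) is essential. Once this identification is in hand, the argument is a direct application of the universal property and reduces the soliton-to-representation problem to gluing, exactly as in the proof of Theorem~\ref{thm: Rep(A) --> Rep(LG)}.
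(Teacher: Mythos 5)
Your proposal is correct and follows essentially the same route as the paper: precompose the soliton's actions with the maps \eqref{eq: LIG --> U(A)}, assemble them via the Hausdorff colimit identification \eqref{eq: based loop gps as colimit -- BIS} (using that $U(H)$ is Hausdorff), pull back along $\widetilde{\Omega G}\to\widetilde{\Omega G}^k$, and note that local normality holds by construction. No substantive differences from the paper's argument.
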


\begin{proof}
We need to show that the functor is essentially surjective.
Let $H$ be a soliton of $\cA_{G,k}$.
By definition, $H$ is equipped with compatible actions $\cA_{G,k}(I)\to B(H)$ for all intervals $I\subset S^1$, $p \not \in \mathring I$.\vspace{-1mm}
Precomposing by the maps \eqref{eq: LIG --> U(A)}, we get homomorphisms $\widetilde{L_IG}^k\to U(H)$.
By \eqref{eq: based loop gps as colimit -- BIS}, and using that $U(H)$ is Hausdorff,
these assemble to a strongly continuous action
\begin{equation*}
\widetilde{\Omega G}^k=\underset{I\subset S^1,\,p\;\!\not\in\;\!\mathring I\,\,\,}{\mathrm{colim^{\scriptscriptstyle H}}} \widetilde{L_IG}^k
\,\to\,\, U(H).
\end{equation*}
Clearly, any $\lambda\in U(1)\subset \widetilde{\Omega G}^k$ acts by scalar multiplication by $\lambda$.\vspace{-.7mm}
Precomposing by the quotient map $\widetilde{\Omega G}\to \widetilde{\Omega G}^k$, we get an action of $\widetilde{\Omega G}$ such that each
$\lambda\in U(1)\subset \widetilde{\Omega G}$ acts by $\lambda^k$.
By construction, this is a locally normal representation.
\end{proof}

Solitonic representations are also equipped with a natural action of the based diffeomorphism group.
Let $H$ be a soliton of $\cA_{G,k}$.\vspace{-1mm}
Recall that, by diffeomorphism covariance,\vspace{-1mm}
there exist homomorphisms $\Diff_0^\R(I)\ltimes \widetilde{L_IG}^k\! \to U(\cA_{G,k}(I))$ for every interval $I\subset S^1$ with $p \not \in \mathring I$.
Composing with the projection $\widetilde{L_IG}\to \widetilde{L_IG}^k$ and with the action $U(\cA_{G,k}(I))\to U(H)$, we get a compatible family of homomorphisms
\[
\Diff_0^\R(I)\ltimes \widetilde{L_IG} \to U(H).
\]
By Propositions~\ref{prop: colim^H for LG} and~\ref{prop: based Diff colimit}, these assemble to a strongly continuous action
\begin{equation} \label{eq: assembled action + Diff}
\Diff_*^\R(S^1)\ltimes \widetilde{\Omega G}\,=\!
\underset{I\subset S^1,\,p\;\!\not\in\;\!\mathring I\,\,\,}{\mathrm{colim^{\scriptscriptstyle H}}}
\Diff_0^\R(I)\ltimes \widetilde{L_IG}
\,\to\, U(H).
\end{equation}

Based on results of Carpi and Weiner \cite{MR2166846, MR2231680},
it was proved in \cite[App.\,A]{arXiv:1811.04501} that the maps $\Diff_0^\R(I) \to \cA_{G,k}(I)$ extend to a certain larger group involving non-smooth diffeomorphisms.
Given an interval $I$, let $\Diff_{1,\mathrm{ps}}(I)$ be the group of orientation preserving piecewise smooth $C^1$ diffeomorphisms of $I$ whose derivative is $1$ at the boundary points.
And let $\Diff_{1,\mathrm{ps}}(S^1)$ be the group of orientation preserving piecewise smooth $C^1$ diffeomorphisms of $S^1$ that fix the base point $p$, and whose derivative is $1$ at that point.
Let $\Diff_{1,\mathrm{ps}}^\R(I)$ and $\Diff_{1,\mathrm{ps}}^\R(S^1)$ be the corresponding central extensions by $\R$, constructed by using the same cocycle that was used to construct the central extensions of $\Diff_0(I)$ and of~$\Diff_*(I)$.

By \cite[App.\,A]{arXiv:1811.04501}, the maps $\Diff_0^\R(I) \to \cA_{G,k}(I)$ extend to the larger group $\Diff_{1,\mathrm{ps}}^\R(I)$.
The proofs in Section \ref{sec:Diffeomorphism groups} go through with the groups 
$\Diff_{1,\mathrm{ps}}(I)$, $\Diff_{1,\mathrm{ps}}^\R(I)$ and $\Diff_{1,\mathrm{ps}}^\R(S^1)$ in place of $\Diff_0(I)$, $\Diff_0^\R(I)$ and $\Diff_*^\R(S^1)$.
In particular, the homomorphism~\eqref{eq: assembled action + Diff} extends to a homomorphism
\[
\Diff_{1,\mathrm{ps}}^\R(S^1)\ltimes \widetilde{\Omega G}\,\to\, U(H).
\]
Let $\R\subset \Diff_{1,\mathrm{ps}}^\R(S^1)$ be (the canonical lift of) the subgroup of M\"obius transformations that fixes~$p$.
Upon mapping $S^1$ to the real line via the stereographic projection that sends $p=1$ to $\infty$, this group gets identified with the group of translations of the real line.
We write $P$ for the infinitesimal generator, and call it the energy-momentum operator
(if the Hilbert space has an action of $\Diff^\R(S^1)$, then the energy-momentum operator is given by $P=-L_{-1}+2L_0-L_1$).

\begin{conjecture}\label{conj: weaker conjecture}
For every locally normal representation $H\in\Rep^k_{\scriptscriptstyle \mathrm{l.n.\!\!}}(\Omega G)$ of the based loop group, the energy-momentum operator $P$ has positive spectrum.
\end{conjecture}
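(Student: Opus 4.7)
The plan is to derive positivity of $P$ from the structure of half-sided modular inclusions of von Neumann algebras, in the spirit of Borchers' and Wiesbrock's theorems in algebraic quantum field theory.

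First, using the equivalence $\Rep^k_{\scriptscriptstyle \mathrm{l.n.\!\!}}(\Omega G)\cong T_{\cA_{G,k}}$ just established, it is enough to show that for any soliton $H\in T_{\cA_{G,k}}$ the generator $P$ of the translation subgroup $\R\subset \Diff_{1,\mathrm{ps}}^\R(S^1)$ has positive spectrum. Via the stereographic identification $S^1\setminus\{p\}\cong\R$, let $I_+\subset S^1\setminus\{p\}$ be the half-line corresponding to $\R_{>0}$, set $M:=\cA_{G,k}(I_+)$ acting on $H$ through the soliton structure, and write $T(t):=e^{itP}$. The basic geometric observation is that $T(t)(I_+)\subset I_+$ for every $t\ge 0$, so
\[
T(t)\,M\,T(t)^{*}\;\subset\; M \qquad (t\ge 0),
\]
exhibiting $T$ as a one-parameter semigroup of unital endomorphisms of $M$.

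The crux of the proof would be to produce a vector $\Omega\in H$ which is simultaneously cyclic and separating for $M$ and invariant under $T(t)$. Given such an $\Omega$, positivity of $P$ is automatic: passing to the real standard subspace $K:=\overline{\{m\Omega : m=m^{*}\in M\}}$ one obtains $T(t)K\subset K$ for $t\ge 0$, whence Longo's theorem on one-sided translation-invariant standard subspaces yields $P\ge 0$. Equivalently, the inclusion $T(1)MT(1)^{*}\subset M$ is a half-sided modular inclusion with respect to $\Omega$, and Wiesbrock's theorem converts this into the desired spectrum condition; the dilation subgroup of $\Diff_{1,\mathrm{ps}}^\R(S^1)$ fixing $p$ should then be identified a posteriori with the modular automorphism group of $(M,\Omega)$, closing the circle.

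The main obstacle, and presumably the reason the statement is formulated as a conjecture, is the construction of $\Omega$. In the vacuum representation one takes the vacuum vector and invokes Reeh--Schlieder for cyclicity; for a general soliton neither ingredient is available a priori. A plausible route is to reduce to irreducible solitons by a direct integral decomposition over the center of $\bigvee_{p\notin\mathring I}\cA_{G,k}(I)$, classify the irreducibles explicitly (conjecturally as twisted modules coming from automorphisms of $\hat\g$ or of the associated vertex operator algebra), and verify positivity of $P$ on each such model by hand, using that they arise as integrals of positive energy Virasoro representations via the Sugawara construction. An alternative would be to exploit the energy bounds of \cite{arXiv:1811.04501} to embed the soliton into a representation built from the vacuum sector, where positivity of $P$ is a standard consequence of M\"obius covariance.
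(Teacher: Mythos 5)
Note first that the paper does not contain a proof of this statement: it is recorded as a conjecture, with its proof attributed to Del Vecchio, Iovieno and Tanimoto \cite[Thm.~3.4]{arXiv:1811.04501}. So your argument has to stand on its own, and as written it does not. The decisive gap is the one you flag yourself: the existence of a translation-invariant vector $\Omega\in H$ that is cyclic and separating for $M=\cA_{G,k}(I_+)$. This is not merely ``unavailable a priori''; for a general object of $\Rep^k_{\scriptscriptstyle \mathrm{l.n.}}(\Omega G)$ it does not exist. A $T$-invariant vector is a $0$-eigenvector of $P$, and already the solitons obtained by restricting a non-vacuum representation of $\cA_{G,k}$ to the intervals not containing $p$ carry no nonzero translation-invariant vector: in a positive-energy M\"obius-covariant sector with lowest $L_0$-eigenvalue $h>0$, the translation generator fixing $p$ has trivial kernel (any translation-invariant vector would be invariant under the whole M\"obius group). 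So the modular-theoretic machinery cannot even be initialized on $H$ itself, except essentially for the vacuum soliton.

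Even granting such an $\Omega$, the claim that positivity is then ``automatic'' conflates the two directions of the Borchers--Wiesbrock theory. Borchers' theorem takes $P\ge 0$ as a \emph{hypothesis} and derives the commutation relations with the modular data; Wiesbrock's theorem does produce a positive generator, but its input is a half-sided \emph{modular} inclusion, i.e.\ a condition on the modular group of $(M,\Omega)$, and the translation group it outputs is constructed from the modular operators --- identifying it with your given $T(t)$, or identifying the dilations with the modular group of $(M,\Omega)$, is precisely a Bisognano--Wichmann-type property for solitons that you would still have to prove (your ``a posteriori'' identification is circular). Semigroup invariance alone does not suffice: the modular group $\Delta^{it}$ of $(M,\Omega)$ fixes $\Omega$ and maps $M$ (and the standard subspace $K$) into itself for all $t$, yet its generator is not semibounded, so ``$T(t)K\subset K$ for $t\ge0$ plus $T(t)\Omega=\Omega$'' cannot by itself yield $P\ge0$. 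Finally, the fallback routes you mention are not proofs: a classification of irreducible solitons of $\cA_{G,k}$ is not available and is at least as hard as the statement, while ``embedding the soliton into something built from the vacuum sector using the bounds of \cite{arXiv:1811.04501}'' is essentially a gesture toward the actual argument of \cite[Thm.~3.4]{arXiv:1811.04501}, which proceeds by a different mechanism --- exploiting the covariance under the piecewise smooth, non-smooth diffeomorphism groups $\Diff_{1,\mathrm{ps}}^\R$ recalled in this section --- rather than by modular theory applied directly on the soliton Hilbert space.
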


The above conjecture has been recently proven by Del Vecchio, Iovieno, and Tanimoto \cite[Thm\,3.4]{arXiv:1811.04501} (and is thus no longer a conjecture).

We define a \emph{positive energy representation} of the based loop group to be a representation whose energy-momentum operator has positive spectrum:

\begin{definition} 
A level $k$ positive energy representation of the based loop group is
a continuous representation $\rho:\widetilde{\Omega G}\to U(H)$ satisfying:
\begin{enumerate}
\item $\rho$ is the restriction of a representation $\R\ltimes \widetilde{\Omega G}\to U(H)$ such that the infinitesimal generator $P$ of the group $\R$ has positive spectrum.
\item $\lambda\in U(1)\subset \widetilde{\Omega G}$ acts by scalar multiplication by $\lambda^k$.
\end{enumerate}
We note that the action of $\R$ is not part of the data of a positive energy representation.
\end{definition}

The following is a strengthening of Conjecture~\ref{conj: weaker conjecture}:

\begin{conjecture}\label{conj: stronger conjecture}
A representation of the centrally extended based loop group is locally normal 
if and only if it has positive energy.
\end{conjecture}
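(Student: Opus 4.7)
The biconditional splits into two implications. The easier direction, locally normal $\Rightarrow$ positive energy, is essentially in hand from the paragraphs preceding the statement: given $H\in \Rep^k_{\scriptscriptstyle \mathrm{l.n.\!\!}}(\Omega G)$, the construction \eqref{eq: assembled action + Diff}, together with its piecewise smooth extension from \cite[App.\,A]{arXiv:1811.04501}, yields a strongly continuous action of $\Diff_{1,\mathrm{ps}}^\R(S^1)\ltimes \widetilde{\Omega G}$ on $H$. Restricting to the canonical one-parameter subgroup $\R\subset \Diff_{1,\mathrm{ps}}^\R(S^1)$ of M\"obius translations fixing $p$ furnishes the required $\R$-action, and positivity of its generator $P$ is exactly \cite[Thm.\,3.4]{arXiv:1811.04501}.

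For the converse, suppose $\rho\colon \widetilde{\Omega G}\to U(H)$ is a positive energy representation, and fix an extension to $\R\ltimes \widetilde{\Omega G}$ with generator $P\ge 0$. The task is to show, for every interval $I\subset S^1$ with $p\notin \mathring I$, that the action of $\widetilde{L_IG}$ extends to a normal action of $\cA_{G,k}(I)$. My plan is: first, combine the smoothing semigroup $e^{-tP}$ with convolution against test loops (factored along the Hausdorff colimit $\widetilde{\Omega G}={\mathrm{colim^{\scriptscriptstyle H}}}\, \widetilde{L_IG}$ of Proposition~\ref{prop: colim^H for LG}) to produce a dense subspace $H^\infty\subset H$ of jointly smooth vectors; second, differentiate on $H^\infty$ to obtain an action of the Lie subalgebra of $\hat\g$ generated by elements supported away from $p$; third, establish linear energy bounds of the form $\|Xv\|\le C\|(1+P)v\|$ that control the differentiated generators by $P$; fourth, invoke a Weiner-type argument in the spirit of \cite{MR2231680,MR2166846} to upgrade this Lie algebra action to a normal action of $\cA_{G,k}(I)$, thereby identifying $\rho(\widetilde{L_IG})''$ with $\cA_{G,k}(I)$.

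The principal obstacle is the absence of a rotation action on $H$. In the known proofs of analogous statements for the free loop group and for the Virasoro net, one exploits a strongly continuous $S^1$-action with positive generator $L_0$ to obtain Fourier-type decompositions of Kac-Moody generators $Xz^n$ and to bound them by powers of $L_0$. Here only the translation semigroup at $p$ is available, its generator $P$ has continuous rather than discrete spectrum, and one expects the energy bounds to degenerate as the support of $X$ approaches $p$, so the Fourier decomposition must be replaced by a delicate spectral-theoretic one. A potentially cleaner route that would bypass these analytic difficulties is to first show that every positive energy $\widetilde{\Omega G}$-representation embeds into the restriction of a (not necessarily irreducible) locally normal $\widetilde{LG}$-representation; since restrictions of locally normal $\widetilde{LG}$-representations to $\widetilde{\Omega G}$ are manifestly locally normal, any such embedding would immediately imply the conjecture.
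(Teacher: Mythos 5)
This statement is stated in the paper as a \emph{conjecture}; the paper offers no proof of it, so there is nothing to compare your argument with on the paper's side. What the paper does record is that one implication --- locally normal $\Rightarrow$ positive energy, i.e.\ Conjecture~\ref{conj: weaker conjecture} --- has been proven by Del Vecchio, Iovieno and Tanimoto \cite[Thm\,3.4]{arXiv:1811.04501}, using exactly the mechanism you describe: the assembled action \eqref{eq: assembled action + Diff}, its extension to $\Diff_{1,\mathrm{ps}}^\R(S^1)\ltimes\widetilde{\Omega G}$ via \cite[App.\,A]{arXiv:1811.04501}, and restriction to the M\"obius translations fixing $p$. So that half of your proposal is correct, but it is an appeal to a known external result rather than a new argument.

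The genuine gap is the converse, positive energy $\Rightarrow$ locally normal, which is the actual content of the conjecture and which your proposal does not establish. What you offer is a programme --- smooth vectors from $e^{-tP}$, differentiation to generators of $\hat\g$ supported away from $p$, linear energy bounds $\|Xv\|\le C\|(1+P)v\|$, and a Weiner-type upgrade to normality of $\cA_{G,k}(I)$ --- but none of these steps is carried out, and as you yourself note, the key analytic input is missing: without a rotation action one has no discrete $L_0$-grading, the translation generator $P$ has continuous spectrum, and the expected degeneration of any energy bound as the support approaches the base point $p$ is precisely the obstruction that makes this a conjecture rather than a theorem. The ``cleaner route'' you suggest does not bypass this: while it is true that the restriction to $\widetilde{\Omega G}$ of a locally normal $\widetilde{LG}$-representation is locally normal, producing an embedding of an arbitrary positive energy $\widetilde{\Omega G}$-representation into such a restriction is at least as hard as the conjecture itself --- it amounts to manufacturing normality of the local algebras (equivalently, a local-equivalence-type statement) from positivity of $P$ alone, which is exactly what needs to be proven. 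In short, your write-up correctly reduces one direction to \cite{arXiv:1811.04501} but leaves the other direction as an unproven plan, so it does not constitute a proof of the biconditional.
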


\bibliographystyle{amsalpha}

\def\cprime{$'$} \def\cprime{$'$} \def\cprime{$'$} \def\cprime{$'$}
\providecommand{\bysame}{\leavevmode\hbox to3em{\hrulefill}\thinspace}
\providecommand{\MR}{\relax\ifhmode\unskip\space\fi MR }
\providecommand{\MRhref}[2]{%
  \href{http://www.ams.org/mathscinet-getitem?mr=#1}{#2}
}
\providecommand{\href}[2]{#2}

\end{document}